\newtheorem{theorem}{Theorem}
\newtheorem{definition}[theorem]{Definition}
\newtheorem{proposition}[theorem]{Proposition}
\newenvironment{proof}[1][Proof]{\noindent\textbf{#1.} }{\ \rule{0.5em}{0.5em}}
\begin{document}

\title{Topological, Differential Geometry Methods and Modified Variational
Approach for Calculation of the Propagation Time of a Signal, Emitted by a
GPS-Satellite and Depending on the Full Set of 6 Kepler Parameters}

\author{Bogdan G. Dimitrov$^{1,2}$}

\affil{$^1$Institute of Nuclear Research and Nuclear Energetics, Bulgarian
Academy of Sciences, 72 Tzarigradsko shaussee, 1784 Sofia, Bulgaria} 
\affil{$^2$Institute for Advanced Physical Studies , Sofia Tech Park, 111
Tzarigradsko shaussee, 1784 Sofia, Bulgaria}

\email{$^1$dimitrov.bogdan.bogdan@gmail.com}
\email{$^2$bogdan.dimitrov@iaps.institute}

\begin{abstract}
In preceding publications a mathematical approach has been developed for
calculation of the propagation time of a signal, emitted by a moving along
an elliptical orbit satellite and accounting also for the General Relativity
Theory (GRT) Effects. So far, the formalism has been restricted to one
dynamical parameter (the true anomaly or the eccentric anomaly angle). In
this paper the aim is to extend the formalism to the case, when also the
other five Kepler parameters will be changing and thus, the following
important problem can be stated: if two satellites move on two
space-distributed orbits and they exchange signals, how can the propagation
time be calculated? This paper requires the implementation of differential
geometry and topological methods. In this approach, the action functional
for the propagation time is represented in the form of a quadratic
functional in the differentials of the Kepler elements, consequently the
problem is related to the first and second quadratic forms from differential
geometry. Such an approach has a clear advantage, because if the functional
is written in terms of Cartesian coordinates X, Y, Z, the extremum value
after the application of the variational principle is shown to be the
straight line - a result, known from differential geometry, but not
applicable to the current problem of signal exchange between satellites on
different orbits. So the known mapping from celestial mechanics is used,
when by means of a transformation the 6 Kepler parameters are mapped into
the cartesian coordinates X, Y, Z. This is in fact a submersion of a
manifold of 6 parameters into a manifold of 3 parameters. If a variational
approach is applied with respect to a differential form in terms of the
differentials of the Kepler parameters, the second variation will be
different from zero and the Stokes theorem can be applied, provided that the
second partial derivatives of the Cartesian coordinates with respect to the
Kepler parameters are assumed to be different from zero. ` From topology
this requirement is equivalent to the existence of the s.c. Morse functions
(non-degenerate at the critical points). In the given case it has been shown
that Morse function cannot exist with respect to each one of the Kepler
parameters- Morse function cannot be defined with respect to the omega angle.
\end{abstract}

\section{Introduction}

\subsection{Basic facts about the standard Shapiro delay formulae and its
implementation in contemporary space experiments}

Propagation of signals (radio, light or laser) in the gravitational field of
the near-Earth space is an important theory in applied gravitational theory,
since it is related to many astrophysical experiments in the near-Earth
space and also in the deep space (such as LISA, for example). \bigskip \cite%
{Shapiro1964}. The main ingredient of this theory is the Shapiro delay
formulae 
\begin{equation}
T_{AB}=\frac{R_{AB}}{c}+\frac{2GM_{E}}{c^{3}}\ln \left( \frac{r_{A}\newline
+r_{B}+R_{AB}}{r_{A}+r_{B}-R_{AB}}\right) \ \ ,  \label{AA19}
\end{equation}%
which expresses the signal propagation time $T_{AB}=T_{B}-T_{A}$ between two
space-time points and is calculated from the null cone equation%
\begin{equation}
ds^{2}=-c^{2}\left( 1+\frac{2V}{c^{2}}\right) (dT)^{2}+\left( 1-\frac{2V_{.}%
}{c^{2}}\right) \left( (dx)^{2}+(dy)^{2}+(dz)^{2}\right) =0\text{ \ \ ,}
\label{DOP25}
\end{equation}%
obtained after setting up the infinitesimal metric element equal to zero. In
formulae (\ref{AA19}) $G_{\oplus }M_{\oplus }$ is the geocentric
gravitational constant, $M_{\oplus }$ is the Earth mass, $G_{\oplus }$ is
the gravitational constant, the standard potential of the Earths
gravitational field $V=\frac{G_{\oplus }M_{\oplus }}{r}$ is considered
without taking into account the harmonics. The coordinates of the emitting
and of the receiving satellite are correspondingly $\mid x_{A}(t_{A})\mid
=r_{A}$ and $\mid x_{B}(t_{B})\mid =r_{B}$, $\ R_{AB}=$ $\mid
x_{A}(t_{A})-x_{B}(t_{B})\mid $ is the Euclidean distance between the signal
- emitting satellite and the signal - receiving satellite. We shall denote
also by $t=TCG$ the Geocentric Coordinate Time (TCG). The second term in
formulae Eq. (\ref{AA19}) is the Shapiro time delay term, \ accounting for
the signal delay due to the curved space-time.The Shapiro delay term has an
important physical meaning - it means that in a curved space-time due to the
General Relativity effects, a signal travels a longer time because of the
curved signal trajectory.

Since the discovery of the Shapiro delay formulae it has found numerous
applications in VLBI interferometry (see also the review article \cite%
{Interf:CA7A2B1} by Sovers, Fanselow and Jacobs) and also in creation of
models for relativistic reference frames for the GRAIL mission (Gravity
Recovery and Interior Laboratory) \cite{Turysh1:AAB62} accurate to $1$ $\mu
m/\sec $ ($1$ $micrometer$ per $\sec ond$; $1$ micrometer$=10^{-6}$ $m$ )
and also for the GRACE-Follow-on mission (Gravity Recovery and Climate
Experiment) \cite{Turysh2:AAB63}. For the second model, the relativistic
change of the phase of the signal has been computed due to the changing
(geodesic!) distance between the emitting satellite and the receiving
satellite. In fact, in these papers \cite{Turysh1:AAB62}, \cite%
{Turysh2:AAB63} and \cite{Turysh3:AAB64} the difference between the emitting
time and the receiving time turns out to be a relativistic observable,
dependent on the geodesic distance, travelled by light - in such a case it
contains also the relativistic logarithmic correction, typical for the
Shapiro delay formulae. In the third paper \cite{Turysh3:AAB64} (for the
case of the ACES experiment - Atomic Clock Ensemble in Space) of the series
of these papers the difference between the two times is shown to be 
\begin{equation}
T_{2}-T_{1}=\frac{\mid r_{2}-r_{1}\mid }{c}+(1+\gamma )\sum\limits_{b}\frac{%
GM_{b}}{c^{3}}\ln \frac{r_{1}^{b}+r_{2}^{b}+r_{12}^{b}}{%
r_{1}^{b}+r_{2}^{b}-r_{12}^{b}}+O(c^{-5})\text{ \ ,}  \label{AA20}
\end{equation}%
where the summation is over the "deflecting" bodies (in the case-the Moon,
the Earth and the Sun), causing the Shapiro time delay.

\subsection{Uncertainties in the Shapiro formulae due to some approximations
in the integration}

Now it is important to stress that eq. (\ref{AA20}), as well as (\ref{AA19})
are obtained from the null cone equation after performing an integration
over a variable $r$, related to the distance between the transmitter and the
receiver. However, in the known review article by Ashby from 2003 \cite%
{Ashby2003LivRev} it was proved that at the level of several millimeters,
spatial curvature effects have to be considered, which is evident from the
simple equality 
\begin{equation}
\int dr\left[ 1+\frac{GM_{E}}{c^{2}r}\right] \approx r_{2}-r_{1}+\frac{GM_{E}%
}{c^{2}}\ln \frac{r_{2}}{r_{1}}\text{ \ \ .}  \label{AA21}
\end{equation}%
In \cite{Ashby2003LivRev} the second term was estimated to be $4.43\times
\ln (4.2)\approx 6.3$ $mm$. Also in the thesis \cite{Duchayne2008PhD} the
approximation $cdt=\left( 1+\frac{GM_{E}}{c^{2}r}\right) \parallel
dr\parallel $ has been used, where the infinitesimal vector $dr$ is not
related to the signal trajectory, represented by the vector $\overrightarrow{%
R}=\overrightarrow{r}-\overrightarrow{r}_{A}$, where $\overrightarrow{r}_{A}$
is the vector, associated with the position of emission and $\overrightarrow{%
r}$ - with a position of a point in a Geocentric Reference Frame. In other
words, the integration is not along \ $\parallel dR\parallel $, because the
approximation $\parallel dR\parallel \approx \parallel dr\parallel $ has
been assumed and thus the integration is along $dr$. A similar approximation
has been used in the PhD thesis \cite{Gulklett2003}.

All these approximations are incompatible with the uncertainty $1$ $\mu
m/\sec $ \ of determination of the Relativistic Reference Frame in \cite%
{Turysh1:AAB62}, but also in contemporary investigations of the
gravitational field by means of resolving the gravitational redshift \cite%
{Bothwell2022} across a millimeter scale atomic sample. This means that a
clock will tick differently at the top and bottom of a sample of $100$ $000$
ultra-cold strontium atoms \cite{Aeppli2024}, i.e. the clock ticks slowly at
lower elevations.

\subsection{Brief review of the approach for calculation of the propagation
time by using the more general transformation in celestial mechanics and the
true anomaly angle $f$ as the only dynamical parameter}

The accuracy of determination of the propagation time evidently will depend
also on the choice of the integration variable. For example, in the
monograph \cite{KopeikinBook} and in many other books on celestial mechanics
(see, for example \cite{Gurfil}) a transformation from the six Kepler
parameters $(f,\Omega ,\omega ,i,a,e)$ to the cartesian coordinates $(x,y,z)$
has been used 
\begin{equation}
x=\frac{a(1-e^{2})}{1+e\cos f}\left[ \cos \Omega \cos (\omega +f)-\sin
\Omega \sin (\omega +f)\cos i\right] \text{ \ \ ,}  \label{K1}
\end{equation}%
\begin{equation}
y=\frac{a(1-e^{2})}{1+e\cos f}\left[ \sin \Omega \cos (\omega +f)+\cos
\Omega \sin (\omega +f)\cos i\right] \text{ \ \ ,}  \label{K2}
\end{equation}%
\begin{equation}
z=\frac{a(1-e^{2})}{1+e\cos f}\sin (\omega +f)\sin i\text{ \ \ \ ,}
\label{K3}
\end{equation}

where $(a,e)$ are the large semi-major axis of the satellites elliptical
orbit and eccentricity respectively, $r=\frac{a\left( 1-e^{2}\right) }{%
1+e\cos f}$ is the radius-vector in the orbital plane, the angle $\Omega $
of the longitude of the right ascension of the ascending node is the angle
between the line of nodes and the direction to the vernal equinox, the
argument of perigee (periapsis) $\omega $ is the angle within the orbital
plane from the ascending node to perigee in the direction of the satellite
motion $\left( 0\leq \omega \leq 360^{0}\right) $. The angle $i$ is the
inclination of the orbit with respect to the equatorial plane and the true
anomaly angle $f$ geometrically represents the angle between the line of
nodes and the position vector $\vec{r}$ on the orbital plane. The angle $f$ $%
\ $is related to the motion of the satellite and is assumed to be the only
changing dynamical parameter meaning that the space-distributed orbit does
not change during the motion of the satellite. Under such assumptions and
taking into account that $dr=\alpha (f)df$ (where $\alpha (f)\equiv -\frac{%
a(1-e^{2})e\sin f\text{ }df}{(1+e\cos f)^{2}}$), in the paper \cite{Bog1}
the propagation time $\widetilde{T}$ in the null cone equation (\ref{DOP25}%
)\ has been expressed by a complicated expression \ 
\begin{equation}
\widetilde{T}=G(f)=-\frac{2G_{\oplus }M_{\oplus }}{c^{3}}.\frac{n}{%
(1-e^{2})^{\frac{3}{2}}}\widetilde{T}_{1}+T_{2}^{(2)}\text{ \ , }  \label{K4}
\end{equation}%
where $\widetilde{T}_{1}$ and $T_{2}^{(2)}$ is a combination of elliptic
integrals of the second and of the fourth order. The exact expressions can
be found in the review article \cite{Bog2}, but also will be given in
Appendix B of this paper. Further in this paper they will participate in the
expressions in the proposed variational formalism, based on the
Gauss-Ostrogradsky and the Stokes theorems, but their concrete analytical
forms will not be taken into account. Formulae (\ref{K4}) is the propagation
time for the signal, while the true anomaly angle $f$ changes from some
value $f_{0}$ to some final value $f_{1}$. In other words, this formulae
does not presume that there is any transmission of signals between a
signal-emitting and a signal-receiving satellite because the second point of
signal perception is not related to any satellite, it is determined only by
the distance, travelled by the signal for the calculated propagation time $%
\widetilde{T}$. Formulae (\ref{K4}) expresses the propagation time $%
\widetilde{T}$ as a function only of the dynamical parameter, which is the
true anomaly angle $f$ $\ $\cite{Bog1}.

This means also that all the other (five) remaining parameters of the orbit $%
(\Omega ,\omega ,i,a,e)$ do not change during the motion of the satellite.
For such a case, it can easily be found that

\begin{equation}
\sqrt{(dx)^{2}+(dy)^{2}+(dz)^{2}}=\sqrt{\left( v_{f}^{x}\right) ^{2}+\left(
v_{f}^{y}\right) ^{2}+\left( v_{f}^{z}\right) ^{2}}df=v_{f}df\text{ \ \ ,}
\label{K5}
\end{equation}

where the velocity $v_{f}$, associated to the true anomaly angle $f$ $\ $is
given by

\begin{equation}
v_{f}=v=\frac{na}{\sqrt{1-e^{2}}}\sqrt{1+e^{2}+2e\cos f}\text{ \ \ \ \ .}
\label{K6}
\end{equation}

Making use of the null cone equation (\ref{DOP25}) and also of the
approximation $\beta =\frac{2V}{c^{2}}=\frac{2G_{\oplus }M_{\oplus }}{c^{2}a}%
\ll 1$, one can obtain the general formulae for the propagation time in the
form (\ref{K4}).

\subsection{The other not yet investigated case of transmission of signals
(with account of the General Relativity effects) between satellites on
different space-distributed orbits}

The more interesting case will be when a signal is emitted by a satellite on
a space-distributed orbit and is received (percepted) by a satellite on
another space-distributed orbit. Then there will be two dynamical parameters 
$f_{1}$ and $f_{2}$, related to the motion of both satellites. The
integration in the expression for the propagation time will be over the
infinitesimal distance element $dR_{AB}$ between the two satellites, since
it involves the differentials of both sets of cartesian coordinates $%
(x_{1},y_{1},z_{1})$ and $(x_{2},y_{2},z_{2})$ or the two sets of Kepler
parameters $(f_{1},\Omega _{1},\omega _{1},i_{1},a_{1},e_{1})$ and $%
(f_{2},\Omega _{2},\omega _{2},i_{2},a_{2},e_{2})$ for the transformations (%
\ref{K1}) - (\ref{K3}), written respectively for the indices "$1$" and "$2$".

\subsection{Autonomous navigation as an impetus for developing
intersatellite communications}

\ In the book \cite{Shuai} autonomous navigation is defined as referring to
\textquotedblleft processes in which the spacecraft without the support from
the ground-based TT\&C (Telemetry \& Tracking Control) system for a long
time, relying on its onboard devices, obtains all kinds of measurement data;
determines the navigation parameters like orbit, time and
attitude\textquotedblright . In this monograph DORIS and PRARE navigation
systems are determined as non-autonomous, because \textquotedblleft DORIS
system can determine the spacecraft's orbit with high accuracy, but both
need to exchange information with the ground stations\textquotedblright .
Thus, the main aim of autonomous navigation is to reduce the dependence of
spacecrafts on the ground TT\&C network and in this way to enhance the
capability of the systems anti-jam and autonomous survivability. It is
evident also that the determination of the satellite's exact location
through the reception, processing, and transmitting of ranging signals
between the different satellites.

Autonomous GPS navigation system is necessary in view of the proposal of
researchers from the University of Texas, Aerospace Corporation, National Bureau 
of Standards, International Business Machines Corporation (IBM) and 
Rockwell Automation Inc. to monitor nuclear explosions, based on 
GPS inter-satellite communication link.

The development of inter-satellite laser communication systems in space -
orbit technology that enable super-high-speed data transfers at rates
greater than $1$ $Gbps$ is widely applied also in.cube-/nanosatellite
platforms such as CubeLCT, AeroCube-7B/C, CLICK, LINCS-A/B, SOCRATES and
LaserCube \cite{VISION2023}. However, in order to establish laser
communications, high performance of arc-second level pointing system is
required, and this is a difficulty for the nanosatellite platforms. It can
be supposed why there is such a difficulty - since the data-transmission
rate of $1Gbps$ between the two nanosatellites is at an intersatellite range
of $1000$ $km$, the effects of curving the trajectory of the laser signal
may be considerable. In publications, related to small satellite optical
links \cite{Zaman2020} it has been admitted that the s.c. "pointing error"
arizes not only due to tracking sensors and mechanical vibrations, but also
due to the base motion of the satellite. This fact, together with the
ranging of signals and the pointing error illustrate the assumption that not
only the motion of two satellites is important, but the effect of curving of
the signal trajectory. The paper \cite{Zaman2020} also asserts that for a
GPS constellation with 24 satellites, there are a total of $8-16$ links - 
they can be forward and backward links in the same orbit, but as
well as lateral links between adjacent orbits. Moreover, the distance 
of GPS inter-satellite crosslink can reach $49$ $465km$.

\section{Overview of the aims and approaches of this paper}

The main goal of this paper is to present a theoretical model for
inter-satellite communications between two satellites on different
space-distributed orbits, characterized by two different sets and Kepler
parameters, accounting for the General Relativity effect of curving the
signal trajectory. In future, the aim will be also to find the optimal
position of the satellites for transmission of signals with a minimal
propagation time. From an experimental point of view the importance of
developing such approaches is described in the introduction in view of the
rapidly developing laser broadband and large capacity communication links in
free space, especially image transmissions.

The peculiarity of the proposed approach is that while the first
publications on the implementation of General Relativity in GPS
communications focused on space-ground links for the most simple case of
plane Kepler orbits (characterized only by the eccentric anomaly angle $E$
and constant ellipticity $e$ , here the problem is more complicated and is
related to the communication between satellites on two different orbits.
Each orbit is characterized by six dynamically changing Kepler parameters.

The main proposal in this paper is to implement ideas from differential
geometry and topology for the solution of this problem. Thus it may be
asserted that the problem is also non-trivial from a mathematical point of
view. Namely, if the motion of the first satellite and the emission of the
signal is characterized by the parameters $(f_{1},\Omega _{1},\omega
_{1},i_{1},a_{1},e_{1})$, defining a manifold $M_{1}$, and the motion of the
second satellite and the processes of perception of the signal - by the
parameters $(f_{2},\Omega _{2},\omega _{2},i_{2},a_{2},e_{2})$, defining a
second manifold $M_{2}$, then can the combined motion of the two satellites
and the processes of signal-exchanges be characterized by one manifold? It
will be shown that it is possible to introduce such a manifold,
characterized by the two sets of (total) $12$ Kepler parameters, but an
additional parameter $s$ should be introduced. This turned out to be
possible due to the fulfillment of the Whitney theorem in differential
geometry. This parameter parametrizes the two sets of $12$ Kepler
parameters, which enter the expressions for the two sets of cartesian
coordinates $(x_{1},y_{1},z_{1})$ and $(x_{2},y_{2},z_{2})$.

This theoretical setting clearly demonstrates the necessity to implement
differential geometry and topology approaches in this problem about
intersatellite communications. That is why, particular attention in the
paper is paid to notions such as topological and metric spaces and
manifolds, homeomorphisms (topological equivalence), manifolds with
boundaries. It has been shown also that the standard Shapiro delay formulae
can be modified, and thus the modified formulaes (\ref{Obz10}) and (\ref%
{Obz11}) have been obtained. The boundaries of the manifold $M_{1}$ were
found by applying the concrete definitions. Since the manifold is a
three-dimensional one, the boundaries were found to be three two-dimensional
planes of the variables $(x,y)$, $(x,z)$ and $(y,z).$

Further the aim was to develop a variational formalism, based on the Stokes
and the Gauss - Ostrogradsky theorem and accounting for higher-order
variations. The corresponding variational action was found with derivatives
of the first, second and third order. The idea about higher-order
variational calculus is not new and it comes from older books such as \cite%
{GelfandFomin} and many others. In particular, second-order variations allow
to take into account non-fixed, "moving boundaries" in the corresponding
integrals. This is convenient in view of the fact that these boundaries will
be related to the motion of the two satellites and further to the necessity
of "optimizing" the process of transmission of signals. The variational
model is constructed by means of variation of the parameters $u_{i}$ and $v_{j}$%
, denoting two different subsets of Kepler parameters of one satellite 
\begin{equation}
u_{i}=(f_{.},a_{.},e_{.})\text{ \ \ }i=1,2,3\text{ and }v_{j}=(\Omega
_{.},i_{.},\omega _{.})\text{ \ }j=1,2,3)  \label{K7}
\end{equation}

or the two different sets of Kepler parameters, corresponding to the first
and to the second satellite respectively 
\begin{equation}
u_{i}=(f_{.1},a_{1.},e_{1},_{.}\Omega _{1.},i_{1.},\omega _{1})\text{ \ \ }%
i=1,2,....4,5,6\text{ }  \label{K8}
\end{equation}%
and 
\begin{equation}
v_{j}=(f_{2.},a_{2.},e_{2},\Omega _{2.},i_{2.},\omega _{2.})\text{ \ }%
j=1,2,...,4,5,6\text{ \ \ .}  \label{K9}
\end{equation}

\section{Some basic notions from topology, needed for the development of the
theory in this paper}

This study makes use of some standard notions from differential geometry,
but also from topology \ These two branches from geometry are of course
related, but it is important to keep in mind that there is an important
difference. For example, a topological manifold is a set of points of the
Euclidean space, for which the notion of proximity is defined \cite%
{ShapOlshan}. It should be noted, however, that proximity is not related to
properties of metric spaces, which shall also be defined further. To be more
precise, a topological manifold is the set of points of the Euclidean space,
defined by the equations 
\begin{subequations}
\begin{equation}
F_{i}(x_{1},x_{2},......x_{\overline{n}})=0\text{ \ \ \ , }i=1,2,....m\text{
\ \ \ ,}  \label{Obz1}
\end{equation}%
\end{subequations}
\begin{subequations}
\begin{equation}
f_{j}(x_{1},x_{2},......x\overline{_{n}})\geq 0\text{ \ \ \ , }j=1,2,....l%
\text{ \ \ \ .}  \label{Obz2}
\end{equation}%
This (topological) manifold, defined by the equations (\ref{Obz1}) is a
sub-manifold of a dimension $n=\overline{n}-m$. The notion of a sub-manifold
is also defined as a topological property, and not related to any metric
properties. Another typical notions for topological manifolds are
boundedness, connectedness,compactness, continuity. The notions of a
sub-manifold and a boundary,as further will be explained, are also not
related to any metric properties.

It is evident that topological spaces are generalizations of metric spaces
and as such they are defined as a class of open sets of $E$ (the Euclidean
space), satisfying the following conditions \cite{SchwarzPhys}:

1. The intersection of a finite number of open sets is an open set $X$.

2. The union of any number of open sets is an open set. The class of open
sets is called a topology.

These two conditions are supplemented by a third one in \cite{DubrFomVol2}:
The empty set and the whole set $X$ must be open.

The definition about topological space by itself is sufficient to define the
important notions of continuity and homeomorphicity.

A map $f:X\longrightarrow Y$ of one topological space to another is
continuous if the complete inverse image $f^{-1}(U)\subset X$ of every open
set $U\subset Y$ is open also in $X$ \cite{DubrFomVol2}. Two topological
spaces $E$ and $F$ are homeomorphic (or topologically equivalent) if there
is a one-to-one mapping $\varphi :E\longrightarrow F$ of $E$ onto $F$ such
that both $\varphi $ and the inverse one $\varphi ^{-1}$ are continuous (see
also \cite{SchwarzPhys}). A simple application of homeomorphicity, given in 
\cite{SchwarzPhys} is the following one: Since the hemisphere is
homeomorphic to a disc, two pairs of opposite points on the boundary can be
glued together and thus the $n-$dimensional projective plane $RP^{n}$ can be
obtained. This can be illustrated by the following simple formulae 
\end{subequations}
\begin{equation}
R^{n+1}\backslash \{0\}\rightarrow S^{n}\rightarrow RP^{n}\text{ \ \ .}
\label{Obz2A}
\end{equation}%
This definition makes use of the topological notion about "manifolds with
boundary", which shall be clarified further. Instead of the term "gluing",
often the term "identifying" is used. For example, in order to visualize $%
RP^{2}$, one may take the two-dimensional sphere $S^{2}$ and identify pairs
of opposite points below the equator and above the equator.

Now an interesting property of a topological space $M$ can be defined,which
enables to define a differentiable manifold \cite{NashSen}. If

A. The topological space $M$ is provided with a family of pairs $\left\{
(M_{\alpha },\Phi _{\alpha })\right\} $, where $\Phi _{\alpha }$ are
homeomorphisms from $M_{\alpha }$ to an open subset of $O_{\alpha }$ of $%
R^{n}$ i.e. $\Phi _{\alpha }:M_{\alpha }\mapsto O_{\alpha \text{, }}$and $%
M_{\alpha }$ are a family of open sets, which cover $M:\cup M_{\alpha }=M$.

B. Given $M_{\alpha },M_{\beta }$ such that $M_{\alpha }\cap M_{\beta }\neq
\emptyset $, the map $\Phi _{\beta }\circ \Phi _{\alpha }^{-1}$ from the
subset $\Phi _{\alpha }(M_{\alpha }\cap M_{\beta })$ of $R^{n}$ to the
subset $\Phi _{\beta }(M_{\alpha }\cap M_{\beta })$ of $R^{n}$ is infinitely
differentiable (written as $C^{\infty }$),

then the family \ $\left\{ (M_{\alpha },\Phi _{\alpha })\right\} $
satisfying A. and B. is called an atlas.

An important feature of the topological space $M$ is the Hausdorff property,
which however should be additionally imposed and is independent from the
metric space. The Hausdorff property turns out to be important, if one would
like to determine the dimension of $M$, in view of the following

\begin{definition}
\cite{Tu}: A topological manifold is a Hausdorff, second countable (meaning
that it has a countable basis) and locally Euclidean space.
\end{definition}

Here the two important notions of Hausdorff space and locally Euclidean
space have to be clarified. The Hausdorff property \cite{NashSen} means that
if there are two distinct points $x,y$ belonging to the topological space $M$%
, then there exists a pair of open sets $O_{x}$ and $O_{y}$, such that $%
O_{x} $ $\cap $ $O_{y}=\emptyset $. This means also that there exist small
enough disjoint open sets $O_{x}$ and $O_{y}$, which contain $x$ and $y$
respectively. The respective definition, given in \cite{Tu} is analogous.

Further, a topological space $M$ is locally Euclidean of dimension $n$ if
every point $p$ in $M$ has a neighborhood $U$ such that there is a
homeomorphism $\Phi $ from $U$ onto an open subset of $R^{n}$ \cite{Tu}. \
The pair $(U,\Phi :U\rightarrow R^{n})$ is called a chart, $\Phi $-a
coordinate map or a coordinate system on $U$ \ and $U$- a coordinate
neighborhood or a coordinate open set. \ This definition is the essence of
the s.c. invariance of dimension, according to which for $n\neq m$ an open
subset of $R^{n}$ is not homeomorphic to an open set of $R^{m}$. This topic
will be essential for the topological notions of embedding, submersion and
immersion, which shall further be explained with reference to the different
mappings $f:(f,\Omega ,\omega ,i,a,e)\rightarrow (x,y,z)$ which can be
defined, concerning the transformations (\ref{K1}) - (\ref{K3}). In
particular, these mappings will depend on the number of the Kepler elements,
which will change during the process of transmission of a signal between two
satellites on two space-dependent orbits.

Unlike the topological \ notion of \ a homeomorphism, which generates
equivalence classes between topological spaces, the notion of a metric space
is a particular kind of a topological space, where the open sets are
provided by a distance function $d(x,y)$, which corresponds to our intuitive
notion for the distance between points. Of course, $d(x,y)$ can be
generalized also to the $3D-$case. The distance function $d(x,y)$ has the
following properties \cite{NashSen}, \cite{Tu} and is expressed by the
simple formulae \ $d(x,y)=\sqrt{(x_{1}-x_{2})^{2}+(y_{1}-y_{2})^{2}}$, for
the $n-$dimensional space $d(p,q)=\sqrt{\sum%
\limits_{i=1}^{n}(p^{i}-q^{i})^{2}}$:

1A. $d(x,y)\geq 0$ .

2A. $d(x,y)=0\leftrightarrow x=y$.

3A.If the point $z\in M$, then the triangle inequality $d(x,z)\leq
d(x,y)+d(y,z)$ is fulfilled. A topological space, satisfying these three
conditions is a metric space. In addition, if $M$ is also\ a Riemannian
manifold, then the distance is $ds^{2}=g_{ij}dx^{i}dx^{j}$ ($i,j=1,2....4)$
and the points $a,b$ are connected by a piecewise differentiable curve $C$,
parametrized by the functions $x^{i}(s)$, then the distance $L_{ab}$ between
the points will be defined as 
\begin{equation}
L_{ab}=\int\limits_{a}^{b}g_{ij}\frac{dx^{i}}{ds}\frac{dx^{j}}{ds}ds\text{ \
.}  \label{Obz3}
\end{equation}%
Then the distance function in the above definition $d(x,y)$\ will be defined
as the infimum or least upper bound of $L_{ab}$ as it varies over all such
curves $x^{i}(s)$ i.e. $d(a,b)=\inf L_{ab}(C)$.

An important clarification will be given below, which is related to the
present research.

\subsection{Application of metric and topological spaces in the present
formalism}

The analogy of the distance function $d(x,y)$ is the introduced in Appendix
A formulae (\ref{A00}) and formulae (\ref{A1}) for the differential $%
dR_{AB}^{2}$ for the case, when $f_{1}$ and $_{f_{2}}$ are the only
dynamical parameters. Since formulae (\ref{A00}) can be written also as 
\begin{equation}
R_{AB}^{2}=r_{1}^{2}+r_{2}^{2}-2(\overrightarrow{\mathbf{r}}_{1}.%
\overrightarrow{\mathbf{r}}_{2})\text{ \ \ \ ,}  \label{Obz4}
\end{equation}%
where the vectors $\overrightarrow{\mathbf{r}}_{1}$ and $\overrightarrow{%
\mathbf{r}}_{2}$ are defined in the natural way , $\overrightarrow{\mathbf{r}%
}_{1}=(x_{1},y_{1},z_{1})$ and $\overrightarrow{\mathbf{r}}%
_{2}=(x_{2},y_{2},z_{2})$, the following problem arizes: suppose that for
the two topological manifolds one can define the two open balls $%
B_{1}(p_{1},r_{1})$ and $B_{2}(p_{2},r_{2})$ with centers at the points $%
p_{1}\in R_{1}^{3}$ and $p_{2}\in R_{2}^{3}$ (the lower indices "$1$" and "$%
2 $" mean that these are two different Euclidean manifolds) such that 
\begin{equation}
B_{1}(p_{1},r_{1})=\{x_{1},y_{1},z_{1}\in R_{1}^{3}\mid
d(x_{1},y_{1},z_{1},p_{1})<r_{1}\}\text{ \ ,}  \label{Obz5}
\end{equation}%
\begin{equation}
B_{2}(p_{2},r_{2})=\{x_{2},y_{2},z_{2}\in R_{1}^{3}\mid
d(x_{2},y_{2},z_{2},p_{2})<r_{2}\}\text{. }  \label{Obz6}
\end{equation}%
Then, in order to apply the definition for an open set in the meaning of an
open ball (see \cite{Tu}), can one define the open ball $B(p,r)=\{x,y,z\in
R_{1}^{3}\times R_{2}^{3}\mid d(x,y,z,p)<R_{AB}\}$? This problem is outside
the scope of the present paper, but should be investigated also for the
general case when the two manifolds $R_{1}^{3}$ and $R_{2}^{3}$ are
expressed in terms of the two sets of Kepler parameters $(f_{1},\Omega
_{1},\omega _{1},i_{1},a_{1},e_{1})$ and $(f_{2},\Omega _{2},\omega
_{2},i_{2},a_{2},e_{2})$.

\subsection{A new proposition for a generalized Shapiro delay formulae,
based on the differential geometry formalism}

\subsubsection{The additional condition for equality of the two differentials%
}

Using the definition about the metric space, let us derive the general
formulae for the differential of the propagation time $T$ for the signal
between two satellites on two different space-distributed orbits.
Remembering the null cone equation (\ref{DOP25}) and formulae (26) from the
review paper \cite{Bog2}, one can obtain 
\begin{equation}
dT\approx \frac{1}{c}\int\limits_{path}\sqrt{\delta _{ij}dx^{i}dx^{j}}+\frac{%
1}{c^{3}}\int\limits_{path}2U\sqrt{\delta _{ij}dx^{i}dx^{j}}\text{ \ \ ,}
\label{Obz7}
\end{equation}%
where the integration is performed over the s.c. "slant range" $d\rho =\sqrt{%
\delta _{ij}dx^{i}dx^{j}}$, called also the "geometric path". The Earths
gravitational potential $U$ will be taken in the standard form $U=\frac{%
G_{\oplus }M_{\oplus }}{R}$. But for this case, we may define the as the
infinitesimal geometric path $dR_{AB}$ on the line, joining the
signal-emitting and the signal-receiving satellite. Further argumentation
will be provided in favour of the statement that imposing the equality $%
d\rho =dR_{AB}$ is very convenient.

If the coordinates $x,y,z$ are parametrized by the parameter $s$, i.e. $%
x=x(s)$, $y=y(s)$ and $z=z(s)$ and remembering the previously introduced
notations $x=x_{2}-x_{1}$, $y=y_{2}-y_{1}$ and $z=z_{2}-z_{1}$, one can
represent $d\rho $ as 
\begin{equation}
d\rho =\sqrt{\left( \overset{.}{x}\right) ^{2}+\left( \overset{.}{y}\right)
^{2}+\left( \overset{.}{z}\right) ^{2}}ds  \label{Obz8}
\end{equation}%
and 
\begin{equation*}
dR_{AB}^{2}=2(x_{2}-x_{1})(dx_{2}-dx_{1})+2(y_{2}-y_{1})(dy_{2}-dy_{1})+2(z_{2}-z_{1})(dz_{2}-dz_{1})=
\end{equation*}%
\begin{equation}
=\frac{1}{R_{AB}}\left[ xdx+ydy+zdz\right] =\frac{1}{2R_{AB}}\frac{d}{ds}%
\left( x^{2}+y^{2}+z^{2}\right) =\frac{1}{2R_{AB}}\frac{dR_{AB}^{2}}{ds}=%
\frac{dR_{AB}}{ds}\text{ \ \ . \ }  \label{Obz9}
\end{equation}%
The last simple formulae $dR_{AB}^{2}=\frac{dR_{AB}}{ds}$ is related to the
fact that the infinitesimal Euclidean distance $dR_{AB}^{2}$ is equal to the
infinitesimal distance $\frac{dR_{AB}}{ds}$ along the curve $R_{AB}(s)$,
parametrized by the parameter $s$. In fact, formulae (\ref{Obz9}) can be
written in the same way as in the monograph by Postnikov \cite{Postnikov3}
(see also \cite{DubrFomVol1})%
\begin{equation}
l(s)=\int\limits_{s_{0}}^{s}\mid \overset{.}{\overrightarrow{\mathbf{R}}}%
_{AB}(s)\mid ds\text{ \ \ \ ,}  \label{Obz9B}
\end{equation}%
where the coordinates of the vector $\overset{.}{\overrightarrow{\mathbf{R}}}%
_{AB}(s)$ are $\overset{.}{\overrightarrow{\mathbf{R}}}_{AB}(s)=(\overset{.}{%
x}(s),\overset{.}{y}(s),\overset{.}{z}(s))$ and thus $\mid \overset{.}{%
\overrightarrow{\mathbf{R}}}_{AB}(s)\mid =\sqrt{\overset{.}{x}^{2}(s)+%
\overset{.}{y}^{2}(s)+\overset{.}{z}^{2}(s)}$. It is known that the
parameter $s$ in the general case is not a natural parameter, meaning that
it is not equal to the length of the curve. It would have been a natural
parameter only if $\mid \overset{.}{\overrightarrow{\mathbf{R}}}_{AB}(s)\mid
=1$ and thus $l(s)=s-s_{0}$. This basic fact from the differential
geometry of curves and surfaces will be of primary importance, when further
some theorems about the geodesic and the null cone equations will be given.

Now one can see the subtle difference - in the infinitesimal sense the
distance between two points along the curve is the same as the distance
between the two points on the straight line, which connects the two points.
In particular, this is the simple intuitive reasoning for requiring the
fulfillment of the equality 
\begin{equation}
d\rho =\sqrt{\delta _{ij}dx^{i}dx^{j}}=dR_{AB}\text{ \ \ \ ,}  \label{Obz9A}
\end{equation}%
which can be written also as 
\begin{equation}
\sqrt{\left( \overset{.}{x}\right) ^{2}+\left( \overset{.}{y}\right)
^{2}+\left( \overset{.}{z}\right) ^{2}}ds=\frac{1}{2R_{AB}}%
d(x^{2}+y^{2}+z^{2})=\text{ \ \ \ }  \label{Obz10}
\end{equation}%
\begin{equation}
=dR_{AB}^{2}=2r_{1}dr_{1}+2r_{2}dr_{2}-2(\overrightarrow{\mathbf{dr}}_{1}.%
\overrightarrow{\mathbf{r}}_{2})-2(\overrightarrow{\mathbf{r}}_{1}.%
\overrightarrow{d\mathbf{r}}_{2})\text{ \ \ ,}  \label{Obz11}
\end{equation}%
where expression (\ref{Obz4}) for $dR_{AB}^{2}$ has been used. Note also
that formulae (\ref{Obz7}) is not necessarily to be written on the base of
the assumption (\ref{Obz9A}) $d\rho =dR_{AB}$ - at the end of this section
such a possibility will also be discussed.

\subsubsection{Modified Shapiro delay formulae with the additional condition}

It should be kept in mind that $2r_{1}dr_{1}$ and $2r_{2}dr_{2}$ are scalar
multiplication of functions, while $(\overrightarrow{\mathbf{dr}}_{1}.%
\overrightarrow{\mathbf{r}}_{2})$ and $(\overrightarrow{\mathbf{r}}_{1}.%
\overrightarrow{d\mathbf{r}}_{2})$ are vector multiplications of two
vectors, for example 
\begin{equation}
(\overrightarrow{\mathbf{dr}}_{1}.\overrightarrow{\mathbf{r}}%
_{2})=x_{2}dx_{1}+y_{2}dy_{1}+z_{2}dz_{1}\text{ \ , }(\overrightarrow{%
\mathbf{r}}_{1}.\overrightarrow{d\mathbf{r}}%
_{2})=x_{1}dx_{2}+y_{1}dy_{2}+z_{1}dz_{2}\text{\ \ \ }  \label{Obz12}
\end{equation}%
and the two set of functions $(x_{1},y_{1},z_{1})$ and $(x_{2},y_{2},z_{2})$
are given by formulaes (\ref{K1}) - (\ref{K3}), supplemented by the
corresponding indices "$1$" and "$2$". A known property from celestial
mechanics (see for example \cite{KopeikinBook} and \cite{Gurfil}) is that
the scalar functions $r_{1}$ and $r_{2}$ represent the corresponding
Euclidean distances 
\begin{equation}
x_{1}^{2}+y_{1}^{2}+z_{1}^{2}=r_{1}^{2}\text{ }\ \ \text{,}\ \text{\ }%
x_{2}^{2}+y_{2}^{2}+z_{2}^{2}=r_{2}^{2}\text{ \ \ \ ,}  \label{Obz13}
\end{equation}%
which depend on the corresponding true anomaly angles $f_{1}$ and $f_{2}$
and also on $(e_{1},a_{1})$ and $(e_{2},a_{2})$ 
\begin{equation}
r_{1}=\frac{a_{1}(1-e_{1}^{2})}{1+e_{1}\cos f_{1}}\text{ \ \ , \ \ \ \ }%
r_{2}=\frac{a_{2}(1-e_{2}^{2})}{1+e_{2}\cos f_{2}}\text{ \ \ .\ }
\label{Obz14}
\end{equation}%
If all the orbital elements (i.e. the two sets of orbital parameters) $%
(f_{1},\Omega _{1},\omega _{1},i_{1},a_{1},e_{1})$ and $(f_{2},\Omega
_{2},\omega _{2},i_{2},a_{2},e_{2})$ are varying during the process of
emission and perception of the signal between the two satellites (i.e.all
these two sets of Kepler elements depend on the parameter $s$ and thus \
represent functions), this has also to be taken in account when calculating
the differentials $dr_{1}$ and $dr_{2}$ in formulae (\ref{Obz11}) and also $(%
\overrightarrow{\mathbf{r}}_{1}.\overrightarrow{d\mathbf{r}}_{2})$ and $(%
\overrightarrow{\mathbf{dr}}_{1}.\overrightarrow{\mathbf{r}}_{2})$ ). We
shall not present in details these lengthy calculations, these formulaes and
the further development of  the new approach will be given in another
publication. .

However, when an integration along the parameter $s$ is performed along a
curve, joining two non-infinitesimal points, corresponding to the parameter
values $s_{1}$ and $s_{2}$, the integration will not give an "infinitesimal
result". This would mean that the first term in expression (\ref{Obz7}) for
the differential of the propagation time $dT$ 
\begin{equation}
dT\approx \frac{1}{c}\int\limits_{path}\sqrt{\left( \overset{.}{x}\right)
^{2}+\left( \overset{.}{y}\right) ^{2}+\left( \overset{.}{z}\right) ^{2}}ds+%
\frac{2G_{\oplus }M_{\oplus }}{c^{3}}\int\limits_{R_{0A}}^{R_{0B}}\frac{dR}{%
R}\text{ \ \ }  \label{Obz15}
\end{equation}%
will no longer give a constant Euclidean distance, divided by $c$, as is in
the standard Shapiro formulae. In (\ref{Obz15}) we have denoted the
under-integral variable in the second integral as \ $d\overline{\rho }=\sqrt{%
\delta _{ij}dx^{i}dx^{j}}ds=dR$ and the path integral will be from $R_{01}$
to $R_{02}$ ($R_{01}$ and $R_{02}$ are the distances from the first point
and the second point to the zero-point, the origin of the coordinate system
and thus $R_{AB}=R_{0B}-R_{0A}$) 
\begin{equation}
\frac{2G_{\oplus }M_{\oplus }}{c^{3}}\int\limits_{R_{0A}}^{R_{0B}}\frac{dR}{%
R}=\frac{2G_{\oplus }M_{\oplus }}{c^{3}}\ln (R_{0B}-R_{0A})=\frac{2G_{\oplus
}M_{\oplus }}{c^{3}}\ln (R_{AB})\text{ \ \ \ .\ \ }  \label{Obz16}
\end{equation}
This second term might be considered to be a generalization of the second
term in the standard Shapiro formulae $T_{AB}=\frac{R_{AB}}{c}+\frac{2GM_{E}%
}{c^{3}}\ln \left( \frac{r_{A}\newline
+r_{B}+R_{AB}}{r_{A}+r_{B}-R_{AB}}\right) \ \ $(\ref{AA19}), but this should
be understood in the most general sense. The reason is that formulae (\ref%
{Obz7}) is for the differential $dT$, which depends on the two sets of
Euclidean coordinates (in total $6$ coordinates-or in the case, on the
coordinates $x=x_{2}-x_{1}$, $y=y_{2}-y_{1}$) or one might also consider a
case, when $dT$ will depend on the two sets of Kepler parameters. Moreover,
the formulae (\ref{Obz15}) is valid when the condition (\ref{Obz9A}) is also
considered, and it represents a complicated quadratic expression in the
differentials $dx^{i}$ and $dx^{j}$.

\subsubsection{Modified Shapiro delay formulae without any additional
conditions}

Equation (\ref{Obz7}) can be written also without applying the condition $%
d\rho =\sqrt{\delta _{ij}dx^{i}dx^{j}}=dR_{AB}$ (\ref{Obz9A}).Then
expression (\ref{Obz15}) will be rewritten as 
\begin{equation}
dT\approx \frac{1}{c}\int\limits_{path}\sqrt{\left( \overset{.}{x}\right)
^{2}+\left( \overset{.}{y}\right) ^{2}+\left( \overset{.}{z}\right) ^{2}}ds+%
\frac{2G_{\oplus }M_{\oplus }}{c^{3}}\int \frac{\sqrt{\left( \overset{.}{x}%
\right) ^{2}+\left( \overset{.}{y}\right) ^{2}+\left( \overset{.}{z}\right)
^{2}}}{R_{AB}}dR_{AB}\text{ \ \ . \ \ }  \label{Obz17}
\end{equation}%
Now making use of the formulaes (\ref{Obz11}) and (\ref{Obz12}), it can
easily be proved that 
\begin{equation}
dR_{AB}=\frac{1}{R_{AB}}\left[ xdx+ydy+zdz\right] \text{ }  \label{Obz18}
\end{equation}%
Thus, the whole expression (\ref{Obz17}) can be written as 
\begin{equation*}
dT\approx \frac{1}{c}\int\limits_{path}\sqrt{\left( \overset{.}{x}\right)
^{2}+\left( \overset{.}{y}\right) ^{2}+\left( \overset{.}{z}\right) ^{2}}ds
\end{equation*}
\begin{equation}
+\frac{2G_{\oplus }M_{\oplus }}{c^{3}}\int \frac{\sqrt{\left( \overset{.}{x}%
\right) ^{2}+\left( \overset{.}{y}\right) ^{2}+\left( \overset{.}{z}\right)
^{2}}}{\left[ x^{2}+y^{2}+z^{2}\right] }\left[ xdx+ydy+zdz\right] \text{ \ \
. \ \ }  \label{Obz19}
\end{equation}%
This is the expression in the general case. If the parametrization $x=x(s)$, 
$y=y(s)$ and $z=z(s)$ is taken into account (also $dT=\int dTds$), then the
propagation time $T=T_{2}-T_{1}$ will be given by an integral of the type $%
T\approx \int F(x,y,z,\overset{.}{x},\overset{.}{y},\overset{.}{z})ds$,
where the under-integral function acquires the form 
\begin{equation*}
F(x,y,z,\overset{.}{x},\overset{.}{y},\overset{.}{z})=\frac{1}{c}\sqrt{%
\left( \overset{.}{x}\right) ^{2}+\left( \overset{.}{y}\right) ^{2}+\left( 
\overset{.}{z}\right) ^{2}}
\end{equation*}
\ 
\begin{equation}
+\frac{2G_{\oplus }M_{\oplus }}{c^{3}}\frac{\sqrt{\left( \overset{.}{x}%
\right) ^{2}+\left( \overset{.}{y}\right) ^{2}+\left( \overset{.}{z}\right)
^{2}}}{\left[ x^{2}+y^{2}+z^{2}\right] }\left[ x\overset{.}{x}+y\overset{.}{y%
}+z\overset{.}{z}\right] \text{ \ \ . \ \ }  \label{Obz20}
\end{equation}

The first integral in (\ref{Obz19}) is known from differential geometry \cite%
{DubrFomVol1}, \cite{DubrFomVol2}, \cite{MishtFomCourse}, \cite%
{PrasolovDifferential} (and many others) and its relation to the physical
essence of the propagation of signals will be investigated in the next
section, related also to some basic theorems from differential geometry.

The second term in (\ref{Obz20}) is more complicated and it will be
investigated in another paper, since it will be interesting to include this second 
term in the usual variational formalism, when the action functional derived from the null cone equation 
will be minimized and the Euler Lagrange equations with account of this second term have to be solved. 
Then it will be interesting to compare the results from the usual variational formalism with the results 
from the higher-order variational formalism, which will be proposed in the following sections of 
this paper. 

\section{Minimization of the first term in the action functional for the
propagation time}

\subsection{Mathematical derivation of the extremal function - the straight
line}

Now we shall make the first step towards solving the following physical
problem: suppose that a satellite on one space-distributed orbit sends a
signal towards another satellite (also on a space-distributed orbit). Then
what is the optimal configuration (i.e. disposition of the satellites), so
that the transmission will be for the least time.

This problem requires the minimization of the functional (\ref{Obz19}), in
particular we shall investigate the minimization only of the first term.
Since (\ref{Obz19}) originally has been obtained by the null cone equation (%
\ref{DOP25}), this minimization will require the implementation of a
variational principle, based on the Euler-Lagrange equations. So in this
section we shall minimize the action functional 
\begin{equation}
T=\int \left[ \sqrt[.]{\left( \overset{.}{x}\right) ^{2}+\left( \overset{.}{y%
}\right) ^{2}+\left( \overset{.}{z}\right) ^{2}}+\lambda (x,y,z)R_{AB}(x,y,z)%
\right] ds\text{ \ \ ,}  \label{Obz21A}
\end{equation}
where we have denoted $x=x_{1}-x_{2}$, $y=y_{1}-y_{2}$, $z=z_{1}-z_{2}$, $\ $%
supplemented by the Lagrange multiplier $\lambda (x,y,z)$ and the distance
between the space points $(x_{1},y_{1},z_{1})$ and $(x_{2},y_{2},z_{2})$ is $%
R_{AB}=\sqrt[.]{(x_{1}-x_{2})^{2}+(y_{1}-y_{2})^{2}+(z_{1}-z_{2})^{2}}$ as a
constraint function. Additionally we introduce the notation for the length
function $L_{1}$ 
\begin{equation}
L_{1}=\sqrt[.]{\left( \overset{.}{x}\right) ^{2}+\left( \overset{.}{y}%
\right) ^{2}+\left( \overset{.}{z}\right) ^{2}}\text{ \ ,}  \label{Obz22}
\end{equation}
where, as usual, the dot "$.$" denotes a differentiation along the parameter 
$s$. The length function $L_{1}$ measures the
distance between the two points along the arc of two points on a given
curve, not necessarily assuming that $s$ is a "natural parameter".

The action functional will have a minimum if and only if the under-integral
function satisfies the Euler-Lagrange equations 
\begin{equation}
\frac{\delta L}{\delta x}=0=\lambda _{x}R_{AB}+\lambda \frac{x}{R_{AB}}%
+\left( \frac{\overset{..}{x}}{L_{1}}-\frac{1}{L_{1}^{3}}\frac{d}{ds}\frac{%
\overset{.}{x}^{3}}{3}\right) \text{ \ \ \ \ \ \ \ ,}  \label{Obz23}
\end{equation}%
\begin{equation}
\frac{\delta L}{\delta y}=0=\lambda _{y}R_{AB}+\lambda \frac{y}{R_{AB}}%
+\left( \frac{\overset{..}{y}}{L_{1}}-\frac{1}{L_{1}^{3}}\frac{d}{ds}\frac{%
\overset{.}{y}^{3}}{3}\right) \text{ \ \ \ \ \ \ \ ,}  \label{Obz24}
\end{equation}%
\begin{equation}
\frac{\delta L}{\delta z}=0=\lambda _{z}R_{AB}+\lambda \frac{z}{R_{AB}}%
+\left( \frac{\overset{..}{z}}{L_{1}}-\frac{1}{L_{1}^{3}}\frac{d}{ds}\frac{%
\overset{.}{z}^{3}}{3}\right) \text{ \ \ \ \ \ \ \ .}  \label{Obz25}
\end{equation}%
Multiplying the first equation by $\overset{.}{x}$, the second equation by $%
\overset{.}{y}$ and the third equation by $\overset{.}{z}$ and summing up,
one can obtain 
\begin{equation}
\left( \lambda _{.}R_{AB}\right) _{s}+\frac{dL_{1}}{ds}-\frac{1}{L_{1}^{3}}%
\left[ \overset{.}{x}^{3}\overset{..}{x}+\overset{.}{y}^{3}\overset{..}{y}+%
\overset{.}{z}^{3}\overset{..}{z}\right] =0\text{ \ \ .}  \label{Obz26}
\end{equation}
From the expressions for the functions $R_{AB}$ and $L_{1}$ 
after differentiating one can obtain 
\begin{equation}
L_{1}\frac{dL_{1}}{ds}=\overset{.}{x}\overset{..}{x}+\overset{.}{y}\overset{%
..}{y}+\overset{.}{z}\overset{..}{z}\text{ , \ \ }R_{AB}\frac{dR_{AB}}{ds}%
\text{\ }=x\overset{.}{x}+y\overset{.}{y}+\text{\ }z\overset{.}{z}\text{ \ \
.}  \label{Obz27}
\end{equation}
If the first differentiated equation in (\ref{Obz27}) is multiplied by $x$ ,
the second differentiated equation - by $\overset{..}{x}$ and the resulting
equations are equated, then 
\begin{equation}
xL_{1}\frac{dL_{1}}{ds}-\overset{..}{x}R_{AB}\frac{dR_{AB}}{ds}=x\overset{.}{%
y}\overset{..}{y}+x\overset{.}{z}\overset{..}{z}-\overset{..}{x}y\overset{..}%
{y}-\overset{..}{x}z\overset{.}{z}\text{ \ ,}  \label{Obz28}
\end{equation}
which can be rewritten as 
\begin{equation}
xL_{1}\frac{dL_{1}}{ds}-\overset{..}{x}R_{AB}\frac{dR_{AB}}{ds}=x(L_{1}\frac{%
dL_{1}}{ds}-\overset{.}{x}\overset{..}{x})-\overset{..}{x}(R_{AB}\frac{%
dR_{AB}}{ds}-x\overset{.}{x})\text{ \ \ }\Longrightarrow x\overset{.}{x}%
\overset{..}{x}=0\text{ \ . }  \label{Obz29}
\end{equation}%
Similarly one can obtain the equations $y\overset{.}{y}\overset{..}{y}=0$
and $z\overset{.}{z}\overset{..}{z}=0$. From the three equations it follows 
\begin{equation}
xyz\frac{d}{ds}\left[ \left( \overset{.}{x}\right) ^{2}+\left( \overset{.}{y}%
\right) ^{2}+\left( \overset{.}{z}\right) ^{2}\right] =0\text{ \ }%
\Rightarrow L_{1}^{2}=const=C_{1}\text{\ }  \label{Obz30}
\end{equation}%
Now we rewrite the three dynamical equations (\ref{Obz23}), (\ref{Obz24})
and (\ref{Obz25}): 
\begin{equation}
\left( \lambda R_{AB}\right) _{x}+\frac{1}{C_{1}}\overset{..}{x}=0\text{ \ , 
}\left( \lambda R_{AB}\right) _{y}+\frac{1}{C_{1}}\overset{..}{y}=0\text{ \
, \ \ }\left( \lambda R_{AB}\right) _{z}+\frac{1}{C_{1}}\overset{..}{z}=0%
\text{\ }  \label{Obz31}
\end{equation}%
Multiplying the first equation by $\overset{.}{x}$, the second equation by $%
\overset{.}{y}$ and the third equation by $\overset{.}{z}$ and summing up,
one can obtain 
\begin{equation}
\left( \lambda R_{AB}\right) _{s}+\frac{1}{C_{1}}\left( \overset{.}{x}%
\overset{..}{x}+\overset{.}{y}\overset{..}{y}+\overset{.}{z}\overset{..}{z}%
\right) =0\text{ \ \ \ .}  \label{Obz32}
\end{equation}%
Since we have proved that $L_{1}\frac{dL_{1}}{ds}=\overset{.}{x}\overset{..}{%
x}+\overset{.}{y}\overset{..}{y}+\overset{.}{z}\overset{..}{z}=const$, it
follows that $\lambda R_{AB}=const_{1}s+const_{2}$, i.e. $R_{AB}(x,y,z)=\sqrt%
[.]{x^{2}+y^{2}+z^{2}}$ is the straight line! In the case, the straight line
will be between the signal-emitting satellite and the signal-receiving
satellite and 
\begin{equation}
R_{AB}=\sqrt[.]{%
(x_{1.}^{.}-x_{2})^{2}+(y_{1.}^{.}-y_{2})^{2}+(z_{1.}^{.}-z_{2})^{2}}%
=const_{1}s+const_{2}\text{ \ \ ,}  \label{Obz33}
\end{equation}%
where $x_{1.}^{.},x_{2},y_{1.}^{.},y_{2},z_{1.}^{.}z_{2}$ are expressed from
(\ref{K1}), (\ref{K2}) and (\ref{K3}) in terms of the two sets of Kepler
parameters $(f_{.1},a_{1.},e_{1.},\Omega _{1.},i_{1.},\omega _{1.})$ and $%
(f_{2.},a_{2.},e_{2.},\Omega _{2.},i_{2.},\omega _{2.})$. So in terms of all
these variables, this will be a complicated $12-$dimensional surface.

\subsection{Geometrical justification of the obtained result}

The obtained formulae (\ref{Obz33}) may seem elementary and trivial, but it
has an important physical meaning, which will justify the application of a
more specific variational formalism.

Let us first understand what does this formulae implies: it means that the
manifold defined in terms of the space variables $x=x_{1}-x_{2}$, $%
y=y_{1}-y_{2}$, $z=z_{1}-z_{2}$ contains a straight line. But then one can
notice that there is a special theorem in differential geometry (see the
contemporary textbook \cite{PrasolovDifferential}):

\begin{theorem}
If a surface contains a straight line then this line (endowed with the
natural parameter $s$) is a geodesic.
\end{theorem}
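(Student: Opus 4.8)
The plan is to prove the statement directly from the standard definition of a geodesic as a curve whose acceleration vector is everywhere orthogonal to the surface, equivalently whose geodesic curvature vanishes. The key observation is that the hypothesis ``endowed with the natural parameter $s$'' is precisely what makes the argument work, because a straight line traversed at unit speed has identically vanishing acceleration.

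First I would set up notation. Let $S$ denote the surface and let $\ell \subset S$ be a straight line lying on it. I would write the line in the form $\gamma(s) = p + s\,\mathbf{v}$, where $p \in \ell$ and $\mathbf{v}$ is the constant unit direction vector of the line. Since the ambient Euclidean metric restricted to $\ell$ is the usual one and $\mid \mathbf{v} \mid = 1$, the parameter $s$ is indeed the natural (arc-length) parameter: $\mid \dot{\gamma}(s) \mid = \mid \mathbf{v} \mid = 1$ for all $s$, which is exactly the condition $\mid \dot{\overrightarrow{\mathbf{R}}}_{AB}(s) \mid = 1$ noted earlier in the discussion around formula (\ref{Obz9B}).

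Next I would compute the acceleration. One differentiation gives $\dot{\gamma}(s) = \mathbf{v}$, a constant vector, and a second differentiation gives $\ddot{\gamma}(s) = \mathbf{0}$ for every $s$. A geodesic, in the natural parametrization, is characterized by the requirement that $\ddot{\gamma}(s)$ have no component tangent to $S$, i.e. that its tangential projection, the covariant derivative $\nabla_{\dot{\gamma}}\dot{\gamma}$, vanish. Since $\ddot{\gamma}(s) = \mathbf{0}$, the zero vector trivially lies in, and is orthogonal to, the tangent plane $T_{\gamma(s)}S$, so its tangential projection is zero and the geodesic equation $\nabla_{\dot{\gamma}}\dot{\gamma} = 0$ holds identically. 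Hence $\gamma$ is a geodesic. Equivalently, phrasing the proof through geodesic curvature, $k_g$ is the tangential part of the curvature vector $\ddot{\gamma}$, and since $\ddot{\gamma} = \mathbf{0}$ both the full curvature vector and its tangential projection vanish, giving $k_g = 0$, the defining property of a geodesic.

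There is essentially no analytic obstacle here; the result is immediate once the definition is chosen correctly. The only point requiring genuine care is the role of the natural parameter: the theorem would fail for an arbitrary parametrization, since a straight line traversed with non-constant speed has nonzero acceleration directed along the line, and this nonzero vector would in general possess a tangential component relative to $S$. Insisting that $s$ be the arc length forces $\ddot{\gamma} \equiv 0$ and removes this subtlety entirely, which is why the statement is phrased with the natural parameter built in.
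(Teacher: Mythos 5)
Your proof is correct and is the canonical one: in the arc-length parametrization the line satisfies $\ddot{\gamma}\equiv 0$, so the tangential projection of the acceleration (equivalently the geodesic curvature $k_g$) vanishes identically, which is exactly the defining property of a geodesic; your caveat about the natural parameter is also the right one, since with non-constant speed the acceleration is a nonzero vector pointing along the line, hence tangent to $S$, and the geodesic equation fails. Note that the paper itself offers no proof of this statement at all --- it is quoted as a known theorem from the textbook \cite{PrasolovDifferential} and then applied to the extremals found from the variational formalism --- so your argument simply supplies the standard justification that the paper takes on citation.
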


In fact, we started from the action functional (\ref{Obz21A}) $T=\frac{1}{c}%
\int [\sqrt[.]{\left( \overset{.}{x}\right) ^{2}+\left( \overset{.}{y}%
\right) ^{2}+\left( \overset{.}{z}\right) ^{2}}$

$+\lambda (x,y,z)R_{AB}(x,y,z)]ds$ (supplemented with the constraint function, which however did
not influence the variational formalism), then we applied a variational
formalism and we found as extremals the straight lines! Since the action
functional $T=\frac{1}{c}\int \sqrt{\left( \overset{.}{x}\right) ^{2}+\left( 
\overset{.}{y}\right) ^{2}+\left( \overset{.}{z}\right) ^{2}}ds$ is a
consequence of the original null cone equation (\ref{DOP25}), there is an
analogy with a theorem, cited in the review paper \cite{Bog2} and proved in
the monograph by Fock \cite{Fock}.The theorem is the following: $\ $

Let $F=\frac{1}{2}g_{\alpha \beta }\frac{dx^{\alpha }}{ds}$ $\frac{dx^{\beta
}}{ds}$ , $F=0$ is the null cone equation and the extremum of the integral $%
s=\int\limits_{s_{1}}^{s_{2}}Lds$ is searched $(L=\sqrt{2F})$, then the
Lagrange equation of motion will give 
\begin{equation}
\frac{dF}{ds}=0\text{ \ }\Longrightarrow F=const\Longrightarrow \frac{d}{ds}%
\frac{\partial F}{\partial \overset{.}{x}_{\alpha }}-\frac{\partial F}{%
\partial x_{\alpha }}=0\text{ \ \ .}  \label{Obz34}
\end{equation}%
From the last equation the geodesic equations $(\alpha ,\beta ,\nu =0,1,2,3)$
\begin{equation}
\frac{d^{2}x_{\nu }}{ds^{2}}+\Gamma _{\alpha \beta }^{\nu }\frac{dx_{\alpha }%
}{ds}\frac{dx_{\beta }}{ds}=0\text{ \ \ \ \ }  \label{Obz35}
\end{equation}%
can be obtained. \ Since the constant in (\ref{Obz34}) can be set up to
zero, this would mean that the geodesic equation (\ref{Obz35}) is compatible
with the null-cone equation $F=\frac{1}{2}g_{\alpha \beta }\frac{dx^{\alpha }%
}{ds}$ $\frac{dx^{\beta }}{ds}=0$. It is important to acquire the proper
understanding of this statement - it means that every solution of the null
equation (including in the proper parametrization of the differentials,
related to the trajectory of motion of the satellite) is also a solution of
the light-like geodesic equation. In the "backward direction", however the
statement is not true, meaning that there might be another solutions of the
geodesic equation, which are not solutions of the null-cone equation. This
problem needs further investigation.

The meaning of the geodesic as the shortest path, joining two given points
(but only in the local sense, i.e a sufficiently short part of the
geodesics) is confirmed also by the following theorem in the monograph by
Prasolov \cite{PrasolovDifferential}. Let us present the exact formulation
of this theorem: \ 

\begin{theorem}
Any point $p$ on a surface $S$ has a neighborhood $U$ such that the length
of a geodesic $\gamma $ going from the point $p$ to some point $q$ and lying
entirely in the neighborhood $U$ does not exceed the length of any curve $%
\alpha $ on $S$, joining the points $p$ and $q$. Moreover, if the lengths of
the curves $\gamma $ and $\alpha $ are equal, then $\gamma $ and $\alpha $
coincide as non-parametrizable curves.
\end{theorem}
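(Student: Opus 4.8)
The plan is to prove this as the classical local length-minimizing property of geodesics, for which the natural tools are the exponential map and the Gauss Lemma. First I would fix the point $p$ and work on the tangent plane $T_{p}S$, on which the exponential map $\exp_{p}$ is defined on a sufficiently small disc. Since $S$ is a smooth surface and the differential of $\exp_{p}$ at the origin is the identity, the inverse function theorem guarantees that $\exp_{p}$ restricts to a diffeomorphism from a disc of some radius $\varepsilon$ onto a neighborhood $U$ of $p$. This $U$ is precisely the normal (geodesically starlike) neighborhood whose existence is asserted: every point $q\in U$ is joined to $p$ by a unique radial geodesic $\gamma$ lying entirely in $U$, and the length of $\gamma$ equals $r_{0}=\mid \exp_{p}^{-1}(q)\mid$.

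Next I would introduce geodesic polar coordinates $(r,\theta )$ on $U$, transported from ordinary polar coordinates on the disc in $T_{p}S$ through $\exp_{p}$. The decisive step is the Gauss Lemma: the radial geodesics $\theta =\text{const}$ are orthogonal to the geodesic circles $r=\text{const}$. Proving this lemma is the main obstacle. One checks that $\langle \partial_{r},\partial_{r}\rangle =1$, since the radial curves are unit-speed geodesics, and then that the mixed coefficient $\langle \partial_{r},\partial_{\theta }\rangle $ is independent of $r$: differentiating along $r$ and using the geodesic equation $\nabla_{\partial_{r}}\partial_{r}=0$ together with the torsion-free symmetry $\nabla_{\partial_{r}}\partial_{\theta }=\nabla_{\partial_{\theta }}\partial_{r}$ yields $\partial_{r}\langle \partial_{r},\partial_{\theta }\rangle =\tfrac{1}{2}\partial_{\theta }\langle \partial_{r},\partial_{r}\rangle =0$. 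Since $\langle \partial_{r},\partial_{\theta }\rangle \to 0$ as $r\to 0$, it vanishes identically, so the first fundamental form acquires the diagonal normal shape $ds^{2}=dr^{2}+G(r,\theta )\,d\theta ^{2}$ with $G>0$ away from $p$.

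With this normal form in hand the length comparison is immediate. For any curve $\alpha $ in $U$ joining $p$ to $q$, written in polar coordinates as $(r(t),\theta (t))$ with $r(0)=0$ and $r(1)=r_{0}$, its length obeys
\begin{equation}
L(\alpha )=\int_{0}^{1}\sqrt{\dot{r}^{2}+G\dot{\theta }^{2}}\,dt\geq \int_{0}^{1}\mid \dot{r}\mid \,dt\geq \int_{0}^{1}\dot{r}\,dt=r_{0}=L(\gamma )\text{ \ ,}
\end{equation}
which establishes the inequality $L(\gamma )\leq L(\alpha )$. Finally, equality throughout forces $G\dot{\theta }^{2}\equiv 0$, hence $\dot{\theta }\equiv 0$ so that $\alpha $ is radial, and $\mid \dot{r}\mid =\dot{r}\geq 0$ so that $r$ is monotone; therefore $\alpha $ traces exactly the same radial geodesic as $\gamma $ and coincides with it as an unparametrized curve, which is precisely the rigidity clause of the theorem.
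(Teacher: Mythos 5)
The paper itself offers no proof of this statement: it is quoted verbatim as a known theorem from Prasolov's monograph \cite{PrasolovDifferential}, and is used in the text only to interpret the straight-line extremals obtained from the variational principle. So there is nothing in the paper to compare your argument against line by line; what can be said is that your proposal is the standard textbook proof (exponential map, Gauss Lemma, geodesic polar coordinates), and it is essentially correct. Your derivation of the Gauss Lemma is sound: $\nabla_{\partial_{r}}\partial_{r}=0$ together with the symmetry $\nabla_{\partial_{r}}\partial_{\theta}=\nabla_{\partial_{\theta}}\partial_{r}$ gives $\partial_{r}\langle\partial_{r},\partial_{\theta}\rangle=\tfrac{1}{2}\partial_{\theta}\langle\partial_{r},\partial_{r}\rangle=0$, and the limit $\langle\partial_{r},\partial_{\theta}\rangle\to 0$ as $r\to 0$ holds because $\partial_{\theta}$ degenerates at the pole; the length comparison and the equality analysis then go through as you wrote them.

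Two technical points should be patched before the argument fully matches the statement. First, the theorem allows $\alpha$ to be \emph{any} curve on $S$ joining $p$ and $q$, whereas your polar-coordinate estimate applies only to curves lying in $U$. The standard repair: take $U=\exp_{p}(B_{\varepsilon}(0))$ with $q$ at radial distance $r_{0}<\varepsilon$; if $\alpha$ leaves $U$, apply your estimate to the initial segment of $\alpha$ up to its first crossing of the geodesic circle $r=\varepsilon$, which already has length at least $\varepsilon>r_{0}$. Second, the coordinates $(r,\theta)$ are singular at $p$ itself, so $\theta(t)$ need not be defined at $t=0$; one integrates from the last time $\alpha$ meets the circle $r=\delta$ and lets $\delta\to 0$, or equivalently writes the estimate in terms of $\langle\dot{\alpha},\partial_{r}\rangle\leq\vert\dot{\alpha}\vert$ along the punctured curve. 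Both are routine, and with them your proof is complete and rigorous.
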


In fact, what does it mean that "the length of the geodesics from point $p$
to point $q$ does not exceed the length of any other curve" (but only
locally, i.e. in the vicinity of the given neighborhood)? This means that
this geodesic line is minimal. So this confirms the theorem in another,
third well-known book on differential geometry by Mishtenko and Fomenko \cite%
{MishtFomCourse}:

\begin{theorem}
The geodesic line $\gamma :\left[ a,b\right] \rightarrow M^{n}$ is minimal
if it is not longer than any smooth path, joining the endpoints $\gamma (a)$
and $\gamma (b)$.
\end{theorem}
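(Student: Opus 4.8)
The plan is to recognize first that, as worded, this assertion is the \emph{definition} of a minimal geodesic in the sense of \cite{MishtFomCourse}: minimality names precisely the property that the length functional $L_1$ of (\ref{Obz22}), integrated along $\gamma$, does not exceed its value along any competing path with the same endpoints $\gamma(a)$ and $\gamma(b)$. At face value the ``if'' is therefore definitional, and the proof I would write opens by stating exactly this---that ``minimal'' \emph{is} the global length-minimization property relative to the integral $\int L_1\, ds$---so that nothing is derived beyond unpacking the terms and checking consistency with the two preceding local theorems of Prasolov \cite{PrasolovDifferential}. Importantly, this is the direction hypothesis $\Rightarrow$ minimal, and no converse (that minimizers satisfy the geodesic equation, or that geodesics minimize only locally via conjugate points) is attempted.

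The only place where genuine content can enter is the comparison class. The hypothesis restricts attention to \emph{smooth} competitors $\beta$, whereas the notion of minimality that makes $L_1[\gamma]$ equal to the Riemannian distance $d(\gamma(a),\gamma(b))$ of (\ref{Obz3}) takes the infimum over the larger class of piecewise-smooth (or rectifiable) curves admitted in that very definition of $d$. So the substantive step I would carry out is a \emph{smoothing/density argument}: show that the infimum of $L_1$ over smooth paths equals its infimum over piecewise-smooth paths. Granting this, the hypothesis $L_1[\gamma]\le L_1[\beta]$ for every smooth $\beta$ upgrades to the same inequality for every piecewise-smooth competitor $\alpha$, which is exactly global minimality, i.e. realization of $d(\gamma(a),\gamma(b))$.

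The smoothing itself I would perform locally, chart by chart, exploiting the locally-Euclidean atlas structure recorded above in the definition of a topological manifold and in the atlas conditions A.\ and B. Given a piecewise-smooth competitor $\alpha$ with finitely many corners, near each corner I would round the broken trajectory inside a single coordinate neighborhood $(U,\Phi)$, replacing the corner by a smooth arc that stays within an $\epsilon$-tube of $\alpha$ and whose length exceeds the two incoming segments by at most $\epsilon$. Summing the finitely many corner contributions and letting $\epsilon\to 0$ produces smooth paths $\alpha_\epsilon$ with $L_1[\alpha_\epsilon]\to L_1[\alpha]$ \emph{from above}, which is precisely what the density step requires; no second-variation or conjugate-point analysis enters, since minimality against smooth competitors is \emph{assumed} and I merely transfer it to the wider class.

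The main obstacle I anticipate is analytic rather than geometric. Length is only \emph{lower} semicontinuous under $C^0$ convergence, so it is not automatic that $L_1[\alpha_\epsilon]$ descends to $L_1[\alpha]$; I must control the rounding in the $C^1$ sense, forcing the tangent directions---and hence the integrand $\sqrt{\delta_{ij}\dot{x}^i\dot{x}^j}$---to converge away from the corners while the length surplus concentrated at each corner remains genuinely $O(\epsilon)$. A secondary point is patching: ensuring each rounded arc stays inside the manifold and inside one chart on which the coefficients $g_{ij}$ of (\ref{Obz3}) are bounded, so that Euclidean estimates in the $\Phi$-coordinates transfer to $L_1$ up to the chart's bounded metric distortion. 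Once these $C^1$-closeness and boundedness estimates are in hand, the equality of infima follows and the stated implication is immediate.
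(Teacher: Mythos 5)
Your diagnosis is right, and it matches the paper's actual treatment more closely than you might expect: the paper offers no proof of this statement at all. It is quoted from Mishtenko and Fomenko \cite{MishtFomCourse}, where it functions essentially as the \emph{definition} of a minimal geodesic; in the paper it serves only as an interpretive anchor, placed between Prasolov's local-minimality theorem \cite{PrasolovDifferential} and the subsequent theorem comparing extremals of the length functional $L(\gamma(s))$ and the action functional $E(\gamma(s))$ --- note that the Schwartz-inequality argument (\ref{Obz36})--(\ref{Obz37}) which follows in the text is offered in confirmation of that latter theorem, not of this one. So your opening move, declaring the ``if'' definitional and declining to manufacture a converse (minimizers satisfy the geodesic equation, conjugate-point analysis, etc.), is exactly the correct reading of what is being asserted. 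What you add beyond the paper is the density step, and it is genuine content: the comparison class in the statement consists of smooth paths, whereas the distance $d(a,b)=\inf L_{ab}(C)$ of (\ref{Obz3}) is taken over piecewise differentiable curves, and the two infima coincide only after a corner-smoothing argument of the kind you sketch. Your execution of that step is sound where it matters: you correctly flag that $C^{0}$ closeness yields only lower semicontinuity of length, so the rounding must be controlled in the $C^{1}$ sense inside a single chart with bounded metric coefficients, with the surplus at each of the finitely many corners of order $\epsilon$, producing smooth competitors $\alpha_{\epsilon}$ with $L_{1}[\alpha_{\epsilon}]\rightarrow L_{1}[\alpha]$, whence $L_{1}[\gamma]\leq L_{1}[\alpha]$ in the limit. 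In short: your proposal is correct and strictly more complete than the paper, which imports the statement without proof; your supplement also quietly repairs the mismatch between the smooth comparison class used in this statement and the piecewise-differentiable class admitted in the paper's own definition of distance in (\ref{Obz3}).
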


So if we have for example the functional $E(\gamma (s))=\int \mid \overset{.}%
{\gamma }^{2}(s)\mid ds=\int g_{ij}(x)\overset{.}{x}^{i}(s)\overset{.}{x}%
^{j}(s)ds$ for the null cone equation, and we are searching for the
extremals, these extremals will be derived from the geodesics- the multitude
of the solutions for the extremals from the geodesics turns out to be larger
in comparison with the one for the other local curves in the vicinity, in
particular for the functional $E(\gamma (s))$ for the null-cone equation.

On the base of combining the results in the previously mentioned monographs
by Fock \cite{Fock} and \cite{PrasolovDifferential}, one easily establishes
the validity of the following theorem in \cite{MishtFomCourse}:

\begin{theorem}
The extremals of the length functional $L(\gamma (s))=\int \sqrt[.]{g_{ij}(x)%
\overset{.}{x}^{i}(s)\overset{.}{x}^{j}(s)}ds$ are smooth functions, derived
from the geodesics (by means of smooth change of the parameter along the
curve). Then each extremal of the functional $E(\gamma (s))=\int \mid 
\overset{.}{\gamma }^{2}(s)\mid ds=\int g_{ij}(x)\overset{.}{x}^{i}(s)%
\overset{.}{x}^{j}(s)ds$ (called also action functional of the trajectory)
is also an extremal of the functional $L(\gamma (s))=$ $\int \sqrt[.]{%
g_{ij}(x)\overset{.}{x}^{i}(s)\overset{.}{x}^{j}(s)}ds$. However, each
extremal of the functional \ $L(\gamma (s))$ is not necessarily an extremal
of the functional $E(\gamma (s))$.
\end{theorem}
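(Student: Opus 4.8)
The plan is to compare the Euler--Lagrange equations of the two functionals directly, exploiting the single fact that the integrand of $L$ is the square root of the integrand of $E$. Set $\Phi := g_{ij}(x)\dot x^i\dot x^j$, so that $E(\gamma)=\int\Phi\,ds$ and $L(\gamma)=\int\sqrt{\Phi}\,ds$. First I would compute the Euler--Lagrange operator of $E$, obtaining $\frac{d}{ds}\partial_{\dot x^k}\Phi-\partial_{x^k}\Phi=2g_{kj}\bigl(\ddot x^j+\Gamma^j_{lm}\dot x^l\dot x^m\bigr)$, i.e. the usual geodesic equation. I would then isolate, as a separate lemma before touching $L$ at all, the \emph{constant--speed} property: along any solution of the geodesic equation one has $\frac{d}{ds}\Phi=0$, so that $|\dot\gamma|^2=\Phi$ is constant. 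This lemma is the technical heart of the whole statement, because the entire asymmetry between $E$ and $L$ will turn out to be controlled by $\dot\Phi$.

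Next I would relate the two Euler--Lagrange operators. Since $\partial_{\dot x^k}\sqrt{\Phi}=\frac{1}{2\sqrt{\Phi}}\,\partial_{\dot x^k}\Phi$ and $\partial_{x^k}\sqrt{\Phi}=\frac{1}{2\sqrt{\Phi}}\,\partial_{x^k}\Phi$, expanding the total derivative $\frac{d}{ds}\bigl(\frac{1}{2\sqrt{\Phi}}\,\partial_{\dot x^k}\Phi\bigr)$ and multiplying through by $2\sqrt{\Phi}$ yields
$$\Bigl[\frac{d}{ds}\partial_{\dot x^k}\Phi-\partial_{x^k}\Phi\Bigr]=\frac{\dot\Phi}{2\Phi}\,\partial_{\dot x^k}\Phi .$$
The bracket on the left is exactly the Euler--Lagrange operator of $E$, and because $\partial_{\dot x^k}\Phi=2g_{kj}\dot x^j$ is proportional to the (lowered) velocity, the $L$--extremality condition is the \emph{pregeodesic} equation $\ddot x^k+\Gamma^k_{lm}\dot x^l\dot x^m=\lambda(s)\,\dot x^k$ with $\lambda=\tfrac12\frac{d}{ds}\ln\Phi$. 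This is the geodesic equation up to a term parallel to the velocity, and such an equation can always be reduced to $\nabla_{\dot\gamma}\dot\gamma=0$ by solving an ODE for a new parameter $\tau(s)$; this is precisely the assertion that $L$--extremals are geodesics traced with an arbitrary smooth reparametrization, which proves the first part of the theorem.

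The forward implication now follows immediately: if $\gamma$ is an $E$--extremal then the left bracket vanishes and, by the constant--speed lemma, $\dot\Phi=0$, so the right--hand side vanishes as well and $\gamma$ also solves the $L$--equation. For the converse I would exhibit a counterexample based on the reparametrization invariance of $L$ (under $s\mapsto\tau(s)$ one has $\sqrt{g_{ij}\frac{dx^i}{ds}\frac{dx^j}{ds}}\,ds=\sqrt{g_{ij}\frac{dx^i}{d\tau}\frac{dx^j}{d\tau}}\,d\tau$): take any geodesic and reparametrize it non-affinely, for instance in flat $R^{n}$ replace the straight line $x(s)=vs$ by $x(s)=v\,s^{3}$ on an interval bounded away from the origin (so the speed stays positive). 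The latter traces the same line, hence is still an $L$--extremal, but $\ddot x=6vs\neq0$, so $\Phi$ is not constant, the correction term $\lambda\dot x^k$ does not vanish, and the $E$--equation fails. This shows that not every $L$--extremal is an $E$--extremal.

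The main obstacle is not any individual computation but keeping the constant--speed lemma and the reparametrization bookkeeping exactly aligned, since everything hinges on the factor $\frac{1}{2\sqrt{\Phi}}$: it is harmless precisely on the $E$--extremals, where the speed is constant, and genuinely obstructing otherwise. I would therefore prove the constant--speed lemma in full before deriving the $L$--equation, so that the forward implication reduces to the one-line observation $\dot\Phi=0$ and the converse reduces to producing a variable--speed reparametrization.
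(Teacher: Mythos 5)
Your proof is correct, but it takes a genuinely different route from the paper. You argue entirely at the level of Euler--Lagrange operators: setting $\Phi=g_{ij}\dot x^i\dot x^j$, you show that $L$-extremality of a regular curve is equivalent to $\frac{d}{ds}\partial_{\dot x^k}\Phi-\partial_{x^k}\Phi=\frac{\dot\Phi}{2\Phi}\,\partial_{\dot x^k}\Phi$, i.e.\ the pregeodesic equation with correction term $\lambda(s)\dot x^k$, $\lambda=\tfrac12\frac{d}{ds}\ln\Phi$; you isolate the constant-speed lemma $\dot\Phi=0$ along $E$-extremals to get the forward implication, and you close the converse with an explicit non-affinely reparametrized straight line in flat space. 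The paper does none of this computation: it treats the theorem as an assembly of cited results --- Fock's derivation that extremizing $\int\sqrt{2F}\,ds$ forces $\frac{dF}{ds}=0$ (constant speed) and hence the geodesic equation, together with Prasolov's local-minimality theorems --- and it supports the asymmetry between $E$ and $L$ not by a counterexample but by the Cauchy--Schwarz inequality $L^{2}(\gamma)\le E(\gamma)$ (with $f\equiv 1$, $g\equiv|\dot\gamma|$), equality holding precisely when $|\dot\gamma|$ is constant, i.e.\ when the parameter is proportional to arc length. Your approach buys a self-contained local proof that pins down the exact obstruction term and exhibits a concrete $L$-extremal violating the $E$-equation; the paper's route buys the global functional inequality $L^{2}\le E$, which speaks to minimizers rather than mere critical points and makes the special role of arc-length parametrization transparent without any Christoffel-symbol computation. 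One small caveat for your write-up: your manipulations divide by $\sqrt{\Phi}$ and $\Phi$, so the standing assumption that curves are regular ($\Phi>0$) should be stated at the outset rather than only implicitly in the counterexample.
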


There is a simple inequality $L^{2}(\gamma (s))\leq $ $E(\gamma (s))$ \cite%
{MishtFomCourse} in confirmation of the above theorem, which can be
immediately proved by means of the Schwartz inequality 
\begin{subequations}
\begin{equation}
\left( \int\limits_{0}^{1}f(s)g(s)ds\right) \leq \left(
\int\limits_{0}^{1}f^{2}(s)ds\right) \left(
\int\limits_{0}^{1}g^{2}(s)ds\right)  \label{Obz36}
\end{equation}%
for the following functions \cite{MishtFomCourse} 
\end{subequations}
\begin{equation}
f(s)\equiv 1\text{ \ \ }g(s)\equiv \mid \overset{.}{\gamma }(s)\mid \text{ \
\ .}  \label{Obz37}
\end{equation}%
It should be noted that the parameter $s$ in the book \cite{Fock} is defined
simply as a "space parameter along a curved line". In the above formulaes, $%
s $ is also a space parameter. But as noted in \cite{MishtFomCourse}, when $%
g(s)=const$, i.e. $\gamma (s)=const$ and the parameter $s$ is proportional
to the length of the arc of the curved line, an equality can be derived \ $%
L^{2}(\gamma (s))=E(\gamma (s))$.

\subsection{Physical interpretation of the obtained constancy of the
Euclidean distance and the length function as a result of the applied
variational formalism}

The constancy of the length function $L_{1}=\sqrt[.]{\left( \overset{.}{x}%
\right) ^{2}+\left( \overset{.}{y}\right) ^{2}+\left( \overset{.}{z}\right)
^{2}}$ and of $R_{AB}(x,y,z)$ is non-acceptable from a physical point of
view, because the signal trajectory should not be a straight line due to the
General Relativity effects of curving the trajectory. Of course, for the
moment we have minimized only the first term $\frac{1}{c}\int\limits_{path}%
\sqrt{\left( \overset{.}{x}\right) ^{2}+\left( \overset{.}{y}\right)
^{2}+\left( \overset{.}{z}\right) ^{2}}ds$ in the action functional (\ref%
{Obz19}) for the generalized Shapiro delay formulae. Due to this reason it
is necessary to investigate also the second term in (\ref{Obz17}) and (\ref%
{Obz19}). \ 

However, investigation of the second term will not give an answer to some
fundamental questions, which naturally appear, concerning the Shapiro delay
formulae (\ref{AA19}). On purely physical considerations it is a very crude
approximation to assume that there will exist a (even not so small)
space-time area, where the light (or radio or laser) ray will move along a
straight line (the first term $\frac{R_{AB}}{c}$ in (\ref{AA19})) and so, it
will not experience any influence from the gravitational field. This
statement follows of course from the Equivalence Principle in General
Relativity Theory. However, the Equivalence Principle is local in its
essence, while in the original derivation of the Shapiro delay formulae (\ref%
{AA19}) the distance $R_{AB}$ is not small. The approximation, from a
fundamental point of view is that the gravitational field causes only a
small deviation from the straight line trajectory of the signal. In the
papers \cite{Bog3} and \cite{Bog4} \ a formulae was obtained for the square
of the geodesic distance $\widetilde{R}_{AB}^{2}$ as compared to the square
of the Euclidean distance $R_{AB}^{2}$ (see also formulae (163) in the
review \cite{Bog2})%
\begin{equation*}
\widetilde{R}_{AB}^{2}-R_{AB}^{2}=-\frac{1}{2}%
(a_{1}^{2}+a_{2}^{2})+e_{1}e_{2}a_{1}a_{2}
\end{equation*}%
\begin{equation}
-\frac{1}{4}(a_{1}^{2}e_{1}^{2}+a_{2}^{2}e_{2}^{2})+2a_{1}a_{2}\sqrt{%
(1-e_{1}^{2})(1-e_{2}^{2})}\text{ \ .}  \label{Obz37A1}
\end{equation}%
Although only for the partial case of two satellites on two plane elliptical
orbits, in \cite{Bog3} and \cite{Bog4} it was proved that the right - hand
side of (\ref{Obz37A1}) is positive, consequently $\widetilde{R}%
_{AB}^{2}>R_{AB}^{2}$ and $\widetilde{R}_{AB}^{2}$ really represents the
geodesic distance, travelled by the signal. What is also interesting is that
the last $4$ numerical terms depend only on constant parameters $%
e_{1},e_{2},a_{1},a_{2}$ and not on the dynamical parameters - the eccentric
anomaly angles $E_{1}$ and $E_{2}$. This means that these terms describe a
deviation from the straight line and thus are not related to the process of
"curving" the trajectory, which will be described by the dynamical
parameters $E_{1}$ and $E_{2}$ (if this was the case)

If not in the infinitesimal sense and according to formulae (\ref{Obz9B}) $%
l(s)=\int\limits_{s_{0}}^{s}\mid \overset{.}{\overrightarrow{\mathbf{R}}}%
_{AB}(s)\mid ds$, the distance along the curve $\overset{.}{\overrightarrow{%
\mathbf{R}}}_{AB}(s)=(\overset{.}{x}(s),\overset{.}{y}(s),\overset{.}{z}(s))$
will be greater than the distance $R_{AB}$ between the points on the
straight line. So the "non-local" character of the action of the
gravitational field will be more correctly "approximated" by the term $\frac{%
1}{c}\int\limits_{path}\sqrt{\left( \overset{.}{x}\right) ^{2}+\left( 
\overset{.}{y}\right) ^{2}+\left( \overset{.}{z}\right) ^{2}}ds$ rather than
the simple term $\frac{R_{AB}}{c}$. In support of this assertion, the
following citation from the monograph by Fock \cite{Fock} can be reminded:
"The essence of the Equivalence Principle is that by means of introducing a
locally-geodesic (freely-falling) reference system, the Galilean space-time
is restored in an infinitely small space. However, in no way this enables
one to make any conclusions how to discern acceleration and gravitation in a
finite space-time area."

It has been also asserted by Fock \cite{Fock} that: "In any non-local
interpretation the approximate equivalence of the gravitational field and
the acceleration is restricted. This "equivalence" can take place only for
weak and homogeneous gravitational fields and also for slow motions." The
reason is that the formulae (\ref{Obz7}) $dT\approx \frac{1}{c}%
\int\limits_{path}\sqrt{\delta _{ij}dx^{i}dx^{j}}+\frac{1}{c^{3}}%
\int\limits_{path}2U\sqrt{\delta _{ij}dx^{i}dx^{j}}$ , from which we started
for the derivation of the modified Shapiro formulae, in its original form
has been derived under the assumption $\frac{U}{c^{2}}\ll 1$. It is
equivalent to the following two inequalities \cite{Fock} \ 
\begin{equation}
U\ll c^{2}\text{ \ \ ,}  \label{Obz38}
\end{equation}%
\begin{equation}
\left( \frac{dx}{dt}\right) ^{2}+\left( \frac{dy}{dt}\right) ^{2}+\left( 
\frac{dz}{dt}\right) ^{2}=v^{2}\ll c^{2}\text{ \ \ \ .}  \label{Obz39}
\end{equation}%
Under the fulfillment of these inequalities, from the original metric (in
fact, this is the metric, used in this paper, from which the null cone
equation (\ref{DOP25}) was derived) 
\begin{equation}
ds^{2}=(c^{2}-2U)\left( dt\right) ^{2}-(1+\frac{2U}{c^{2}}%
)((dx)^{2}+(dy)^{2}+(dz)^{2})\text{ \ \ ,}  \label{Obz40}
\end{equation}%
only the approximate metric 
\begin{equation}
ds^{2}=(c^{2}-2U)\left( dt\right) ^{2}-(1+\frac{2U}{c^{2}})\left(
(dx)^{2}+(dy)^{2}+(dz)^{2}\right)  \label{Obz41}
\end{equation}%
can be derived, and this is the mathematical expression of the
approximation, pointed out by Fock \cite{Fock}. In our notations, instead of
"$t$" we have written the propagation time $T$. Now let us see whether it
will be correct to keep the physical meaning of (\ref{Obz39}) as the
Newtonian velocity. If this is so, (\ref{Obz39}) can be written as $v^{2}=%
\left[ \left( \overset{.}{x}\right) ^{2}+\left( \overset{.}{y}\right)
^{2}+\left( \overset{.}{z}\right) ^{2}\right] (\frac{ds}{dT})^{2}\ll c^{2}$
, or 
\begin{equation}
\left[ \left( \overset{.}{x}\right) ^{2}+\left( \overset{.}{y}\right)
^{2}+\left( \overset{.}{z}\right) ^{2}\right] \ll c^{2}\left( \frac{dT}{ds}%
\right) ^{2}\text{ \ \ \ .}  \label{Obz42}
\end{equation}%
But $\frac{dT}{ds}$ is the function $F(x,y,z,\overset{.}{x},\overset{.}{y},%
\overset{.}{z})=\frac{1}{c}\sqrt{\left( \overset{.}{x}\right) ^{2}+\left( 
\overset{.}{y}\right) ^{2}+\left( \overset{.}{z}\right) ^{2}}+\frac{%
2G_{\oplus }M_{\oplus }}{c^{3}}\frac{\sqrt{\left( \overset{.}{x}\right)
^{2}+\left( \overset{.}{y}\right) ^{2}+\left( \overset{.}{z}\right) ^{2}}}{%
\left[ x^{2}+y^{2}+z^{2}\right] }\left[ x\overset{.}{x}+y\overset{.}{y}+z%
\overset{.}{z}\right] $ in (\ref{Obz20}) and neglecting the second term in
the expansion for $c^{2}F^{2}$, which is of order $O(c^{-4})$, we obtain 
\begin{equation}
1\ll 1+\frac{G_{\oplus }M_{\oplus }}{c^{2}}\frac{d}{ds}\ln
(x^{2}+y^{2}+z^{2})\text{ \ \ \ \ \ .}  \label{Obz43}
\end{equation}%
This is however not realistic, because the second term should be a large
number. Since this cannot be the case, since numerous calculations have
proved that only because of the coefficient ,$\frac{G_{\oplus }M_{\oplus }}{%
c^{3}}\approx 10^{-9}$ (see the paper \cite{Bog1} and the cited papers) it
turns out that the definition (\ref{Obz39}) is incorrect. The reason is that
the velocity should be defined only as a tangent vector to the curve, and
then the locality condition of Fock should be defined as 
\begin{equation}
\left( \overset{.}{x}\right) ^{2}+\left( \overset{.}{y}\right) ^{2}+\left( 
\overset{.}{z}\right) ^{2}\ll c\text{ \ \ .}  \label{Obz44}
\end{equation}%
Then if the second term in (\ref{Obz20}) is much less than the first one and
inequality (\ref{Obz44}) is fulfilled, then from 
\begin{equation}
\frac{2G_{\oplus }M_{\oplus }}{c^{3}}\frac{\sqrt{\left( \overset{.}{x}%
\right) ^{2}+\left( \overset{.}{y}\right) ^{2}+\left( \overset{.}{z}\right)
^{2}}}{\left[ x^{2}+y^{2}+z^{2}\right] }\left[ x\overset{.}{x}+y\overset{.}{y%
}+z\overset{.}{z}\right] \ll \frac{1}{c}\sqrt{\left( \overset{.}{x}\right)
^{2}+\left( \overset{.}{y}\right) ^{2}+\left( \overset{.}{z}\right) ^{2}}
\label{Obz45}
\end{equation}%
it will follow that $\frac{2G_{\oplus }M_{\oplus }}{c^{3}}\frac{\sqrt{\left( 
\overset{.}{x}\right) ^{2}+\left( \overset{.}{y}\right) ^{2}+\left( \overset{%
.}{z}\right) ^{2}}}{\left[ x^{2}+y^{2}+z^{2}\right] }\left[ x\overset{.}{x}+y%
\overset{.}{y}+z\overset{.}{z}\right] \ll 1$ or 
\begin{equation}
\frac{G_{\oplus }M_{\oplus }}{c^{3}}\frac{d}{ds}\ln (x^{2}+y^{2}+z^{2})\ll 
\frac{1}{c}\text{ \ \ .}  \label{Obz46}
\end{equation}%
This is much more realistic and compatible with the fact that the second
Shapiro delay term should be a very small one. 

\section{Differential geometry notions about manifolds with boundary - basic
definitions and related notions and new results about the boundaries of the
manifold (\protect\ref{K1})-(\protect\ref{K3}) of the celestial mechanics
transformations}

The application of the Shapiro formulae in this paper is based on the
transformations (\ref{K1}) - (\ref{K3}) for the space coordinates $x,y,z$,
depending on the $6$ Kepler parameters. These transformations have not been
investigated from a differential geometry point of view, and they have
interesting geometrical properties, depending on the fact which variables
will be considered dynamical ones and which will be kept constant in the
process of signal propagation from one satellite to another. For example, if
only the two true anomaly angles $f_{1}$ and $f_{2}$ are the dynamical
parameters, we will have the mapping $F_{1}:(f_{1},f_{2})\rightarrow (x,y,z)$
(again, here we denote $x=x_{2}-x_{1},y=y_{2}-y_{1},z=z_{2}-z_{1}$), which
is a mapping from a two-dimensional manifold to a three-dimensional
manifold. We can define, however, also the mapping $F_{2}:$ $(f_{1},\Omega
_{1},\omega _{1},i_{1},a_{1},e_{1};f_{2},\Omega _{2},\omega
_{2},i_{2},a_{2},e_{2})\rightarrow (x,y,z)$ or $F_{2}:$ $(f_{1},\Omega
_{1},\omega _{1},i_{1},a_{1},e_{1};f_{2},\Omega _{2},\omega
_{2},i_{2},a_{2},e_{2})\rightarrow (x_{1},y_{1},z_{1};x_{2},y_{2},z_{2})$%
-these are mappings from a $12-$ dimensional manifold to a three-dimensional
manifold $(x,y,z)$ or to a six-dimensional manifold. So in the first case
the mapping $F_{1}$ is from a lower-dimensional ($2-$dimensional manifold)
to a higher-dimensional manifold, while in the second case of the mapping $%
F_{2}$ the mapping is from a higher-dimensional to a lower-dimensional
manifold. These different kinds of mappings are related to the notions of
immersions, submersions, embeddings, which will be clarified in the next
sections and are related to the Stokes and the Gauss - Ostrogradsky theorems.

\subsection{Basic definitions about the boundary of a manifold and some
relations to other definitions}

One of the basic definitions is given in the textbook \cite{MishtFomCourse}:

\begin{definition}
A manifold $M$ with a boundary is characterized by the property that there
exists an atlas $\{U_{\alpha }\}$ with local coordinates $(x_{\alpha
}^{1},x_{\alpha }^{2},......,x_{\alpha }^{n})$ so that for each map the
strict inequality $x_{\alpha }^{n}>0$ is fulfilled for the inner points and
the equality $x_{\alpha }^{n}=0$ is fulfilled at the boundary points. The
set of boundary points $\partial M$ is a smooth manifold with a dimension $%
(n-1)$ (i.e. one dimension less than the dimension of the original
manifold). \ 

In the more strict mathematical sense, a manifold with a boundary is a
metric space $M$, for which an atlas of charts $\{U_{\alpha }\}$ and
coordinate homeomorphisms $\varphi _{\alpha }:U_{\alpha }\rightarrow
V_{\alpha }\subset 
\mathbb{R}
_{+}^{n}$ ($V_{\alpha }$ is an open set in $%
\mathbb{R}
_{+}^{n}$ and $%
\mathbb{R}
_{+}^{n}$ is the semi-space, denoted as $x^{n}\geq 0$, $%
(x^{1},x^{2},....x^{n})\in 
\mathbb{R}
^{n}$) exist such that  the functions for coordinate changes 
\begin{equation}
\varphi _{\beta }\varphi _{\alpha }^{-1}:V_{\alpha \beta }=\varphi _{\alpha
}(U_{\alpha }\cap U_{\beta })\rightarrow V_{\beta \alpha }=\varphi _{\beta
}(U_{\alpha }\cap U_{\beta })  \label{Obz47}
\end{equation}%
are smooth functions of class $C^{r}$, $r=1,2,......\infty $.
\end{definition}

This definition however does not reveal the relation of the boundary of the
manifold with the inverse function theorem as a basic requirement for the
existence of the boundary. This means that if a vector function $%
f:R^{n}\rightarrow R^{k}$ is defined, then it should satisfy the inverse
function theorem as a necessary requirement (condition) for the existence of
a boundary. \cite{MishtFomCourse} Namely, if at all points $P\in
M=\{f(P)=0\} $, the differential $df$ has a maximal rank.

This means also that if the system of $n-$equations is investigated 
\begin{equation}
f^{1}(x^{1},x^{2}...x^{n})=0\text{ }..\text{\ , \ }%
f^{k-1}(x^{1},x^{2}...x^{n})=0\text{ , ........}f^{k}(x^{1},x^{2}...x^{n})=0
\label{Obz48}
\end{equation}%
and $\overline{M}$ is the set of solutions of this system, and if also at
each point $P\subset $. $\overline{M}$ the rank of the determinant $%
\parallel \frac{\partial f^{i}}{\partial x^{s}}\parallel $, $1<i\leq k-1$, $%
1\leq s\leq n$ is equal to $(k-1)$, then $\overline{M}$ is a manifold with a
boundary.

Now let us give a simple example about a manifold with a boundary - this is
the closed ball $D^{m}\subset R^{m}$, defined by the inequality $%
(x^{1})^{2}+(x^{2})^{2}+(x^{m})^{2}\leq 1$. The $m-$cube $[0,1]^{m}$ is
homeomorphic to $D^{m}$ and is thus also a manifold with a boundary.

\subsubsection{Definition for a boundary of a manifold in terms of
semi-spaces}

The definition in another monographs, for example in \cite{Boothby} implies
a slightly different formulation.

\begin{definition}
A manifold with a boundary is a Hausdorff space $M$ with a countable basis
of open sets and the subspace of $%
\mathbb{R}
^{n}$
\end{definition}

\begin{equation}
H^{n}=\left\{ x=(x_{1},x_{2},.....x_{n})\in R^{n})\mid x^{n}\geq 0\right\}
\label{Obz49}
\end{equation}
with the relative topology of $R^{n}$ and 
\begin{equation}
\partial H^{n}=.\left\{ x\in H^{n}\mid x^{n}=0\right\}  \label{Obz50}
\end{equation}
denotes the boundary of $H^{n}$. The boundary $\partial H^{n}$ carries the
metric topology derived from the metric of $R^{n}$ and $\partial H^{n}$ is
homeomorphic to $R^{n-1}$ by the map 
\begin{equation}
(x_{1},x_{2},.....x_{n-1})\mapsto (x_{1},x_{2},.....,x_{n-1},0)\text{ \ .}
\label{Obz51}
\end{equation}
\ $\ \ $
The diffeomorphisms of open sets of $H^{n}$ take boundary points to
boundary points and interior points to interior points. This means also that
by means of the homeomorphisms $\varphi _{n}$ a point $p\subset M$ is
contained either 1. In an open set $U$ of $H^{n}-\partial H^{n}$ or 2. to an
open set $U$ of $H^{n}$ with  $\varphi (p)\subset \partial H^{n}$ (i.e. a
boundary point of $H^{n}$) The set of points in $H^{n}-\partial H^{n}$ is in
fact $\left\{ x=(x_{1},x_{2},.....x_{n})\in R^{n})\mid x^{n}\geq 0\right\} $
and in \cite{SchwarzPhys} it is called "a closed half-space". This is the
set of vectors $x\in R^{n}$, satisfying the positive vector multiplication $%
x.y\geq 0$, where $y\in R^{m}$ is a fixed non-zero vector. Now one can see
the close similarity of the definition for $H^{n}-\partial H^{n}$ or $H^{n}$
with $x^{n}\geq 0$ with the definition of convex spaces, which exist in
optimization problems \cite{Suharev}. In \cite{Suharev} sets of the kind $%
H_{p\beta }=\left\{ x\in R^{n}\mid <p.x>=\beta \right\} $ are called
hyperplanes and they cause the semi-spaces 
\begin{equation}
H_{p\beta }^{+}=\left\{ x\in R^{n}\mid <p.x>\text{ }\geq \beta \right\} 
\text{ \ \ , \ \ }H_{p\beta }^{--}=\left\{ x\in R^{n}\mid <p.x>\text{ }\leq
\beta \right\} \text{ \ \ . }  \label{Obz52}
\end{equation}
Optimization problems are investigated in mathematics and apart from gravity
theory, but in the previous sections it was demonstrated that they may
appear in problems, related to the optimal configuration for the propagation
of a signal between two moving satellites on different space-distributed
orbits, with account also of the General Relativity effect of curving the
trajectory of the signal.

\subsubsection{The notion of convexity in differential geometry and its
relation to the notion of a boundary}

It is very interesting that the notion of convexity appears not only in
optimization theory, but also in differential geometry and it turns out that
it is related also to the notion of a boundary. In the course of Thorpe \cite%
{Thorpe} on differential geometry the following definition is given for a
convex surface

An oriented $n-$dimensional hyper-surface $S$ in $R^{n+1}$is called convex
(or a convex one as a whole), if at each point $p\in S$ the hyper-surface is
contained in one of the following semi-spaces 
\begin{equation}
H_{p}^{+}=\left\{ q\in R^{n+1}\text{ \ \ }:(q-p).N(p)\geq 0\right\} \text{ \
\ \ \ ,}  \label{Obz53}
\end{equation}%
or 
\begin{equation}
H_{p}^{-}=\left\{ q\in R^{n+1}\text{ \ \ }:(q-p).N(p)\leq 0\right\} \text{ \
\ .}  \label{Obz54}
\end{equation}%
On the base of this definition in \cite{Thorpe}, another definition about a
strictly convex surface is given, which evidently has a relation to the
notion of a boundary, given for example by (\ref{Obz50}) $\partial
H^{n}=\left\{ x\in H^{n}\mid x^{n}=0\right\} $.

\begin{definition}
If the surface $S$ is convex at each point $p$ and $S\cap H_{p}=\{p\}$ at
each point $p\in S$, where 
\begin{equation}
H_{p}=\left\{ q\in R^{n+1}\text{ \ \ }:(q-p).N(p)=0\right\} \text{ \ \ \ \ ,}
\label{Obz55}
\end{equation}%
then the hyper-surface $S$ is called strictly convex.
\end{definition}

The vector $N(p)$ is an important characteristic of the surface and is
called the Gaussian mapping \cite{Thorpe}. This is a mapping $N:S\rightarrow
S^{n}$, which juxtaposes to each point $p\in S$ a point in $R^{n+1}$,
derived after the translation of the unit normal vector $p$ to the origin of
the coordinate system. The image in the Gaussian mapping $N(s)=\left\{ q\in
S^{n}:q=N(p)\right\} $ is called also the spherical image of the oriented
surface.

By means of $N(p)$, two other characteristics are defined \cite{Thorpe}.

\begin{definition}
The Weingarten mapping $L_{p}:S_{p}\rightarrow S_{p}$ is determined as $%
L_{p}(v):=-\nabla _{v}N$, $\overrightarrow{\mathbf{v}}\in S_{p}$. It
measures the turning angle of the normal to $S$ at the point $p$, moving at
different speeds $v$ and thus $L_{p}$ is a measure of the curvature of the
space $R^{n+1}$ at the point $p$.
\end{definition}

The second characteristics is the s.c. normal curvature of $S$ at the point $%
p$ in the direction of the vector $\overrightarrow{\mathbf{v}}$.

\begin{definition}
In the case of $\parallel \overrightarrow{\mathbf{v}}\parallel =1$, the
normal curvature of $S$ at the point $p$ in the direction of the vector $%
\overrightarrow{\mathbf{v}}$ \ is defined as the number $k(p):=%
\overrightarrow{L_{p}}(v).\overrightarrow{\mathbf{v}}$ and it represents the
normal component of the acceleration at the point $p$, which is caused by
the curvature of $S$ in $R^{n+1}$. This means that if $k(\overrightarrow{%
\mathbf{v}})>0$, then the surface $S$ is "curving" in the direction of $%
\overrightarrow{N}$ and in the direction of $\overrightarrow{\mathbf{v}}$,
if $k(\overrightarrow{\mathbf{v}})<0$ the surface $S$ is curving in the
backward direction of $\overrightarrow{N}$ and in the direction of $%
\overrightarrow{\mathbf{v}}$.
\end{definition}

\subsection{New result - calculation of the boundary manifold for the $3D$
manifold of the celestial mechanics transformations (\protect\ref{K1})-(%
\protect\ref{K3})}

Further the Stokes theorem will be used, which for the concretely
investigated case will relate an integral on a $3D$ space $M$ to an integral
on the two-dimensional boundary $\partial M$. Since the boundary manifold is
one dimension less than the original $3D$ manifold, it will be in terms of
the variables $(x,y)$ (the first boundary, when $z=0$), or in terms of the
variables $(x,z)$ (the second boundary, when $y=0$) or in terms of the
variables $(y,z)$ (the third boundary, when $x=0$).

\subsubsection{The first boundary in terms of $(x,y)$ and $z=0$}

In order to find the boundary, one should combine the two equations (\ref{K1}%
) for $x$ and (\ref{K2}) for $y$ so that to obtain a relation between the
coordinates. Let us multiply both sides of (\ref{K1}) with $\cos \Omega $
and of equation (\ref{K2}) with $\sin \Omega $ and sum up the two equations.
The obtained equation will be

\begin{equation}
x\cos \Omega +y\sin \Omega =r\cos (\omega +f)\text{ \ \ \ \ .}  \label{Obz56}
\end{equation}%
Similar equation may be obtained if (\ref{K1}) for $x$ is multiplied by $%
\sin \Omega $ and (\ref{K2}) for $y$ by $(-\cos \Omega )$ and the two
equations are summed up. The resulting equation will be 
\begin{equation}
x\sin \Omega -y\cos \Omega =-r\sin (\omega +f)\cos i\text{ \ \ . \ \ \ \ }
\label{Obz57}
\end{equation}%
Next, according to the definition $\partial H^{n}=\left\{ x\in H^{n}\mid
x^{n}=0\right\} $ in (\ref{Obz50}), we have to find when the condition 
\begin{equation}
z=r\sin (\omega +f)\sin i=0  \label{Obz58}
\end{equation}%
is fulfilled. The first case is when 
\begin{equation}
\sin i=0\text{ \ }\Rightarrow i=\pi k=\pi (m+1)\text{ \ }  \label{Obz59}
\end{equation}%
and $\sin (\omega +f)\neq 0$ is also fulfilled. Then when 
\begin{equation}
k=0,2,4.......\Rightarrow k=m+1\text{ }\Rightarrow \text{\ \ }%
m=1,3..........k\text{ - even; }m\text{-odd}  \label{Obz60}
\end{equation}%
we have for 
\begin{equation}
\cos i=\cos (\pi (m+1))=(-1)^{m+1}\text{ \ \ \ .}  \label{Obz61}
\end{equation}%
Indeed, for odd numbers $m=1,3.....(2p+1)$, $\cos (\pi
(m+1))=(-1)^{m+1}=(-1)^{2p+2}=1$, for even numbers $m=2,4,6....2p$ one has $%
\cos (\pi (m+1))=(-1)^{m+1}=(-1)^{2p+1}=-1$, which can be checked. Note for $%
k=0$ one cannot define the corresponding value for $m$, then $\cos (\pi
.0)=\cos 0=1$. However, instead of \ $i=\pi k=\pi (m+1)$ in (\ref{Obz59})
one can also define $i=\pi k=\pi (m-1)$ - then $k=0$ will correspond to the
value $m=1$.

The other condition $\sin (\omega +f)\neq 0$ will be fulfilled when
similarly to (\ref{Obz59}) one has 
\begin{equation}
\sin (\omega +f)\neq 0\text{ \ }\Rightarrow \omega +f\neq \pi k=\pi (m\pm 1)%
\text{ }\ \ \ \text{.\ }  \label{Obz61A1}
\end{equation}

Next, dividing the two equations (\ref{Obz56}) and (\ref{Obz57}) and making
use of (\ref{Obz61}), one can derive the following simple equation 
\begin{equation}
\frac{x\cos \Omega +y\sin \Omega }{x\sin \Omega -y\cos \Omega }=-\frac{%
tg(\omega +f)}{(-1)^{m+1}}\text{ \ \ \ .}  \label{Obz62}
\end{equation}%
For odd numbers $m=1,3.....(2p+1)$ after some transformations and also using
the simple trigonometric equalities 
\begin{equation}
\cos (\omega +f+\Omega )=\cos (\omega +f)\cos \Omega -\sin (\omega +f)\sin
\Omega \text{ \ \ \ ,}  \label{Obz63}
\end{equation}%
\begin{equation}
\sin (\omega +f+\Omega )=\sin (\omega +f)\cos \Omega +\cos (\omega +f)\sin
\Omega \text{ \ \ \ ,}  \label{Obz64}
\end{equation}%
the equation (\ref{Obz62}) can be rewritten as 
\begin{equation}
x\frac{\sin (\omega +f+\Omega )}{\cos (\omega +f)\cos \Omega }-y\frac{\cos
(\omega +f+\Omega )}{\cos (\omega +f)\cos \Omega }=0\text{ \ \ .}
\label{Obz65}
\end{equation}%
Provided that the inequalities $\cos (\omega +f)\neq 0$ and also $\cos
\Omega \neq 0$ the last expression represents the equation of the straight
line \ 
\begin{equation}
x\sin (\omega +f+\Omega )-y\cos (\omega +f+\Omega )=0\text{ \ \ .}
\label{Obz66}
\end{equation}%
The two inequalities can be fulfilled when $\omega +f\neq (2q-1)\frac{\pi }{2%
}$, $\Omega \neq (2p-1)\frac{\pi }{2}$. The numbers $p$ and $q$ can be odd
or even, i.e. $p,q=0,1,2,.........$. It is easy to check that $(2q-1)\frac{%
\pi }{2}=0,-\frac{\pi }{2},\frac{\pi }{2},\frac{3\pi }{2},\frac{5\pi }{2}%
.......$for $q=0,1,2,3.......$ and also $(2p-1)\frac{\pi }{2}=0,-\frac{\pi }{%
2},\frac{\pi }{2},\frac{3\pi }{2},\frac{5\pi }{2}.......$for $p=0,1,2,3$,
correspondingly when $\omega +f\neq (2q-1)\frac{\pi }{2}$ and $\Omega \neq
(2p-1)\frac{\pi }{2}$, $\cos (\omega +f)\neq 0$ and $\cos \Omega \neq 0$
will be fulfilled.

Similarly, for even numbers $m=2,4,6....2p......$it can easily be proved
that the analogous to (\ref{Obz66}) equation will be 
\begin{equation}
x\sin (\omega +f-\Omega )+y\cos (\omega +f-\Omega )=0\text{ \ \ ,}
\label{Obz67}
\end{equation}%
where the analogous to (\ref{Obz63}) trigonometric equality 
\begin{equation}
\cos (\omega +f-\Omega )=\cos (\omega +f)\cos \Omega +\sin (\omega +f)\sin
\Omega \text{ \ \ \ ,}  \label{Obz68}
\end{equation}%
has been used. The second case when $z=r\sin (\omega +f)\sin i=0$, i.e. $%
\sin (\omega +f)=0$ should be considered in the same way. The calculations
shall not be performed here. Note also that the number $m$ can be either odd
or even, consequently either equation (\ref{Obz66}) or equation (\ref{Obz67}%
) is fulfilled, but not both, which means that the corresponding lines as
boundaries should not intersect.

\subsubsection{The second boundary in terms of $(x,z)$ and $y=0$}

It might be expected that the calculation for the second boundary $(x,z)$ is
of the same type as for the previous case. Surprisingly, the calculation
turns out to have some peculiarities and the derived expressions will be
more complicated.

Let us multiply equation (\ref{K1}) for $x$ by $(\sin i)$ and (\ref{K3}) for 
$z$ by $(\cos i\sin \Omega )$ and then sum up the two equations. The
obtained equation will be 
\begin{equation}
x\sin i+\cos i\left( \sin \Omega \right) \text{ }z=r\cos \Omega \cos (\omega
+f)\sin i\text{ \ \ \ .}  \label{Obz69}
\end{equation}%
Another equation can be obtained if (\ref{K1}) is multiplied by $(\sin
(\omega +f)\sin i)$ and (\ref{K3}) is multiplied by \ $(-\cos \Omega \cos
(\omega +f))$ and the resulting equations are summed up 
\begin{equation*}
\sin (\omega +f)\sin i\text{ }x-\cos \Omega \cos (\omega +f)z
\end{equation*}%
\begin{equation}
=r\sin \Omega \sin ^{2}(\omega +f)\cos I\sin i\text{ \ \ .}  \label{Obz71}
\end{equation}%
After dividing the two equations, the following equation is obtained 
\begin{equation}
\frac{x\sin i+\cos i\sin \Omega \text{ }z}{\sin (\omega +f)\sin i\text{ }%
x-\cos (\omega +f)\cos \Omega \text{ }z}=-\frac{\cot g\Omega \cot g(\omega
+f)}{\cos i\sin (\omega +f)}\text{ \ \ .}  \label{Obz72}
\end{equation}%
Now we set up $y=0$ in expression (\ref{K2}) for $y$, express $\cot g\Omega $
as%
\begin{equation}
\cot g\Omega =-\frac{1}{tg(\omega +f)\cos i}  \label{Obz73}
\end{equation}%
and substitute it in the preceding equation (\ref{Obz72}) 
\begin{equation*}
x.\sin i\frac{\left[ \sin ^{2}(\omega +f)\left( 1+\cos ^{2}i\right) -1\right]
}{\sin ^{2}(\omega +f)\cos ^{2}i}
\end{equation*}%
\begin{equation}
+z.\frac{\cos \Omega \left[ 1-\cos ^{4}i\text{ }tg^{4}(\omega +f)\right] }{%
\cos ^{2}i\text{ }tg^{3}(\omega +f)}=0\text{ \ \ .}  \label{Obz74}
\end{equation}%
This is again an equation for a straight line, but in terms of the variables 
$x$ and $z$ and with complicated coefficient functions in terms of
trigonometric functions. It can be proved that they contain functions with $%
\cos 2(\omega +f)$ and $\cos 2i.\cos 4i$. \bigskip 

\subsubsection{The third boundary in terms of $(y,z)$ and $x=0$}

The equation (\ref{K2}) for $y$ is multiplied by $\left[ \sin (\omega
+f)\sin i\right] $, equation (\ref{K3}) for $z$ is multiplied by $\left[
-\sin \Omega \cos (\omega +f)\right] $ and after that the obtained equations are
summed up, the following equation is obtained 
\begin{equation*}
y\sin (\omega +f)\sin i-z\text{ }\cos (\omega +f)\sin \Omega
\end{equation*}%
\begin{equation}
=r\cos \Omega \sin ^{2}(\omega +f)\sin i\cos i\text{ \ .}  \label{Obz76}
\end{equation}%
The second equation is obtained after (\ref{K2}) for $y$ is multiplied by $%
\left[ \sin i\right] $, (\ref{K3}) for $z$ is multiplied by $\left[ -\cos
i\cos \Omega \right] $ and after that the obtained equations are summed up. The
obtained equation is 
\begin{equation}
y\sin i-z\text{ }\cos i\cos \Omega =r\sin \Omega \sin i\cos (\omega +f)\text{
\ \ .}  \label{Obz77}
\end{equation}%
Now setting up equation (\ref{K1}) for $x$ to zero, i.e. $x=0$, one can
obtain 
\begin{equation}
\cos i=\cot g\Omega \cot g(\omega +f)\text{ \ \ \ .}  \label{Obz78}
\end{equation}%
One can calculate also $\sin i=\sqrt{1-\cos ^{2}i}$ by using the
trigonometric identities (\ref{Obz64}) for $\cos (\omega +f+\Omega )$ and (%
\ref{Obz68}) for $\cos (\omega +f-\Omega )$. The result is 
\begin{equation}
\sin i=\frac{\sqrt{-\cos (\omega +f+\Omega )\cos (\omega +f-\Omega )}}{\sin
\Omega \sin (\omega +f)}\text{ \ \ \ \ \ \ .}  \label{Obz79}
\end{equation}%
After dividing the equations (\ref{Obz76}) and (\ref{Obz77}) and
substituting the obtained expressions (\ref{Obz78}) for $\cos i$ and (\ref%
{Obz79}) for $\sin i$, after some transformations the following equation is
obtained 
\begin{equation}
\frac{\sin (\omega +f)\left[ \sin ^{2}\Omega -\cos ^{2}\Omega \right] }{\sin
^{2}\Omega }\left[ Ny-z\cos (\omega +f)\right] =0\text{ \ ,}  \label{Obz80}
\end{equation}%
where we have denoted 
\begin{equation}
N(\omega ,f,\Omega )\equiv \sqrt{-\cos (\omega +f+\Omega )\cos (\omega
+f-\Omega )}\text{ \ \ \ \ \ .}  \label{Obz81}
\end{equation}%
In order to obtain the equation for the straight line 
\begin{equation}
Ny-z\cos (\omega +f)=0  \label{Obz81A}
\end{equation}%
in $(y,z)$ coordinates, one should have also the fulfillment of the three
conditions, so that the straight line equation (\ref{Obz81A}), obtained from
(\ref{Obz80}) is correct 
\begin{equation}
\sin (\omega +f)\neq 0\text{ \ \ \ , \ \ \ }\sin ^{2}\Omega \neq 0\text{, \ }%
\sin ^{2}\Omega -\cos ^{2}\Omega \neq 0\text{ \ \ .}  \label{Obz82}
\end{equation}%
The fulfillment of the first condition was established in (\ref{Obz61A1}),
the second condition is investigated analogously, the third condition means
that 
\begin{equation}
\sin ^{2}\Omega -\cos ^{2}\Omega =2\sin ^{2}\Omega -1\neq 0\text{ \ }%
\Rightarrow \sin \Omega \neq \pm \frac{\sqrt{2}}{2}\text{, i.e. }\Omega \neq
45^{\circ }\text{ \ .}  \label{Obz83}
\end{equation}%
Much more nontrivial is the problem whether the expression $N(\omega
,f,\Omega )$ in (\ref{Obz81}) should be real-valued or complex. On physical
grounds, for the moment one cannot assert anything about the real-valuedness
or imaginary value of $N(\omega ,f,\Omega )$. However, on purely geometrical
grounds, it is possible for $N(\omega ,f,\Omega )$ to be imaginary - this
can happen when both $\cos (\omega +f+\Omega )$ and $\cos (\omega +f-\Omega
) $ are positive or when they are both negative. Then there will be the
imaginary straight line $i\overline{N}y-z\cos (\omega +f)=0$ - it will be
one of the imaginary intersecting lines (the other is $i\overline{N}y+z\cos
(\omega +f)=0$) on the cone $\overline{N}^{2}y^{2}+z^{2}\cos ^{2}(\omega
+f)=0$. A theorem, proved in the monograph \cite{Fedorchuk} states that
there exist exactly $5$ classes of projective equivalences among the lines of
second order, namely:

1. a real-valued oval $x_{1}^{2}+x_{2}^{2}-x_{3}^{2}=0$ ;

2. a complex oval $x_{1}^{2}+x_{2}^{2}+x_{3}^{2}=0$ ;

3. intersecting straight lines $x_{1}^{2}-x_{2}^{2}=0$ ;

4. imaginary intersecting straight lines $x_{1}^{2}+x_{2}^{2}=0$ ;

5.coinciding straight lines $x_{1}^{2}=0$.

\section{Differential geometry notions about immersions, embeddings,
submersions, critical and regular points and Morse functions - a review of
the different definitions and some new applications in the current problem}

We have already defined the notion of a boundary of a manifold, which
represents in fact a submanifold with dimension $(n-1)$ of the original $n-$%
dimensional manifold. As clarified in the preceding chapters, a new
variational formalism will be applied, based on the Stokes and
Gauss-Ostrogradsky theorems. These theorems relate the integration on a $n-$%
manifold to an integration on the boundary. But in fact, the notion of the
boundary manifold, concrete examples about which were given, are closely
related with some other notions in differential geometry about immersions,
submersions, embeddings, also critical and regular points and Morse
functions.

\subsection{Immersions and the sub-manifold property}

Before giving the strict definitions, let us first define what is the rank
(denoted by $rank\varphi $) of the mapping $\varphi :N\rightarrow M$ between
two manifolds-this is simply the dimension of the manifold $N$ at all its
points. The essence of this assertion is given in the following

\begin{definition}
\cite{Sternberg} Let the mapping $\varphi :M_{1}\rightarrow M_{2}$ is a
differentiable mapping with $(U,h)$-coordinate charts in the neighborhood of
the point $\varphi (p)$, where $p\in M_{1}$. The differentiable functions on 
$M_{1}$ are denoted by $y^{i}\circ \varphi $ . Let $h$ is a local chart in $%
M_{2}$, $g$ is a local chart in $M_{1}$ and in terms of the coordinate chart 
$(V,g=(x^{1},x^{2},.....x^{n_{1}}))$ the mapping $\varphi \equiv h\circ
\varphi \circ g^{-1}$ is given by the transformation 
\begin{equation}
y^{i}=\varphi ^{i}(x^{1},x^{2},.....x^{n_{1}})\text{ \ \ \ \ \ }%
(i=1,2,.......n_{2})\text{ \ \ .}  \label{Obz84}
\end{equation}%
Then the rank of the mapping $\varphi $ is equal to $n_{1}$, if the rank of
the Jacobian $(n_{1}\times n_{2})$ matrix $\left( \frac{\partial \varphi ^{i}%
}{\partial x^{j}}\right) _{(x^{1},x^{2},.....x^{n_{1}})}$ is equal to $n_{1}$%
.Such a case, when at all points of the manifold $M_{1}$ the $rank\varphi
=n_{1}$ (the dimension of the manifold $M_{1}$) and when the dimension of
the manifold $M_{2}$ is $n_{2}\geq n_{1}$ is called an immersion. (see also 
\cite{Boothby}) \ The subset $\varphi (M_{1})=\overline{M}_{1}\subset M_{2}$
is called an immersed sub-manifold, but the peculiar fact is that it depends
on both manifolds $M_{1}$ and $\overline{M}_{1}$- $\overline{M}_{1}$ is not
even a subspace of $M_{2}$ (but, can be called a subset). Thus it is not
possible for the mapping $\varphi $ and its inverse one $\varphi ^{-1}$ to
be one-to-one, even in the local sense.
\end{definition}

Naturally the following question appears: how can the image subset $%
\overline{M}_{1}$ be endowed with the property to be a sub-manifold?

\begin{definition}
\cite{Boothby} The subset $\ \overline{M}_{1}$ can be a sub-manifold if it
is the image in $M_{2}$ of a one-to-one immersion and $\varphi :M_{1}\mapsto 
\overline{M}_{1}$ establishes an one-to-one correspondence between $M_{1}$
and the subset $\varphi (M_{1})=\overline{M}_{1}$, i.e. the mapping $\varphi
:M_{1}\rightarrow \overline{M}_{1}$ is a diffeomorphism. Now $\overline{M}%
_{1}$ is called an immersed sub-manifold \ $\overline{M}_{1}\in M_{2}$ and
is endowed with the topology and $C^{\infty }$ (infinitely differentiable)
structure of the manifold $M_{2}$.
\end{definition}

Let us remind also that a mapping $\varphi $ is diffeomorphism if $\varphi $
and its inverse one $\varphi ^{-1}$ are smooth \cite{ArnoldOrdDiff}.

The sub-manifold property of $\overline{M}_{1}$ also is related to the
existence of preferred coordinates relative to $\overline{M}_{1}$ \cite%
{Boothby}. This means that each point $p\in \overline{M}_{1}$ has a
coordinate neighborhood $U,\varphi $ on $M_{2}$ such that $1$. $\varphi
(p)=(0,0......0)$; $2$. $\varphi (U)=C_{\epsilon }^{m}(0)$ and $3$. $\varphi
(U\cap \overline{M}_{1})=\{x\in C_{\epsilon }^{m}(0)\mid
x^{n_{1}+1}=.........x^{n_{2}}=0\}$. The last equality means that the
immersion $\varphi :M_{1}\rightarrow M_{2}$ on $V=$ $U\cap \overline{M}_{1}$
in terms of the preferred local coordinates will be given by 
\begin{equation}
(x^{1},x^{2},......x^{n_{1}})\rightarrow
(x^{1},x^{2},......x^{n_{1}},0,0,.......0)  \label{Obz85A1}
\end{equation}%
and the zeroes correspond to $x^{n_{1}+1}=.........x^{n_{2}}=0$.

Evidently, immersions and embeddings are related to the different notions of
diffeomorphism and homeomorphism. But in some sense they can also be related
- if $\varphi :M_{1}\mapsto M_{2}$ is an immersion, then each point $p\in
M_{1}$ has a neighborhood $U$ such that $\varphi \mid U$ (the restriction of
the mapping $\varphi $ on $U$) is an embedding of $U$ in $M_{2}$ \cite%
{Boothby}.

An example about an immersion is the mapping $F:R\rightarrow R^{3}:$ \ $%
F(t)=(\cos 2\pi t,\sin 2\pi t,t)$, representing the helix, lying on the unit
cylinder, whose axis is the $x_{3}-$axis in $R^{3}$ \cite{Boothby}. The
equation of the helix is $\cos ^{2}2\pi t+\sin ^{2}2\pi t=1$ \ \ , \ $%
x_{3}=t $. It can be checked that the rank of the Jacobian matrix $\frac{%
dF(t)}{dt}\mid _{t=1}$is equal to one. In some sense, this equation is a
simplified model of the equation $x^{2}+y^{2}+z^{2}=r^{2}=\left( \frac{%
a\left( 1-e^{2}\right) }{1+e\cos f}\right) ^{2}$, which can be derived from
the equations (\ref{K1})-(\ref{K3}) in this paper.

\subsection{Embeddings, regular points and sub-manifolds, critical points and
zero measure manifolds, Sards theorem}

\subsubsection{Embedding as a partial case of the immersion}

Next the embedding shall be defined as a partial case of the immersion, the
only difference will be that now the notion of "diffeomorphism" shall be
replaced by "homeomorphism". Homeomorphicity is related to the notion of
continuity, according to the definition in \cite{SchwarzPhys}.

\begin{definition}
\cite{Boothby} \ An embedding is a one-to-one immersion $\varphi
:M_{1}\rightarrow M_{2}$, which is a homeomorphism $M_{1}$ onto its image $%
\varphi (M_{1})=\overline{M}_{1}$ and the image has a topology as a subspace
of $M_{2}$. The image $\overline{M}_{1}$ is called an embedded manifold.
\end{definition}

Now, after the definition of the notion of sub-manifold, it turns out to be
possible to define the notion about a regular manifold, which later on will
be defined in a different way, by means of its relation to the critical
points.

\subsubsection{Regular sub-manifolds}

\begin{definition}
\cite{Boothby} A regular sub-manifold of a $C^{\infty }$ manifold $M_{2}$ is
any subspace of $M_{1}$ with the sub-manifold property and with the $%
C^{\infty }$ structure that the corresponding preferred coordinate
neighborhoods determine on it.
\end{definition}

Thus, if the $C^{\infty }$ mapping $\varphi :M_{1}\rightarrow M_{2}$ maps a
sub-manifold of $M_{1}$ with the rank of the mapping $k<n_{1}$, and $q$ is a
point $q\in \varphi (M_{1})$, then $\varphi ^{-1}(q)$ is a regular
sub-manifold of $M_{1}$ of dimension $n_{1}-k$.

In other words, the inverse mapping $\varphi ^{-1}(q)$ maps the sub-manifold
of $M_{2}$ not into the initial sub-manifold of $M_{1}$ with a dimension $k$%
, but to a regular sub-manifold of $M_{1}$ of dimension $(n_{1}-k)$. But
then, how the initial sub-manifold of $M_{1}$ with dimension $k$ can be
characterized, if the inequality between the ranks $k<n_{1}$ is fulfilled?

We shall consider now the general case of the immersion $\varphi
:M_{1}\rightarrow M_{2}$ with the dimensions of the corresponding manifolds $%
n_{1}$ and $n_{2}$, so that $n_{1}<n_{2}$. Note that here the sign "$=$" is
purposefully omitted, compared to the definition of immersion (see (\ref%
{Obz84})).

\subsubsection{Zero measure sub-manifolds and manifolds}

\begin{definition}
\cite{Golubit} Let $\varphi :M_{1}\rightarrow M_{2}$ is a differentiable
mapping of class $C^{1}$. Then the set $\varphi (M_{1})$ has a zero measure
in $M_{2}$. This also means that the image $\varphi (M_{1})$ does not fill
out the whole manifold $M_{2}$ \cite{DubrFomVol2}.
\end{definition}

The notion of "zero measure" is closely related also with the Lebesque
measure theory, necessary also for introduction of the Sard theorem. For the
purpose the definition of an $n-$dimensional cube is necessary.

\begin{definition}
\cite{Golubit} Let $a=(a_{1},a_{2},....,a_{n})$ and $%
b=(b_{1},b_{2},.....b_{n})$ are points in the $n-$dimensional space $R^{n}$,
such that $a_{i}<b_{i}$ $(1\leq i\leq n)$ and let $C(a,b)$ denotes the open
cube 
\begin{equation}
\left\{ \left( t_{1},t_{2},....t_{n}\right) \in R^{n}\mid a_{i}<t_{i}<b_{i}%
\text{ \ , \ }1\leq i\leq n\right\} \text{ \ .}  \label{Obz86}
\end{equation}%
Let the volume of the cube $C(a,b)$ is determined as 
\begin{equation}
vol\left[ C(a,b)\right] =(b_{1}-a_{1})(b_{2}-a_{2}).....(b_{n}-a_{n})\text{
\ \ .}  \label{Obz87}
\end{equation}
\end{definition}

\begin{definition}
\cite{Golubit} (about a set with zero measure) Let $S$ is a subset of $R^{n}$%
. The subset $S$ is said to have a zero measure if the following two
conditions are fulfilled:

A. For every $\epsilon >0$ a covering of the subset $S$ exists with a
countable number of open cubes $C_{1},C_{2}.......$such that $%
\sum\limits_{i=1}^{\infty }vol[C_{i}]<\epsilon $.

B. Let $X$ is a differentiable $n-$dimensional manifold and $S$ is a subset
of $X$. Let also a countable number of charts $\varphi
_{i}:U_{i}\longrightarrow R^{n}$ exists such that $U_{1},U_{2}....$cover $S$
and for every value of $i$ the set $\varphi _{i}(U_{i}\cap S)$ has a zero
measure. Then an unification of sets with zero measure will again give a set
with zero measure.
\end{definition}

This definition is used in order to define the notion about the critical
points and also the well-known Sards theorem in differential geometry. For
the definition of zero measure manifolds, see also the monograph by
Sternberg \cite{Sternberg}.

\subsubsection{Critical points, Sards theorem in differential geometry and
Morse functions}

\begin{theorem}
\cite{DubrFomVol2} (Sards theorem) Let a smooth differentiable mapping $%
\varphi :M_{1}\rightarrow M_{2}$ is defined with $\dim M_{1}=n_{1}<$ $\dim
M_{2}=n_{2}$ and let $C(\varphi )$ denotes the set of those points $x\in
M_{1}$, for which the differential $d\varphi _{x}:T_{x}\rightarrow
T_{\varphi (x)}$ has a rank, less than \ $n_{2}=\dim M_{2}$. The rank of the
differential $d\varphi _{x}$ is in fact the rank of the Jacobian matrix $%
\left( \frac{\partial \varphi ^{\alpha }}{\partial x^{\beta }}\right) $ (see
also (\ref{Obz84})), $T_{x}$ and $T_{\varphi (x)}$ denote the vectors in the
tangent spaces $TM_{1}$ and $TM_{2}$ of the manifolds $M_{1}$ and $M_{2}$.
The set $C(\varphi )$ will be called the set of the critical points for the
mapping $\varphi $ and the set $C(\varphi )$ has a zero measure in $M_{2}$.
\end{theorem}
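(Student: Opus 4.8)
The plan is to first exploit a simplification that is special to the hypothesis $n_1 < n_2$. Since the differential $d\varphi_x : T_x \to T_{\varphi(x)}$ is a linear map from an $n_1$-dimensional space into an $n_2$-dimensional space, its rank is at most $n_1$, and hence strictly less than $n_2$ at \emph{every} point. Therefore $C(\varphi) = M_1$: every point is critical and every value is a critical value. The assertion ``$C(\varphi)$ has zero measure in $M_2$'' must then be read as the statement that the image $\varphi(M_1)$ has zero measure in $M_2$, which is exactly the content of the earlier definition about zero-measure sub-manifolds. So in this regime the theorem reduces to proving that a $C^1$ map lowering dimension has a null image, and the genuinely hard part of Sard's theorem never arises.

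Next I would localize. By second countability $M_1$ admits a countable atlas, and by part B of the zero-measure definition a countable union of null sets is again null; the same applies to the charts on the target. It therefore suffices to prove the Euclidean statement: if $\varphi : U \to \mathbb{R}^{n_2}$ is $C^1$ on an open set $U \subset \mathbb{R}^{n_1}$ with $n_1 < n_2$, then $\varphi(U)$ has Lebesgue measure zero in $\mathbb{R}^{n_2}$. Writing $U$ as a countable union of closed cubes and again invoking countable additivity of null sets, I can reduce to the image of a single compact cube $C$ of side $\ell$.

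The key step is then a covering estimate on $C$. Subdivide $C$ into $k^{n_1}$ subcubes of side $\ell/k$. Because $\varphi$ is $C^1$, it is Lipschitz on the compact (convex) cube with some constant $M$ bounding $\|d\varphi\|$, so by the mean value inequality each subcube is carried into a set of diameter at most $M\sqrt{n_1}\,\ell/k$, hence into an $n_2$-cube of volume at most $(M\sqrt{n_1}\,\ell/k)^{n_2}$. Summing over all $k^{n_1}$ subcubes bounds the total covered volume by $k^{n_1}(M\sqrt{n_1}\,\ell)^{n_2}k^{-n_2}$. This is the decisive inequality:
\[
\mathrm{vol}\,\varphi(C)\ \le\ \bigl(M\sqrt{n_1}\,\ell\bigr)^{n_2}\,k^{\,n_1-n_2}.
\]
Since $n_1 - n_2 < 0$, the right-hand side tends to $0$ as $k \to \infty$, so $\varphi(C)$ is covered by finitely many cubes of arbitrarily small total volume and is therefore a set of measure zero.

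The main obstacle, such as it is, is organizing the covering bookkeeping cleanly and justifying the uniform Lipschitz bound on the compact cube; the exponent $n_1 - n_2$ is exactly where the strict dimension hypothesis enters, and the argument collapses without it. I would emphasize that this is only the easy regime of Sard's theorem: the full statement for $n_1 \ge n_2$ requires controlling the genuinely singular locus and uses higher-order differentiability, none of which is needed here precisely because the assumption $n_1 < n_2$ forces every value to be critical.
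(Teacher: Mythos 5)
Your proposal is correct, but there is nothing in the paper to compare it against: the statement is quoted from the Dubrovin--Fomenko--Novikov monograph purely as a citation, and the paper gives no proof of it (it is immediately followed by definitions and applications). What you have supplied is the standard proof of the ``easy'' regime of Sard's theorem, and it is sound. Your two framing observations are exactly right. First, since the rank of $d\varphi_x$ is at most $n_1<n_2$ at \emph{every} point, $C(\varphi)=M_1$, so the theorem is really the assertion that the set of critical values $\varphi(M_1)$ is null in $M_2$; the paper's phrase ``$C(\varphi)$ has a zero measure in $M_2$'' is loose (since $C(\varphi)\subset M_1$), and your reading is the intended one --- it coincides with the paper's earlier zero-measure statement quoted from Golubitsky--Guillemin. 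Second, the localization via second countability and countable additivity of null sets, down to a single compact cube inside a chart, is the standard reduction and you state it with adequate care. The covering estimate is correct: a $C^1$ map is Lipschitz on a compact convex cube with constant $M=\sup\|d\varphi\|$, each of the $k^{n_1}$ subcubes of side $\ell/k$ has image of diameter at most $M\sqrt{n_1}\,\ell/k$, hence lies in an $n_2$-cube of that side, and the total volume bound $\bigl(M\sqrt{n_1}\,\ell\bigr)^{n_2}k^{n_1-n_2}$ tends to zero precisely because $n_1<n_2$. Your closing remark is also worth keeping: the argument needs only $C^1$ regularity and never touches the genuinely hard part of Sard's theorem (the case $n_1\ge n_2$, which requires higher-order differentiability and control of the singular locus), so the proof is complete for the statement as the paper uses it.
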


Also, let in the neighborhood of the point $p\in M_{1}$ a local coordinate
system $x^{1},x^{2},.....x^{n}$ is chosen, and $f$ $\ $is a real smooth
function on $M_{1}$. Then the point $p$ is said to be a critical point for
the function $f$ $\ $if the following condition is fulfilled \cite{Milnor} 
\begin{equation}
\frac{\partial f}{\partial x^{1}}(p)=\frac{\partial f}{\partial x^{2}}%
(p)=.........=\frac{\partial f}{\partial x^{n}}(p)=0  \text{ \
\ .}  \label{Obz88}
\end{equation}%
Further this definition shall be applied with respect to the mapping $%
F:(f_{1},f_{2})\rightarrow (x,y,z)=(x_{2}-x_{1},y_{2}-y_{1},z_{2}-z_{1})$
for the celestial transformation (\ref{K1}) - (\ref{K3}), when a
two-dimensional manifold $M_{1}$ of the variables $f_{1},f_{2}$ (i.e.$\dim
M_{1}=2$) will be mapped to the three-dimensional manifold $M_{2}$ with
dimension $\dim M_{2}=3$, defined by the cartesian coordinates $(x,y,z)$.

Critical points can be degenerate and non-degenerate ones.

\begin{definition}
\cite{DubrFomVol2} A critical point $x\in M$ of the smooth function $f(x)$,
defined on the manifold $M$ is called a non-degenerate one, if the matrix $%
a_{ij}=\frac{\partial ^{2}f(x)}{\partial x_{i}\partial x_{j}}$ is a
non-degenerate one, i.e. $\det \parallel a_{ij}\parallel \neq 0$. The
quadratic form is invariant with respect to the choice of the point $x$ and
is called a second differential at the given point. In local coordinates $%
x^{i}$ with a centre at a critical point the function $f$ can be written as 
\cite{Vasiliev} 
\begin{equation}
f(x)=f(0)+\sum\limits_{i\neq j}a_{ij}x_{i}x_{j}+O(\mid x\mid ^{3})\text{ \ \
\ .}  \label{Obz88AA1}
\end{equation}%
A critical point is called a Morse one, if the quadratic form \ $a_{ij}$ is
a non-degenerate one. If all the critical points are Morse ones, the
function $f$ is called a Morse function.
\end{definition}

By means of a suitable choice of the local coordinates $%
(x_{1},x_{2},....x_{n})$ the quadratic form can be transformed to a
canonical form 
\begin{equation}
x_{1}^{2}+x_{2}^{2}+....+x_{k}^{2}-x_{k+1}^{2}-.......-x_{n}^{2}\text{ \ \ \ 
}  \label{Obz88AA2}
\end{equation}%
in the neighborhood of the Morse critical point. The index of the \ critical
point is equal to the number of negative terms (i.e. the negative inertial
index))

This is the exact mathematical definition for a critical point and a Morse
function. However, the problem is whether this definition is sufficient in
view of the determination of Morse functions.

The definition above is given because critical points (degenerate and
non-degenerate) will be studied further with respect to the transformations (%
\ref{K1}) - (\ref{K3}) from Kepler parameters $(f,a,e,\Omega ,i,_{.}\omega
_{.})$ to the cartesian coordinates $(x,y,z)$. Here we shall point out only
some aspects of the problem, the detailed study will be presented in
subsequent publications.

The literature on critical points and Morse functions is rather extensive
-only a part of the references on the subject are \cite{DubrFomVol2}, \cite%
{Vasiliev},\cite{TrofimFom}, \cite{SchwarzPhys}, \cite{NashSen}, \cite%
{ShapOlshan}, \cite{Golubit} and especially the monograph \cite%
{NovikovTaimanov}.

\subsubsection{Critical points in the transformation (\protect\ref{K1}) - (%
\protect\ref{K3}). Are there Morse functions with respect to some of the
Kepler parameters?}

All the critical points and Morse functions will not be found in this
paragraph, but we shall pay a special attention to the case, concerning the
derivatives of $x,y$ and $z$ with respect to the $\Omega -$ angle. We shall
restrict ourselves only to the case of one manifold $M$, represented in
cartesian coordinates as $(x,y,z)$ and also in terms of the $6$-Kepler
parameters. From expressions (\ref{K1}) - (\ref{K3}) the derivatives $\frac{%
\partial x}{\partial \Omega }$,$\frac{\partial y}{\partial \Omega }$ and$%
\frac{\partial z}{\partial \Omega }$ can be calculated as follows 
\begin{equation}
\frac{\partial x}{\partial \Omega }=r\left[ -\sin \Omega \cos \left( \omega
+f\right) -\cos \Omega \sin \left( \omega +f\right) \cos i\right] =-y\text{
\ \ ,}  \label{Obz88AA3}
\end{equation}

\begin{equation}
\frac{\partial y}{\partial \Omega }=r\left[ \cos \Omega \cos \left( \omega
+f\right) -\sin \Omega \sin \left( \omega +f\right) \cos i\right] =x\text{ \
\ ,}  \label{Obz88AA4}
\end{equation}%
\begin{equation}
z=\frac{a(1-e^{2})}{1+e\cos f}\sin (\omega +f)\sin i\text{ \ \ \ }%
\Rightarrow \frac{\partial z}{\partial \Omega }=0\text{ \ \ \ \ .}
\label{Obz88AA5}
\end{equation}%
Consequently, one will have $\frac{\partial x}{\partial \Omega }=\frac{%
\partial y}{\partial \Omega }=\frac{\partial z}{\partial \Omega }=0$ at the
critical point (in terms of the cartesian coordinates) $%
(x,y,z)_{crit}=(0,0,z)$. Let us see which are the critical values in terms
of the Kepler parameters. From the first equation (\ref{Obz88AA3}) one can
express 
\begin{equation}
\cos i=-\frac{\sin \left( \omega +f\right) }{tg\Omega \cos \left( \omega
+f\right) }  \label{Obz88AA6}
\end{equation}%
and substitute in the second equation $\frac{\partial y}{\partial \Omega }=0$%
. The result will be 
\begin{equation}
\cos \Omega \cos \left( \omega +f\right) +\sin \Omega \sin \left( \omega
+f\right) \frac{\sin \left( \omega +f\right) }{tg\Omega \cos \left( \omega
+f\right) }  \label{Obz88AA7}
\end{equation}%
\begin{equation}
=\cos \Omega \left[ \frac{\cos ^{2}\left( \omega +f\right) +\sin ^{2}\left(
\omega +f\right) }{\cos \left( \omega +f\right) }\right] =\frac{\cos \Omega 
}{\cos \left( \omega +f\right) }=0\text{ \ \ .}  \label{Obz88AA8}
\end{equation}%
So if $\cos \left( \omega +f\right) \neq 0$, to the critical point $(0,0,z)$
will correspond the critical point $(\cos \Omega =0,\cos i=0)$ This might
seem to be the critical point, but in fact it is not if besides the
condition $\cos \left( \omega +f\right) \neq 0$, the condition $\sin (\omega
+f)\neq 0$ should also be imposed. So the critical points are $\Omega =k%
\frac{\pi }{2}$, but additionally $\omega +f\neq m\frac{\pi }{2}$ and also $%
\omega +f\neq p\pi $ ($m=1,2,,,,;p=0,1,2$).

Note that in the definition for critical points it was not required that the
zero point $(x,y,z)=(0,0,0)$ should not be a critical point. If it is
required (i.e. postulated) that $(x,y,z)\neq (0,0,0)$ (in the sense of all
the three coordinates to be different from zero), then $(0,0,z)$ will be the
critical point for $\Omega =k\frac{\pi }{2}$.

However, because of the equalities $\frac{\partial ^{2}x}{\partial \Omega
^{2}}=-\frac{\partial y}{\partial \Omega }$ and $\frac{\partial ^{2}y}{%
\partial \Omega ^{2}}=\frac{\partial x}{\partial \Omega }$ it is seen that
the critical points are degenerate ones and consequently the function $%
f=(x,y,z)$ is not a Morse function. This is admissable in view of the Sards
theorem, since if in a manifold the set of critical points can have a zero
measure, then every smooth mapping $f:(f,a,e,\Omega ,i,_{.}\omega
_{.})\rightarrow (x,y,z)$ should admit non-critical values.

The last assertion about the function $f$ $\ $being a non-Morse function is
consistent with a theorem in \cite{DubrFomVol2} that each Morse function on
a compact manifold has only a finite number of critical points. But in the
given case, if one assumes that $f=(x,y,z)$ is a Morse function, this will
contradict with the infinite number of values for $\Omega =k\frac{\pi }{2}$,
since the number $k$ can be arbitrary large.

\subsubsection{Regular points}

In contrast to the set $C(\varphi )$ of critical points, a point $p\in M_{1}$
will be called a regular point for the smooth mapping $\varphi $, if it is
not a critical one, i.e. the rank of the mapping between the vectors in the
corresponding tangent spaces $d\varphi :TM_{1}\rightarrow TM_{2}$ is equal
to $n_{2}=\dim M_{2}$. Also, the point $q\in M_{2}$ will be a regular point
if all its inverse images are regular points in the manifold $M_{1}$ (for
the case when $\varphi ^{-1}(q)=\varnothing $ \ the point $q$ will also be
considered to be a regular point) \ \cite{DubrFomVol2}.

For the definition of critical and regular points, see also \cite{Sternberg}.

\subsubsection{Submersions}

Let again a differentiable mapping $\varphi :M_{1}\rightarrow M_{2}$ is
given, but contrary to the case of immersion, when the ranks of the
manifolds were defined by the inequality $rankM_{1}=n_{1}\leq
rankM_{2}=n_{2} $, for the case of submersion the inequality for the ranks
is defined as $rankM_{1}=n_{1}\geq rankM_{2}=n_{2}$ \cite{Boothby}, \cite%
{DubrFomVol2}. As noted in \cite{Boothby}, the case $%
rankM_{1}=n_{1}=rankM_{2}=n_{2}=rank\varphi $ is a necessary condition for
establishing a diffeomorphism between the two manifolds, for the case of
submersion it is impossible.

\subsubsection{Critical points defined for both cases of immersions and
submersions}

\begin{definition}
\cite{ArnoldVarchenko} Let $M$ and $N$ are manifolds. The point $x\in M$ \
is called a critical point for the smooth mapping $f:M\rightarrow N$ if the
rank of the mapping $f_{\ast x}$ between the tangent spaces of $M$ and $N$ 
\begin{equation}
f_{\ast x}:T_{x}M\rightarrow T_{f(x)}N  \label{Obz88A1}
\end{equation}%
at this point is smaller than the maximum possible, i.e. small than the
smaller between the dimensions of the manifolds $M$ and $N$ 
\begin{equation}
rank\text{ }f_{\ast x}<\min (\dim M,\dim N)\text{ \ \ \ \ .}  \label{Obz88A2}
\end{equation}
\end{definition}

Evidently, when $\dim M>\dim N$ (the case of submersion) this definition can
be applied for the definition of critical points.

\subsubsection{Submersions, the Whitney theorem and some applications to the
current research}

The case of submersion is important for the presently investigated case of
the mapping 
\begin{equation}
F:M_{1}(f_{.},a_{.},e_{.},\Omega _{.},i_{.},\omega _{.})\rightarrow N(x,y,z)%
\text{ \ \ \ \ ,}  \label{Obz89}
\end{equation}
when the celestial transformations (\ref{K1}) - (\ref{K3}) map the manifold $%
M_{1}$, related to the six Kepler parameters $(f,a,e_{.},\Omega
_{.},i_{.},_{.}\omega _{.})$ to the manifold $N$, related to the three
Cartesian coordinates $(x,y,z)$.

In the present investigation we would like to consider two sets of Kepler
parameters $(f_{1},a_{1},e_{1.},\Omega _{1.},i_{1.},_{.}\omega _{1.})$ and $%
(f_{2},a_{2},e_{2.},\Omega _{2.},i_{2.},_{.}\omega _{2.})$, characterizing
the motion respectively of the signal-emitting and signal-receiving
satellite. 

These two sets will define two manifolds $\ \
M_{1}(f_{1},a_{1},e_{1.},\Omega _{1.},i_{1.},_{.}\omega _{1.})$ and  $%
M_{2}(f_{2},a_{2},e_{2.},\Omega _{2.},i_{2.},_{.}\omega _{2.})$. Since the
aim will be to apply a variational formalism, based on all the variables,
the question is whether it will be possible to define a mapping $\overline{F}
$ of a manifold $M$, related to all the $12$ Kepler parameters to the
manifold $N$, related to the two sets of cartesian coordinates $%
(x_{1},y_{1},z_{1})$ and $(x_{2},y_{2},z_{2})$? These two mappings will be
the submersions 
\begin{equation}
\overline{F}_{1}:\overline{M}_{1}(f_{1},a_{1},e_{1.},\Omega
_{1.},i_{1.},_{.}\omega _{1},f_{2},a_{2},e_{2.},\Omega
_{2.},i_{2.},_{.}\omega _{2.})\rightarrow \overline{N}%
_{1}(x_{1},y_{1},z_{1},x_{2},y_{2},z_{2})  \label{Obz90}
\end{equation}%
or 
\begin{equation}
\overline{F}_{1}:\overline{M}_{1}(f_{1},a_{1},e_{1.},\Omega
_{1.},i_{1.},_{.}\omega _{1},f_{2},a_{2},e_{2.},\Omega
_{2.},i_{2.},_{.}\omega _{2.})\rightarrow \overline{N}%
_{2}(x=x_{1}-x_{2},y=y_{1}-y_{2},z=z_{1}-z_{2})\text{ \ }.  \label{Obz91}
\end{equation}%
In the first case (\ref{Obz90}) this is a submersion of a $12-$dimensional
manifold $\overline{M}_{1}$ into a $6-$dimensional manifold $\overline{N}_{1}
$ and in the second case (\ref{Obz91}) - a submersion of the manifold $%
\overline{M}_{1}$ into a $3-$dimensional manifold $\overline{N}_{2}$. Naturally, the $12-$%
dimensional manifold $\overline{M}_{1}$ will exist if it is possible to
embed or immerse both manifolds $M_{1}(f_{1},a_{1},e_{1.},\Omega
_{1.},i_{1.},_{.}\omega _{1.})$ and $M_{2}(f_{2},a_{2},e_{2.},\Omega
_{2.},i_{2.},_{.}\omega _{2.})$ into $\overline{M}_{1}$. Such a possibility
is guaranteed by the known Whitney theorem, which can be proved by using the
Sards theorem .

\begin{theorem}
(Whitney) Each connected smooth and closed $n$-dimensional manifold \ can be
smoothly embedded in an $R^{2n+1}-$dimensional Euclidean space or it can be
immersed in an $R^{2n}-$dimensional Euclidean space. .
\end{theorem}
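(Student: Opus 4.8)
The plan is to prove this weak version of the Whitney theorem in two stages: first produce an embedding of $M$ into some Euclidean space of (possibly very large) dimension $N$, and then lower $N$ one unit at a time using the measure-zero corollary of the Sard theorem already stated above -- namely that a $C^{1}$ map from a manifold of strictly smaller dimension has image of zero measure in the target.

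First I would use the hypothesis that $M$ is closed (compact and without boundary) to cover $M$ by finitely many coordinate charts $(U_{i},\varphi _{i})$, $i=1,\dots ,k$, and to choose a smooth partition of unity $\{\rho _{i}\}$ subordinate to this cover. The map
\[
F(p)=(\rho _{1}(p)\varphi _{1}(p),\dots ,\rho _{k}(p)\varphi _{k}(p),\rho _{1}(p),\dots ,\rho _{k}(p))
\]
then defines a smooth injective immersion $F:M\rightarrow R^{N}$ with $N=k(n+1)$; injectivity and the immersion property follow because on each $\mathrm{supp}\,\rho _{i}$ the components $\rho _{i}\varphi _{i}$ recover the local chart coordinates, while the appended functions $\rho _{i}$ separate points lying in different charts. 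Since $M$ is compact, this injective immersion is automatically a homeomorphism onto its image, hence an embedding into $R^{N}$.

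The dimension-reduction step is the heart of the argument. Suppose $M$ is embedded in $R^{N}$ with $N>2n+1$, and for a unit vector $v\in S^{N-1}$ let $\pi _{v}$ denote orthogonal projection onto the hyperplane $v^{\perp }\cong R^{N-1}$. The composition $\pi _{v}\circ F$ fails to be injective exactly when $v$ lies in the image of the secant map
\[
\sigma :(M\times M)\setminus \Delta \rightarrow S^{N-1},\qquad \sigma (p,q)=\frac{F(p)-F(q)}{\mid F(p)-F(q)\mid },
\]
whose domain has dimension $2n$; and it fails to be an immersion exactly when $v$ lies in the image of the derivative map $\tau (p,w)=dF_{p}(w)$ defined on the unit tangent bundle $\{(p,w):\mid dF_{p}(w)\mid =1\}$ of dimension $2n-1$. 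By the measure-zero corollary of the Sard theorem, the image of $\sigma $ has zero measure in $S^{N-1}$ as soon as $2n<N-1$, and the image of $\tau $ has zero measure as soon as $2n-1<N-1$. Hence for $N\geq 2n+2$ almost every $v$ avoids both images, so projecting along such a $v$ yields an injective immersion into $R^{N-1}$.

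Iterating this projection, I would lower $N$ one unit at a time as long as $N\geq 2n+2$, terminating with an embedding into $R^{2n+1}$, which proves the first assertion. For the immersion statement one further projection is admissible: when $N=2n+1$ the derivative map still has domain dimension $2n-1<N-1=2n$, so its image has zero measure and a generic projection preserves the immersion property (though not necessarily injectivity), producing an immersion into $R^{2n}$. The main obstacle will be the careful bookkeeping of the two genericity conditions and verifying that $\sigma $ and $\tau $ are $C^{1}$ maps with exactly the claimed domain dimensions, so that the measure-zero corollary genuinely applies; a secondary subtlety is confirming that the initial $F$ is an embedding rather than merely an injective immersion, which rests essentially on the compactness supplied by the closedness hypothesis.
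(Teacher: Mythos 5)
Your proof is correct and takes essentially the same approach as the paper, which only sketches (citing Dubrovin--Fomenko--Novikov) exactly this argument: successive orthogonal projections onto hyperplanes, with Sard's theorem ensuring that the forbidden directions, namely the secant image ($2n$ parameters) for injectivity and the tangent-direction image ($n+(n-1)=2n-1$ parameters) for the immersion property, have measure zero in the sphere of directions, which is precisely your pair of conditions $2n<N-1$ and $2n-1<N-1$. The only material you add is the partition-of-unity construction of the initial embedding into a large $R^{N}$, a step the paper passes over by starting from an already embedded manifold.
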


The proof of this theorem in \cite{DubrFomVol2} is rather complicated and is
based on the subsequent projection of a smooth embedded manifold $M\subset
R^{n}$ on hyperplanes, so that at each projection the dimension of the
enclosing space is diminished by one. As a result of an orthogonal
projection $\pi _{l_{0}}:M\rightarrow R_{t_{0}}^{N-1}$ a smooth immersion is
obtained so that a point and a forbidden direction (straight line) with a
total of $n+(n-1)$ parameters should have a zero measure. According to Sards
theorem this is possible when $2n-1<N-1$ or $2n<N$ - this is the second case
of immersion. Analogous considerations are presented in \cite{DubrFomVol2}
for the other case of an embedding in a $R^{2n+1}-$ dimensional space.

In the concrete case, we have $n=6$ (the dimension of the manifold $M_{1}$
or $M_{2}$), so $M_{1}$ (or $M_{2}$) is possible to be smoothly embedded to
a $2\times 6+1=13$ dimensional manifold or to be smoothly immersed in an $%
2\times 6=12-$dimensional manifold. This will be exactly the manifold,
defined on the two sets of Kepler parameters  
\begin{equation*}
\overline{M}_{1}(f_{1},a_{1},e_{1.},\Omega _{1.},i_{1.},_{.}\omega
_{1},f_{2},a_{2},e_{2.},\Omega _{2.},i_{2.},_{.}\omega _{2.})\text{ \ \ .}
\end{equation*}
For the first case of $2\times 6+1=13-$dimensions, the manifold is
supplemented by one more function - this can be, for example, the distance $%
R_{AB}$ between two points $A\subset M_{1}$ and $B\subset M_{2}$. Also, this
distance will be related to the parameter $s$, because each of the Kepler
parameters in the two sets will depend on it, respectively the two sets of
coordinates $(x_{1},y_{1},z_{1},x_{2},y_{2},z_{2})$ will also depend on it.
So the mapping (\ref{Obz90}) will be 
\begin{equation}
\overline{F}_{1}:\overline{M}_{1}(\Gamma ^{(1)}(s),\Gamma
^{(2)}(s),R_{AB}(s))\rightarrow \overline{N}%
(x_{1},y_{1},z_{1},x_{2},y_{2},z_{2})  \label{Obz92}
\end{equation}%
and $\Gamma ^{(1)}(s),\Gamma ^{(2)}(s)$ denote the two sets of Kepler
parameters 
\begin{equation}
\Gamma ^{(1)}(s)\equiv (f_{1}(s),a_{1}(s),e_{1.}(s),\Omega
_{1.}(s),i_{1.}(s),_{.}\omega _{1}(s))\text{ , \ }  \label{Obz92A1}
\end{equation}%
\begin{equation}
\Gamma ^{(2)}(s)\equiv (\text{\ }f_{2}(s),a_{2}(s),e_{2.}(s),\Omega
_{2.}(s),i_{2.}(s),_{.}\omega _{2.}(s))\text{ \ \ .}  \label{Obz92A2}
\end{equation}

Since the coordinates $x_{1},y_{1},z_{1},x_{2},y_{2},z_{2}$ do not introduce
any new parameters or functions, the dimension of the manifold $\overline{N}$
remains $6$.

\subsubsection{Homotopy theory and combining the approaches of algebraic
topology and variational calculus}

The mapping (\ref{Obz92}) is in fact an attempt to combine the formalism of
variational calculus with the topological ideas. Let us remind the standard
definition about homotopy \cite{Boothby}, \cite{DubrFomVol2} (see also the
modern textbook \cite{Nakahara} on the application of topology in physics)
in algebraic topology, which on the one hand represents a mapping between
topological spaces $X$ and $Y$, but one the other hand \ represents also an
immersion $X\times I\rightarrow Y$, where $I$ is an interval.

\begin{definition}
(\cite{Boothby}) Let $F,G$ are continuous mappings from the topological
space $X$ to the topological space $Y$ and $I=[0,1]$ is an unit interval.
Then $F$ is homotopic to $G$ if \ there is a continuous mapping (homotopy) 
\begin{equation}
H:X\times I\mapsto Y\text{ \ \ \ \ ,}  \label{Obz93}
\end{equation}%
which satisfies the condition $F(x)=H(x,0)$ and $G(x)=H(x,1)$ for all $x\in
X $.
\end{definition}

A slightly different definition is given in \cite{Nakahara}, where $%
a,b:I\rightarrow X$ are defined to be loops (i.e. the interval $I$ is a
two-parametric one - the parameters are $s$ and $t$), $F:I\times
I\rightarrow X$ is a continuous mapping and it is required that 
\begin{equation}
F(s,0)=a(s)\text{, \ \ \ \ \ \ \ \ \ \ \ \ }F(s,1)=b(s)\text{ \ \ \ \ }%
\forall s\in I\text{ \ \ \ \ , }  \label{Obz94}
\end{equation}%
\begin{equation}
F(0,t)=F(1,t)=x\text{ \ \ \ \ \ \ \ }\forall t\in I\text{ \ \ .}
\label{Obz95}
\end{equation}%
Then the connecting map $F$ is called a homotopy between $a$ and $b$
(denoted as $a\sim b$). 

The differences in the presently applied approach are the following ones:

1. The interval $I$ will not be the unit interval, but the interval $I=\left[
s_{a},s_{b}\right] $, where the values $s=s_{a}$ and $s=s_{b}$ will
correspond to the values of the parameter $s$ for the Kepler parameters 
\begin{equation}
\Gamma ^{(1)}(s)\equiv (f_{1}(s_{a}),a_{1}(s_{a}),e_{1.}(s_{a}),\Omega
_{1.}(s_{a}),i_{1.}(s_{a}),_{.}\omega _{1}(s_{a}))  \label{Obz96}
\end{equation}
of the signal-emitting satellite and 
\begin{equation}
\Gamma ^{(2)}(s)\equiv (f_{2}(s_{b}),a_{2}(s_{b}),e_{2.}(s_{b}),\Omega
_{2.}(s_{b}),i_{2.}(s_{b}),_{.}\omega _{2.}(s_{b}))  \label{Obz97}
\end{equation}
of the signal-receiving satellite (both are moving along different elliptic,
space - distributed orbits).

2. Further it will also be supposed that both ends of the signal trajectory
are dynamically moving and will be required to find the optimal trajectory
of the signal (i.e.the optimal positions of both satellites) for
transmission between the satellites. If finding the signal trajectory takes
into account the General Relativity effects and the Fermats principle for
minimal time of propagation of the signal, finding the optimal position of
the satellites on both orbits will require the application of a
multi-parameter optimization theory. In other words, both ends of the signal
trajectory will be dynamical, not static and fixed ones. This formalism will
be developed in subsequent papers, but it may only be mentioned that for a
long time ago a formalism in variational calculus has been developed - the
s.c. "variational calculus with moving boundaries" \cite{GelfandFomin} and
many other monographs.

3. It will be impossible to use the direct product $X\times I$, because if $%
I=[s_{a},s_{b}]$, the manifold $X$ will depend on the parameter $s$. A new
problem here is formulated when an immersion or embedding is realized
between two manifolds $f:X\rightarrow Y$, if they are considered at the same
time to be topological spaces and each of their elements depends also on the
parameter $s$. 

\section{Higher order variational formalism, based on the Stokes theorem and
differential forms}

\bigskip The Stokes theorem is the basic theoretical tool, which shall be
used further. It is based on three important notions - differential forms of
degree $n$ (or simply $n-$form), orientation of an $n-$manifold and the
previously defined notion about the boundary $\partial M$ of a manifold $M$.

\subsection{Definition of the Stokes theorem}

\begin{definition}
\cite{Boothby} An exterior differential form of degree $n$ ($n-$form) is an
alternating covariant tensor field of order $n$.
\end{definition}

\begin{definition}
\cite{Boothby} The manifold $M$ is orientable if a $C^{\infty }$ $n-$form $%
\omega =\sum\limits_{i}f_{i}\varphi _{i}^{\ast }(dx^{1}\Lambda
dx^{2}\Lambda .....dx^{n})$ can be defined, which is not zero at any point.
Then it is said that $M$ is oriented by the choice of $\omega $. This means
that the orientation of $M$ determines the orientation of the boundary $%
\partial M$. This holds for regular domains (sub-manifolds) of $M$.
\end{definition}

Then the Stokes theorem, relating an integral on the manifold $M$ to an
integral on the boundary is defined as follows

\begin{theorem}
\cite{Boothby} Let $M$ is an $n-$dimensional oriented manifold with a
boundary $\partial M$, $\omega $ is a differential form of the $(n-1)-$%
degree with a compact support. Then the generalized Stokes formulae can be
expressed as 
\begin{equation}
\int_{M}d\omega =\int_{\widetilde{\partial }M}i^{\ast }\omega \text{ \ \ \ . 
}  \label{Obz98}
\end{equation}%
In the last equality $\widetilde{\partial }M$ denotes the \ boundary $%
\partial M$ with its different possible orientations. This means that when $%
n $ is even, the orientation, induced by $M$ is positive, i.e. $\partial M$,
but when $n$ is odd, then the orientation will be the opposite one, i.e. $%
-\partial M$. Both cases can be united as $\widetilde{\partial }%
M=(-1)^{n}\partial M$. In (\ref{Obz98}) $i^{\ast }\omega $ denotes the
restriction of $\omega $ to $\partial M$ since $i:\partial M\rightarrow M$
is an inclusion mapping. In a slightly different manner, the above equality
is written as $(-1)^{n}\int_{M}d\omega =\int_{\partial M}\omega $ (see also
the definitions in \cite{DubrFomVol2}, \cite{MishtFomCourse}, \cite{NashSen}%
, \cite{SchwarzPhys} and many others). In the second term, instead of $%
i^{\ast }\omega $, many authors write only $\omega $.
\end{theorem}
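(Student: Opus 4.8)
The plan is to prove the generalized Stokes formula (\ref{Obz98}) by the standard reduction-to-local-charts argument, exploiting the orientability hypothesis (which, by the Definition preceding the theorem, furnishes a nowhere-vanishing $n$-form orienting $M$ and, consistently, its boundary $\partial M$) together with the half-space model $H^n=\{x^n\geq 0\}$ of a manifold with boundary from equations (\ref{Obz49})--(\ref{Obz51}). First I would choose a partition of unity $\{\rho_\alpha\}$ subordinate to the oriented atlas $\{(U_\alpha,\varphi_\alpha)\}$ of $M$. Since $\omega$ is an $(n-1)$-form with compact support, only finitely many $\rho_\alpha$ are nonzero on $\mathrm{supp}\,\omega$, so one may write $\omega=\sum_\alpha \rho_\alpha\omega$ as a finite sum. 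Because both sides of (\ref{Obz98}) are linear in $\omega$, it suffices to establish the identity for each $\rho_\alpha\omega$ separately, i.e. for a form supported inside a single coordinate chart. This localizes the whole problem to a computation either on an open set of $\mathbb{R}^n$ (an interior chart) or on the half-space $H^n$ (a chart meeting the boundary).

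Next I would dispose of the two local cases by direct calculation. In the interior case, pulling $\omega$ back and writing $\omega=\sum_i f_i\,dx^1\wedge\cdots\wedge\widehat{dx^i}\wedge\cdots\wedge dx^n$, the exterior derivative is $d\omega=\sum_i(-1)^{i-1}\partial_i f_i\,dx^1\wedge\cdots\wedge dx^n$; integrating each summand over $\mathbb{R}^n$ and applying the fundamental theorem of calculus in the $x^i$-variable, the compact support forces every term to vanish, so $\int_M d\omega=0$, while the boundary integral is likewise $0$ because the chart never meets $\partial M$. In the boundary case the same computation over $H^n$ annihilates every term except the one differentiated in $x^n$; the fundamental theorem of calculus in $x^n$ evaluates that surviving term at $x^n=0$, and through the boundary parametrization (\ref{Obz51}) giving $\partial H^n\cong\mathbb{R}^{n-1}$ this produces exactly $\int_{\partial H^n} i^*\omega$. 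Summing the local identities over $\alpha$ then recovers the global formula.

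The main obstacle, and the step demanding genuine care, is the orientation bookkeeping that yields the parity factor in $\widetilde{\partial}M=(-1)^n\partial M$. I would fix a definite convention for the induced boundary orientation (for instance, tracking how the orienting $n$-form restricts along the outward normal $-\partial_n$) and then verify that the coordinate-change maps $\varphi_\beta\varphi_\alpha^{-1}$ of the atlas, which by the Definition of a manifold with boundary send boundary points to boundary points with positive Jacobian, preserve this induced orientation on $\partial M$, so the local boundary contributions patch consistently. Reconciling the sign that emerges from the fundamental theorem of calculus at $x^n=0$ with the chosen induced orientation on $\partial H^n$ is precisely what generates $(-1)^n$; a careless treatment here is the one place where a spurious sign error would enter. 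Once the orientation is pinned down and the two local cases are summed against the partition of unity, the identity (\ref{Obz98}) follows on all of $M$.
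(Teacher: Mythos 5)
The paper itself gives no proof of this statement: it is quoted verbatim as a known theorem from the cited textbook (Boothby), so there is no internal argument to compare against. Your proposal — partition of unity subordinate to an oriented atlas, reduction to the two local cases (interior chart over $\mathbb{R}^{n}$ and boundary chart over the half-space $H^{n}$), the fundamental theorem of calculus killing all interior terms and producing the $x^{n}=0$ term, and the orientation bookkeeping that yields the factor $(-1)^{n}$ in $\widetilde{\partial}M=(-1)^{n}\partial M$ — is the standard and correct proof, and it is essentially the same argument found in the source the paper cites.
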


\subsection{Application of a variational formalism, based on higher order
differential forms and the Stokes theorem}

Now let us describe the variational formalism, developed in this paper. The
idea is to start from the one - form 
\begin{equation}
\omega _{1}=Adx+Bdy+Cdz\text{ \ \ \ \ ,}  \label{Obz99}
\end{equation}%
where $A,B,C$ are the coefficient functions in the variational action (one can
take the action (\ref{Obz20})). In other words, $A,B,C$ will be the
Euler-Lagrange equations as a result of the variation in terms of the \ $x,y$
and $z$ variables 
\begin{equation}
A\equiv \frac{\delta L}{\delta x}+\lambda _{1}(\widetilde{T}-G(f))\text{ \ ;
\ }B\equiv \frac{\delta L}{\delta y}+\lambda _{2}(\widetilde{T}-G(f))\text{
\ ; \ \ }C\equiv \frac{\delta L}{\delta z}+\lambda _{3}(\widetilde{T}-G(f))%
\text{ \ \ ,}  \label{Obz100}
\end{equation}%
supplemented by the Lagrange function multipliers $\lambda
_{1}(x,y,z),\lambda _{2}(x,y,z),\lambda _{3}(x,y,z)$. \ In (\ref{Obz100}) $%
\frac{\delta L}{\delta x},\frac{\delta L}{\delta y},\frac{\delta L}{\delta z}
$ are the generalized variational derivatives 
\begin{equation*}
\frac{\delta L}{\delta x}\equiv \frac{\partial L}{\partial x}-\frac{d}{ds}%
\frac{\partial L}{\partial \overset{.}{x}}\text{ \ , \ }\frac{\delta L}{%
\delta y}\equiv \frac{\partial L}{\partial y}-\frac{d}{ds}\frac{\partial L}{%
\partial \overset{.}{y}}\text{ \ ,}
\end{equation*}%
$\ $%
\begin{equation}
\text{ \ }\frac{\delta L}{\delta z}\equiv \frac{\partial L}{\partial z}-%
\frac{d}{ds}\frac{\partial L}{\partial \overset{.}{z}}\text{ \ \ .}
\label{Obz101}
\end{equation}%
The function $\widetilde{T}$ is equal to the function $G(f)$ (\ref{K4}),
calculated in the paper \cite{Bog1} by means of higher-order elliptic
integrals - the corresponding expressions are given in Appendix B. The
motivation to include the function $(\widetilde{T}-G(f))$ as constraints is
that if in this higher order variational formalism the propagation time is
calculated, then in the partial case of only $f$ $\ $being the dynamical
parameter, the result should coincide with $\widetilde{T}$. This would mean
also that the derivatives of $\widetilde{T}$ \ with respect to the other $5$
Kepler parameters should be zero. The notation for $\overset{.}{x}$, $\overset{.}{y}
$ and $\overset{.}{z}$ means, as usual, derivatives with respect to the
parameter $s$.

In the usual variational formalism, $A,B$ and $C$ are the usual
Euler-Lagrange equations (plus the constraint function). In this case $A,B,C$
will not be equal to zero (see also \cite{TrofimFom}) and they will be used
for the derivation of the two-form by means of the Stokes theorem for the
three-dimensional case 
\begin{equation}
\int\limits_{\partial M}\left[ Adx+Bdy+Cdz\right] =\int\limits_{M}\left[
Pdy\Lambda dz+Qdz\Lambda dx+Rdx\Lambda dy\right]  \label{Obz102}
\end{equation}%
where $P=P(x,y,z)$, $Q=Q(x,y,z)$, $R=,R(x,y,z)$ are the expressions 
\begin{equation*}
P(x,y,z)\equiv \frac{\partial C}{\partial y}-\frac{\partial B}{\partial z}%
\text{ \ , }Q(x,y,z)\equiv \frac{\partial A}{\partial z}-\frac{\partial C}{%
\partial x}\text{ \ \ ,}
\end{equation*}%
\begin{equation}
\text{ \ \ }R(x,y,z)\equiv \frac{\partial B}{\partial x}-\frac{\partial A}{%
\partial y}\text{ \ .}  \label{Obz103}
\end{equation}%
It should be mentioned that these expressions represent a partial case in the
definition of the absolute integral invariant of the second order \cite%
{TrofimFom} 
\begin{equation}
J=\int\limits_{\widetilde{\Omega }}\sum\limits_{i,j}\left( \frac{\partial
F_{i}}{\partial x_{j}}-\frac{\partial F_{j}}{\partial x_{i}}\right) \delta
x_{i}\delta x_{j}\text{ \ \ \ ,}  \label{Obz104}
\end{equation}%
where 
\begin{equation}
J=\int\limits_{\partial \widetilde{\Omega }}\left( F_{1}\delta
x_{1}+......+F_{n}\delta x_{n}\right)  \label{Obz105}
\end{equation}%
is called the relative integral invariant. Let us also remind that $%
x=x_{2}-x_{1},\ y=y_{2}-y_{1},z=z_{2}-z_{1}$.

Further, let us introduce the notations 
\begin{equation}
(q_{x},q_{y},q_{z})=(\frac{\partial L}{\partial x},\frac{\partial L}{%
\partial y},\frac{\partial L}{\partial z})\text{ \ \ , \ \ }\overrightarrow{p%
}=(p_{x},p_{y},p_{z})=(\frac{\partial L}{\partial \overset{.}{x}},\frac{%
\partial L}{\partial \overset{.}{y}},\frac{\partial L}{\partial \overset{.}{z%
}})\text{ \ \ ,}  \label{Obz106}
\end{equation}%
\begin{equation}
\overset{.}{\overrightarrow{p}}=(\overset{.}{p_{x}},\overset{.}{p_{y}},%
\overset{.}{p_{z}})=(\frac{d}{ds}\frac{\partial L}{\partial \overset{.}{x}},%
\frac{d}{ds}\frac{\partial L}{\partial \overset{.}{y}},\frac{d}{ds}\frac{%
\partial L}{\partial \overset{.}{z}})\text{ \ \ .}  \label{Obz107}
\end{equation}%
Then it can easily be found that 
\begin{equation}
P_{1}\equiv \frac{\partial }{\partial z}\left( \frac{d}{ds}\frac{\partial L}{%
\partial \overset{.}{y}}\right) -\frac{\partial }{\partial y}\frac{d}{ds}%
\frac{\partial L}{\partial \overset{.}{z}}=-\frac{\partial \overset{.}{p_{z}}%
}{\partial y}+\frac{\partial \overset{.}{p_{y}}}{\partial z}=(rot\overset{.}{%
\overset{.}{\overrightarrow{p}})_{x}}\text{ \ \ \ .}  \label{Obz108}
\end{equation}%
Analogously 
\begin{equation}
Q_{1}\equiv \left\{ z\rightarrow x;y\rightarrow z\right\} =(rot\overset{.}{%
\overset{.}{\overrightarrow{p}})_{y}\text{ \ , \ }}R_{1}\equiv \left\{
x\rightarrow y;z\rightarrow x\right\} =\left( rot\text{ }\overset{.}{%
\overrightarrow{p}}\right) _{z}\text{ \ .}  \label{Obz109}
\end{equation}%
Then, after taking into account the expressions (\ref{Obz100}) - (\ref%
{Obz102}) and also (\ref{Obz106}) - (\ref{Obz109}), each of the functions $%
P(x,y,z),Q(x,y,z),R(x,y,z)$ in (\ref{Obz103}) can be represented as a sum of
four components 
\begin{equation}
P\equiv \widetilde{P}_{0}+P_{1}+T^{P}+\widetilde{G}_{1}\text{ \ ,}.Q\equiv 
\widetilde{Q}_{0}+Q_{1}+T^{Q}+\widetilde{G}_{2}\text{ \ ,}  \label{Obz110}
\end{equation}%
\begin{equation}
\text{ \ }R\equiv \widetilde{R}_{0}+R_{1}+T^{R}+\widetilde{G}_{3}\text{ \ \
\ .}  \label{Obz110A1}
\end{equation}%
Each one of the components is calculated to be 
\begin{equation}
\widetilde{P}_{0}\equiv \frac{\partial ^{2}L}{\partial y\partial z}-\frac{%
\partial ^{2}L}{\partial z\partial y}\text{, \ \ }P_{1}=(rot\overset{.}{%
\overrightarrow{p}})_{x}\text{ \ \ , \ \ }  \label{Obz110A2}
\end{equation}%
\begin{equation}
T^{P}\equiv \lambda _{3}\widetilde{T}_{y}-\lambda _{2}\widetilde{T}_{z}\text{
\ , \ }\widetilde{G}_{1}\equiv \lambda _{2}\frac{\partial G}{\partial f}%
\frac{1}{z_{f}}-\lambda _{3}\frac{\partial G}{\partial f}\frac{1}{y_{f}}%
\text{ \ . }  \label{Obz111}
\end{equation}%
Similar are the decompositions for $\widetilde{Q}_{0},Q_{1},T^{Q},\widetilde{%
G}_{2}\ $and$\ \widetilde{R}_{0},R_{1},T^{R},\widetilde{G}_{3}$, taking into
consideration the symmetry properties $(y,z)\rightarrow (z,x)\rightarrow
(x,y)$, $x\rightarrow y\rightarrow z$ and $(\lambda _{3},\lambda
_{2})\rightarrow (\lambda _{1},\lambda _{3})\rightarrow (\lambda
_{2},\lambda _{1})$ (circular symmetry $1\rightarrow 2\rightarrow
3\rightarrow 1$). The functions $\widetilde{T}_{x},\widetilde{T}_{y},%
\widetilde{T}_{z}$ are the partial derivatives of the propagation time $%
\widetilde{T}$ \ (\ref{K4}), according to the formulaes in Appendix B. The
functions $\widetilde{P}_{0}$, $\widetilde{Q}_{0}\equiv \frac{\partial ^{2}L%
}{\partial z\partial x}-\frac{\partial ^{2}L}{\partial x\partial z}$, $%
\widetilde{R}_{0}\equiv \frac{\partial ^{2}L}{\partial x\partial y}-\frac{%
\partial ^{2}L}{\partial y\partial x}$ in (\ref{Obz110}), (\ref{Obz110A1})
and (\ref{Obz110A2}) usually are put equal to zero, since the partial
derivatives $\frac{\partial }{\partial x},\frac{\partial }{\partial y},\frac{%
\partial }{\partial z}$ commute. However, if they are expressed through the
Kepler parameters $(f_{.},a_{.},e_{.},\Omega _{.},i_{.},\omega )$, the
corresponding commutators $\left[ \frac{\partial }{\partial x},\frac{%
\partial }{\partial y}\right] $, $\left[ \frac{\partial }{\partial y},\frac{%
\partial }{\partial z}\right] $ and $\left[ \frac{\partial }{\partial x},%
\frac{\partial }{\partial z}\right] $ will turn out to be nonzero. Let us
remind also that the coordinates of the vector fields $\overrightarrow{u}$
and $\overrightarrow{v}$ can be chosen to be%
\begin{equation}
u_{i}=(f_{.},a_{.},e_{.})\text{ \ \ \ }i=1,2,3\text{ and \ }v_{k}=(\Omega
_{.},i_{.},\omega _{.})\text{ \ \ \ \ \ \ }k=1,2,3\text{ \ \ ,}
\label{Obz112}
\end{equation}%
but also can be chosen to be the two sets of Kepler parameters for the first
(signal-emitting) and the second (signal-receiving) satellite 
\begin{equation}
u_{i}=(f_{.1},a_{1.},e_{1},_{.}\Omega _{1.},i_{1.},\omega _{1})\text{ \ \ \ }%
i=1,2,3,4,5,6\text{ \ }  \label{Obz113}
\end{equation}%
and 
\begin{equation}
\text{\ }v_{j}=(f_{.2},a_{2.},e_{2},_{.}\Omega _{2.},i_{2.},\omega _{2.})%
\text{ \ \ \ \ \ \ }j=1,2,3,4,5,6\text{ \ \ \ \ \ .}  \label{Obz114}
\end{equation}%
We shall give first the proof about the non-commutativity of the partial
derivatives and afterwards we shall remind why the result is natural from
the point of view of the Frobenius theorem, which is presented in almost
every monograph on differential geometry.

\subsection{Non-commutativity of the partial derivatives, when expressed by
derivatives, related to the Kepler parameters}

Let us investigate the case of non-commutativity of the partial derivatives $%
\frac{\partial }{\partial y}$ and $\frac{\partial }{\partial z}$.

\begin{proposition}
The commutator $\left[ \frac{\partial }{\partial y},\frac{\partial }{%
\partial z}\right] \neq 0$ is different from zero, provided that the partial
derivatives are expressed as 
\begin{equation}
\frac{\partial }{\partial z}=\frac{1}{3}\sum\limits_{k=1}^{3}\frac{1}{%
z_{u_{k}}}\frac{\partial }{\partial u_{k}}+\frac{1}{3}\sum\limits_{l=1}^{3}%
\frac{1}{z_{v_{l}}}\frac{\partial }{\partial v_{l}}\text{ \ \ \ ,}
\label{Obz115}
\end{equation}%
\begin{equation}
\text{ \ }\frac{\partial }{\partial y}=\frac{1}{3}\sum\limits_{i=1}^{3}\frac{%
1}{y_{u_{i}}}\frac{\partial }{\partial u_{i}}+\frac{1}{3}\sum%
\limits_{j=1}^{3}\frac{1}{y_{v_{j}}}\frac{\partial }{\partial v_{j}}\text{\
\ \ \ .}  \label{Obz116}
\end{equation}
\end{proposition}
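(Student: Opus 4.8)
The plan is to treat the two expressions (\ref{Obz115}) and (\ref{Obz116}) not as the genuine Cartesian coordinate derivatives (which would of course commute), but as two vector fields on the six-dimensional manifold of Kepler parameters, and to compute their Lie bracket directly. Writing $w_1,\dots,w_6$ for the collection $(u_1,u_2,u_3,v_1,v_2,v_3)$ of Kepler parameters, I would set
\[
D_z=\sum_{m=1}^{6}c_m\frac{\partial }{\partial w_m},\qquad D_y=\sum_{m=1}^{6}b_m\frac{\partial }{\partial w_m},
\]
with coefficient functions $c_m=\frac{1}{3z_{w_m}}$ and $b_m=\frac{1}{3y_{w_m}}$ read off from (\ref{Obz115}) and (\ref{Obz116}), where $z_{w_m}\equiv\partial z/\partial w_m$ and $y_{w_m}\equiv\partial y/\partial w_m$ are computed from the transformations (\ref{K1})--(\ref{K3}). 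The crucial observation is that these coefficients are reciprocals of first derivatives, and are therefore genuinely different from the inverse-Jacobian coefficients that the ordinary chain rule would produce; this is exactly the mechanism by which commutativity can fail.

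The key computational step is then the standard Lie-bracket identity
\[
\left[\frac{\partial }{\partial y},\frac{\partial }{\partial z}\right]=\sum_{n=1}^{6}\left(\sum_{m=1}^{6}\Bigl(b_m\frac{\partial c_n}{\partial w_m}-c_m\frac{\partial b_n}{\partial w_m}\Bigr)\right)\frac{\partial }{\partial w_n}.
\]
Substituting $\partial c_n/\partial w_m=-\tfrac{1}{3}z_{w_nw_m}/z_{w_n}^2$ and $\partial b_n/\partial w_m=-\tfrac{1}{3}y_{w_nw_m}/y_{w_n}^2$ (the double subscript denoting the corresponding mixed second derivative), the coefficient of $\partial/\partial w_n$ becomes
\[
-\frac{1}{9}\sum_{m=1}^{6}\left(\frac{z_{w_nw_m}}{y_{w_m}\,z_{w_n}^{2}}-\frac{y_{w_nw_m}}{z_{w_m}\,y_{w_n}^{2}}\right),
\]
so that the bracket is driven entirely by the second partial derivatives $z_{w_nw_m}$ and $y_{w_nw_m}$ of the Cartesian coordinates with respect to the Kepler parameters. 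This is precisely the quantity highlighted in the abstract, and it ties the non-commutativity to the non-degeneracy (Morse-type) condition discussed earlier in the paper.

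The final step is to show that this expression does not vanish identically. Since the transformations (\ref{K1})--(\ref{K3}) are genuinely nonlinear in each Kepler parameter, the mixed second derivatives $z_{w_nw_m}$, $y_{w_nw_m}$ are generically nonzero, and the cleanest way to conclude is to exhibit a single explicit nonzero entry, for instance by taking a parameter pair such as $(\Omega,\omega)$ or $(f,i)$ and differentiating (\ref{K2}) and (\ref{K3}) twice at a generic point. I expect the main obstacle to lie exactly here: verifying that the full sum over $m$ admits no accidental cancellation, since each term is a ratio of first and second derivatives and a priori a conspiracy among the six Kepler directions could force the total to vanish. To rule this out I would not attempt to evaluate the general identity, but instead specialize to one parameter pair and a generic configuration, where the surviving second-derivative term can be checked to be nonzero directly from the closed forms (\ref{K1})--(\ref{K3}), thereby establishing $\left[\frac{\partial}{\partial y},\frac{\partial}{\partial z}\right]\neq 0$. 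The result is then natural from the viewpoint of the Frobenius theorem, since a non-involutive pair of vector fields cannot arise as the coordinate derivatives of a common chart.
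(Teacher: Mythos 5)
Your proposal is correct and its computational core coincides with the paper's own proof: the paper also expands $\partial_y\partial_z L-\partial_z\partial_y L$, observes that the genuine second derivatives of $L$ cancel by symmetry, and is left with first-order terms whose coefficients are exactly your $-\frac{1}{9}\bigl(\frac{z_{w_nw_m}}{y_{w_m}z_{w_n}^2}-\frac{y_{w_nw_m}}{z_{w_m}y_{w_n}^2}\bigr)$, packaged there as antisymmetrized operators $M_{ik}(u_i,u_k,y\Longleftrightarrow z)=\frac{1}{y_{u_i}}\frac{\partial}{\partial u_i}\bigl(\frac{1}{z_{u_k}}\bigr)-\frac{1}{z_{u_i}}\frac{\partial}{\partial u_i}\bigl(\frac{1}{y_{u_k}}\bigr)$ summed against $\partial L/\partial u_k$ and $\partial L/\partial v_k$. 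Your route through the standard Lie-bracket coefficient formula is the cleaner way to organize the same calculation (and, incidentally, avoids the index and sign slips that appear in the paper's intermediate expressions), and your framing of the two expressions as vector fields rather than as "partial derivatives" makes explicit why commutativity can fail at all. The one place where you genuinely diverge is the final non-vanishing step: the paper simply asserts that the four constituents $M_{ik}(u_i,u_k,\cdot)$, $M_{ik}(v_i,u_k,\cdot)$, $M_{ik}(u_i,v_k,\cdot)$, $M_{ik}(v_i,v_k,\cdot)$ are "all different from one another, so no cancellation is possible," which is not an argument; you instead flag the possible conspiracy among the six Kepler directions as the real obstacle and propose to defeat it by exhibiting a single nonzero coefficient at a generic point computed from the closed forms (\ref{K1})--(\ref{K3}). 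Your strategy is the more defensible one, but note that as written it is still a plan rather than a proof --- neither you nor the paper actually produces the witness evaluation, so on this point your proposal matches the paper's level of rigor while being more honest about where the gap lies.
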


\begin{proof}
The second derivative can easily be calculated to be 
\begin{equation}
\frac{\partial ^{2}}{\partial y\partial z}=\frac{1}{3}\sum\limits_{k=1}^{3}%
\frac{\partial }{\partial y}\left( \frac{1}{z_{u_{k}}}\frac{\partial }{%
\partial u_{k}}\right) +\frac{1}{3}\sum\limits_{l=1}^{3}\frac{\partial }{%
\partial y}\left( \frac{1}{z_{v_{l}}}\frac{\partial }{\partial v_{l}}\right) 
\text{ \ .}  \label{Obz117}
\end{equation}%
Let us denote by $M_{ik}(u_{i},u_{k},y\Longleftrightarrow z)$ the operator 
\begin{equation}
M_{ik}(u_{i},u_{k},y\Longleftrightarrow z)\equiv
M_{ik}(u_{i},u_{k},y,z)-M_{ik}(u_{i},u_{k},z,y)\text{ \ \ ,}  \label{Obz118}
\end{equation}%
where $M_{ik}(u_{i},u_{k},y,z)$ denotes the operator 
\begin{equation}
M_{ik}(u_{i},u_{k},y,z)\equiv \frac{1}{y_{u_{i}}}\frac{\partial }{\partial
u_{i}}\left( \frac{1}{z_{u_{k}}}\frac{\partial }{\partial u_{k}}\right) 
\text{ \ \ .}  \label{Obz118A}
\end{equation}%
Respectively%
\begin{equation}
M_{ik}(u_{i},u_{k},z,y)\equiv \frac{1}{z_{u_{i}}}\frac{\partial }{\partial
u_{i}}\left( \frac{1}{y_{u_{k}}}\frac{\partial }{\partial u_{k}}\right) 
\text{ \ \ .}  \label{Obz118B}
\end{equation}%
Then the operator $M_{ik}(u_{i},u_{k},y\Longleftrightarrow z)$ (\ref{Obz118}%
), which is antisymmetric with respect to the variables $y$ and $z$ can be
rewritten as 
\begin{equation}
M_{ik}(u_{i},u_{k},y\Longleftrightarrow z)\equiv \frac{1}{y_{u_{i}}}\frac{%
\partial }{\partial u_{i}}\left( \frac{1}{z_{u_{k}}}\right) -\frac{1}{%
z_{u_{i}}}\frac{\partial }{\partial u_{i}}\left( \frac{1}{y_{u_{k}}}\right)
\label{Obz119}
\end{equation}%
\begin{equation}
=\frac{1}{y_{u_{i}}}\frac{1}{z_{u_{k}}}\left[ \frac{1}{z_{u_{k}}}\frac{%
\partial ^{2}z}{\partial u_{i}\partial u_{k}}+\frac{1}{y_{u_{i}}}\frac{%
\partial ^{2}y}{\partial u_{i}\partial u_{k}}\right]  \label{Obz120}
\end{equation}%
\begin{equation}
=\frac{1}{y_{u_{i}}z_{u_{k}}}\left[ (\ln z_{u_{k}})_{,u_{i}}+(\ln
y_{u_{i}})_{,u_{k}}\right] \text{ \ \ \ .}  \label{Obz121}
\end{equation}%
Evidently, the operator $M_{ik}(u_{i},u_{k},y\Longleftrightarrow z)$ is
different from zero! Now the commutator $\left[ \frac{\partial L}{\partial y}%
,\frac{\partial L}{\partial z}\right] $ for $\widetilde{P}_{0}$ in (\ref%
{Obz110A2}) between the two partial derivatives can be calculated as 
\begin{equation*}
\widetilde{P}_{0}\equiv \frac{\partial ^{2}L}{\partial y\partial z}-\frac{%
\partial ^{2}L}{\partial z\partial y}=\left[ \frac{\partial }{\partial y},%
\frac{\partial }{\partial z}\right] L
\end{equation*}%
\begin{equation*}
=\frac{1}{9}\sum\limits_{i,k=1}^{3}\left[ M_{ik}(u_{i},u_{k},y%
\Longleftrightarrow z)+M_{ik}(v_{i},u_{k},y\Longleftrightarrow z)\right] 
\frac{\partial L}{\partial u_{k}}
\end{equation*}%
\begin{equation}
+\frac{1}{9}\sum\limits_{i,k=1}^{3}\left[ M_{ik}(u_{i},v_{k},y%
\Longleftrightarrow z)+M_{ik}(v_{i},v_{k},y\Longleftrightarrow z)\right] 
\frac{\partial L}{\partial v_{k}}\text{ \ \ \ \ .}  \label{Obz122}
\end{equation}%
In the same way, the other commutators $\left[ \frac{\partial L}{\partial z},%
\frac{\partial L}{\partial x}\right] $ and $\left[ \frac{\partial L}{%
\partial x},\frac{\partial L}{\partial y}\right] $ can be calculated. Note
that when we have the case of the defining equalities (\ref{Obz113}) and (%
\ref{Obz114}) for $u_{i}$ and $v_{j}$, $i,j=1,2,......6$, the summation in (%
\ref{Obz122}) will be from $1$ to $6$ instead from $1$ to $3$. In any case,
all the four constituents in the two sums will be different one from
another, so no cancellation of terms with opposite signs will be possible.
\end{proof}

\subsubsection{A partial case of the commutativity of the partial derivatives%
}

Suppose now that the inverse function theorem can be applied. This means
that if for example the derivative $\frac{\partial y_{i}}{\partial u_{k}}$
or \bigskip $\frac{\partial y_{i}}{\partial v_{k}}$ can be calculated, then
the derivatives $\frac{\partial u_{k}}{\partial y_{i}}$ and $\frac{\partial
v_{k}}{\partial y_{i}}$ can also be calculated. Now it is important to
remind briefly the formulation of the inverse function theorem, so that it
is evident that in the concrete case this theorem can be applied because the
dimension of the initial manifold $M_{1}$ is $6$ (equal to the number of the
Kepler parameters in the right-hand side of the equations (\ref{K1}) - (\ref%
{K3})) and the dimension of the second manifold $M_{2}$ is equal to $3$ -
the number of the equations.

\begin{theorem}
\cite{MishtFomCourse} Let $f:R^{n}\rightarrow R^{k}$ is a mapping of
smoothness class $C^{\infty }$ of the system of equations and $M_{%
\overrightarrow{C}}$ is the set of solutions of the system of equations $f(%
\overrightarrow{x})=\overrightarrow{C}$, $(x^{1},x^{2},....x^{n})$, $%
\overrightarrow{C}=(C^{1},C^{2}......C^{k})$ and the mapping $f:$ $%
R^{n}\rightarrow R^{k}$ is determined by the functions $%
(f^{1},f^{2},.......f^{k})$, where 
\begin{equation}
f^{1}(x^{1},x^{2},....x^{n})=C^{1}\text{,.............,}%
f^{k}(x^{1},x^{2},....x^{n})=C^{k}\text{ \ \ .}  \label{Obz123}
\end{equation}%
If the rank of the Jacobi matrix for the mapping $f$ is maximal at each
point $P_{0}\in $ $M_{\overrightarrow{C}}$ (i.e.$rk$ $df(P_{0})=k$), then $%
M_{\overrightarrow{C}}$ is a $(n-k)$ dimensional smooth manifold of class $%
C^{\infty }$. Then in the neighborhood of each point $P_{0}\in $ $M_{%
\overrightarrow{C}}$ one can choose as local coordinates some $(n-k)$
cartesian coordinates of the embedding Euclidean space.
\end{theorem}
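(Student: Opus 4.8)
The plan is to recognize this statement as the standard regular value (preimage) theorem, whose proof reduces entirely to the inverse function theorem invoked in the heading of this section. Fix a point $P_0 \in M_{\overrightarrow{C}}$. By hypothesis the $k \times n$ Jacobian matrix $df(P_0) = \left( \partial f^i / \partial x^j \right)$ has rank $k$, so among its $n$ columns there exist $k$ linearly independent ones. After a relabelling of the coordinates $x^1, \dots, x^n$ I may assume that the last $k$ columns are independent, i.e. the $k \times k$ block $\left( \partial f^i / \partial x^{n-k+j} \right)_{1 \le i,j \le k}$ is invertible at $P_0$. Writing $x = (x', x'')$ with $x' = (x^1, \dots, x^{n-k})$ and $x'' = (x^{n-k+1}, \dots, x^n)$, this is exactly the non-degeneracy condition needed to apply the inverse function theorem to the variables $x''$.

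First I would build the auxiliary map $F : \mathbb{R}^n \rightarrow \mathbb{R}^n$ defined by
\begin{equation}
F(x^1, \dots, x^n) = \left( x^1, \dots, x^{n-k}, f^1(x), \dots, f^k(x) \right) .
\end{equation}
Its Jacobian at $P_0$ is block triangular, with an $(n-k) \times (n-k)$ identity block in the upper-left corner and the invertible block $\left( \partial f^i / \partial x^{n-k+j} \right)$ in the lower-right corner, so $\det dF(P_0) \neq 0$. By the inverse function theorem $F$ is a diffeomorphism of some open neighbourhood $W$ of $P_0$ onto an open set $F(W) \subset \mathbb{R}^n$. In the new coordinates $y = F(x)$ the level set $M_{\overrightarrow{C}}$ is carried to the coordinate slice $\{ y^{n-k+1} = C^1, \dots, y^n = C^k \}$, because $F$ leaves the first $n-k$ coordinates untouched and replaces the last $k$ by $f^1, \dots, f^k$. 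Hence $F$ restricts to a homeomorphism from $M_{\overrightarrow{C}} \cap W$ onto an open subset of the $(n-k)$-dimensional plane, and since $y^1 = x^1, \dots, y^{n-k} = x^{n-k}$, the first $n-k$ cartesian coordinates of the ambient Euclidean space serve as local coordinates, which is the last assertion of the theorem.

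Next I would assemble these local charts into a smooth atlas. For two overlapping neighbourhoods $W_\alpha, W_\beta$ the associated charts are the restrictions of the projection onto the free coordinates, and the corresponding transition map is $F_\beta \circ F_\alpha^{-1}$ restricted to the slice; being a composition of the $C^\infty$ maps $F_\alpha$ and $F_\beta^{-1}$ (the latter smooth by the inverse function theorem), it is itself $C^\infty$. This establishes that $M_{\overrightarrow{C}}$ is a smooth manifold of class $C^\infty$ and of dimension $n - k$, as claimed. Equivalently, one may bypass the auxiliary map $F$ and apply the implicit function theorem directly to solve $f(x', x'') = \overrightarrow{C}$ for $x'' = g(x')$ with $g$ smooth, exhibiting $M_{\overrightarrow{C}}$ locally as the graph of $g$ over the free coordinates $x'$; the two routes are logically equivalent.

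The step I expect to be the main obstacle is not any single computation but the bookkeeping of the coordinate selection: the $k$ columns that are independent may differ from one point $P_0$ to another, so the partition $x = (x', x'')$ -- and therefore the set of $n-k$ cartesian coordinates that qualify as local coordinates -- is only locally constant, not global. Care is needed to check that the maximal-rank hypothesis holds at \emph{every} point of $M_{\overrightarrow{C}}$, since it guarantees that around each point some admissible partition exists, and to verify that the resulting charts are mutually compatible regardless of which $k$ coordinates were eliminated. For the concrete application to the transformations (\ref{K1})--(\ref{K3}) with $n = 6$ and $k = 3$, this means one must confirm that some $3 \times 3$ minor of the Jacobian of $(x,y,z)$ with respect to the Kepler parameters is non-singular on the relevant domain; where every such minor degenerates one recovers precisely the critical points examined earlier, for instance the $\Omega$-degeneracies of (\ref{Obz88AA3})--(\ref{Obz88AA5}).
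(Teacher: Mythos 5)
Your proof is correct: the auxiliary map $F(x)=(x^{1},\dots ,x^{n-k},f^{1}(x),\dots ,f^{k}(x))$ with block-triangular invertible Jacobian, the inverse function theorem, the identification of $M_{\overrightarrow{C}}\cap W$ with a coordinate slice (so that the remaining $n-k$ cartesian coordinates serve as a chart), and the smoothness of the transition maps together constitute the standard proof of this regular-value theorem, and your closing remark that the admissible partition of coordinates is only locally constant and that maximal rank must hold at every point of the level set is exactly the right caveat. Note that the paper itself offers no proof at all — the statement is quoted as a known theorem from \cite{MishtFomCourse} and immediately used to justify the existence of the inverse derivatives $\frac{\partial u_{k}}{\partial y_{i}}$ and $\frac{\partial v_{k}}{\partial y_{i}}$ — so your write-up simply supplies the canonical argument that the citation points to, including the equivalent implicit-function-theorem route.
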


In other words, in the neighborhood of the non-degenerate points such local
coordinates can be introduced and the "inverse derivatives" such as $\frac{%
\partial u_{k}}{\partial y_{i}}$ and $\frac{\partial v_{k}}{\partial y_{i}}$
can be found \ 

From the definition of the function $M_{ik}(u_{i},u_{k},y\Longleftrightarrow
z)$ in (\ref{Obz118}) and (\ref{Obz119}) it will follow that

\begin{equation}
M_{ik}(u_{i},v_{k},y\Longleftrightarrow z)=\frac{\partial u_{i}}{\partial y}.%
\frac{\partial }{\partial u_{i}}\left( \frac{\partial u_{k}}{\partial z}%
\right) -\frac{\partial u_{i}}{\partial z}.\frac{\partial }{\partial u_{i}}%
\left( \frac{\partial u_{k}}{\partial y}\right) =  \label{Obz124}
\end{equation}%
\begin{equation}
=\frac{\partial }{\partial y}\left( \frac{\partial u_{k}}{\partial z}\right)
-\frac{\partial }{\partial z}\left( \frac{\partial u_{k}}{\partial y}\right)
=0\text{ \ \ \ .}  \label{Obz125}
\end{equation}%
However, the conditions of the inverse function theorem, when the "inverse
partial derivatives" can be found \ are not always fulfilled, and these are
the critical (singular) points.

Definitions and proofs of the inverse function theorem can be found also in 
\cite{DubrFomVol1}, \cite{Sternberg}, \cite{Boothby}, \cite{Narasimhan}, a
particular clear formulation is presented in \cite{TrofimFom} and other
books.

Consequently, for the cases when the inverse function theorem cannot be
applied (in particular for critical points), such a non-commutativity may
appear. In order to avoid such situations, some authors assert \cite{Boothby}
that "the Stokes theorem holds only for smooth manifolds with a smooth
boundary". But at the same time, it has been noted also that \ \cite{Boothby}
"the search for reasonable domains of integration to validate the Stokes
theorem leads to the concept of symplexial or polyhedral complex, i.e. a
space made up of fastening together along their faces a number of symplexes 
(line segments, triangles, tetrahedra) and their generalizations (polyhedra,
cubes). Then every $C^{\infty }$ manifold can be "triangulated", which means
that even with considerable smoothness it can be homeomorphic to such a
complex and thus, the integral becomes a sum of integrals over the pieces,
which are images of symplexes, cubes and etc."

The reason for introducing this citation is that we have already found three
boundaries of the manifold - the first one is the $(x,y)$ boundary,
represented by the plane equation (\ref{Obz66}) or (\ref{Obz67}), then the $%
(x,z)$ plane, represented by the plane equation (\ref{Obz74}) and the third $%
(y,z)$ boundary, which is represented by the plane (\ref{Obz81A}). So in a
further research it will be interesting to see whether the three planes will
intersect and thus the intersecting points will determine the boundaries of
integration in the Stokes theorem.

\bigskip In fact, the "commutative case" is related to the s.c. "integrable"
or "involutive" case, which is the essence of the Frobenius theorem.

\subsubsection{ Different formulations of the Frobenius theorem}

The Frobenius theorem has several definitions, which differ in their
formulation.

Let us give first the most concise one 

\begin{definition}
\cite{ArnoldGeom}. Let $M$ is a smooth manifold with a set of tangent
hyperplanes. In the neighborhood of a point with local coordinates $%
(x_{1},x_{2},......x_{n})$ the set of hyperplanes is given by the
differential $1-$form $\alpha =du-pdx$ (where $u=f(x)$), different from zero
and where $pdx=p_{1}dx_{1}+....+p_{n}dx_{n}$, $p_{i}=\frac{\partial f}{%
\partial x^{i}}(x)$.The function $f(x)$ is a $1-$jet function or is called
also a standard contact $1-$form on the manifold $J^{1}(V,%
\mathbb{R}
)$, determined by the set of $(2n+1)$ numbers $%
(x_{1},x_{2},......x_{n};u;p_{1},p_{2},.....p_{n})$. Then the Frobenius
theorem states the condition when a set of hyperplanes in $%
\mathbb{R}
^{3}$ will have integral hypersurfaces. In other words, this set of
hypersurfaces is fully integrable when the form $d\alpha $ in the
corresponding hyperplane is zero. 
\end{definition}

\begin{definition}
\cite{ArnoldGeom} The hyperplane $\alpha =0$ is fully integrable if and only
if the equality $\alpha \Lambda d\alpha =0$ is fulfilled.
\end{definition}

However, the notion of integrability in \cite{Narasimhan} is defined in a
slightly different and more general context.

\begin{definition}
\cite{Narasimhan} The differential system $D$ of class $C^{r}$ $(r\geq 2)$,
defined on a manifold $V$ is completely integrable if and only if it is
involutive, which means that at each point $a\in V$ there exists a
neighborhood $U$ and $C^{r}$ vector fields $X_{1},X_{2}.....X_{p}$
(determining $D$ on $U$) such that $\left[ X_{\mu },X_{\nu }\right] (b)\in
D(b)$.\ 
\end{definition}

In case the differential form system $D$ is specified to be 
\begin{equation}
D(a)=\left\{ X\in T_{a}V\text{ \ }:\omega _{\nu }(a)(X)=0\text{ \ , }p<\nu
\leq n\right\} \text{ \ \ }  \label{Obz126}
\end{equation}%
and $\omega _{p+1},.....,\omega _{n}$ are linearly independent at each point 
$1-$forms on $V$. Then again in \cite{Narasimhan} the following theorem has
been proved:

\begin{theorem}
If in suitable coordinates the one-forms $\omega _{\nu }$ are expressed as 
\begin{equation}
\omega _{\nu }=dx_{\nu }+\sum\limits_{\mu \leq p}a_{\nu \mu }dx_{\mu }\text{
\ \ , \ }\nu >p\text{ \ , }a_{\nu \mu }\in C(u)\text{ \ \ \ ,}
\label{Obz127}
\end{equation}%
then $D$ is a differential form system, formed by the vector fields 
\begin{equation}
X_{\mu }=\frac{\partial }{\partial x_{\mu }}-\sum\limits_{\nu >p}a_{\nu \mu
}\frac{\partial }{\partial x_{\nu }}\text{ \ \ ,}  \label{Obz128}
\end{equation}%
so that $\omega _{\nu }(X_{\mu })=0$ .
\end{theorem}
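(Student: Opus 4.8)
The plan is to verify the statement by a direct pairing of the one-forms against the proposed vector fields, using nothing more than the canonical duality $dx_i(\partial/\partial x_j)=\delta_{ij}$ between the coordinate coframe and the coordinate frame. First I would fix indices $\nu>p$ and $\mu\leq p$ and expand the scalar $\omega_\nu(X_\mu)$ by bilinearity. Substituting the expressions (\ref{Obz127}) and (\ref{Obz128}), and renaming the dummy index of the form to $\sigma$ and that of the field to $\tau$ so they stay distinct, produces four contributions: the pairing of $dx_\nu$ with $\partial/\partial x_\mu$; the pairing of $dx_\nu$ with $-\sum_{\tau>p}a_{\tau\mu}\,\partial/\partial x_\tau$; the pairing of $\sum_{\sigma\leq p}a_{\nu\sigma}dx_\sigma$ with $\partial/\partial x_\mu$; and the double-sum pairing of $\sum_{\sigma\leq p}a_{\nu\sigma}dx_\sigma$ against $-\sum_{\tau>p}a_{\tau\mu}\,\partial/\partial x_\tau$.

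The decisive point is index bookkeeping. The ranges $\mu\leq p$ and $\nu>p$ are disjoint, so the first term gives $dx_\nu(\partial/\partial x_\mu)=\delta_{\nu\mu}=0$. In the fourth term the Kronecker delta $\delta_{\sigma\tau}$ forces $\sigma=\tau$, impossible since $\sigma\leq p<\tau$, so it vanishes as well. The two surviving terms are the second, which evaluates to $-\sum_{\tau>p}a_{\tau\mu}\delta_{\nu\tau}=-a_{\nu\mu}$ because $\nu$ lies in its summation range, and the third, which evaluates to $\sum_{\sigma\leq p}a_{\nu\sigma}\delta_{\sigma\mu}=+a_{\nu\mu}$ because $\mu$ lies in its range. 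These cancel, so $\omega_\nu(X_\mu)=0$ for every admissible pair, which is the asserted annihilation relation.

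It then remains to confirm that the $X_\mu$ actually \emph{constitute} the system $D$, not merely lie inside it. For this I would invoke a dimension count. By hypothesis the $n-p$ forms $\omega_{p+1},\dots,\omega_n$ are pointwise linearly independent, so their common kernel $D(a)$ in (\ref{Obz126}) has dimension $p$. On the other hand the $p$ fields $X_1,\dots,X_p$ are manifestly independent: each $X_\mu$ carries $\partial/\partial x_\mu$ with coefficient one and contains no other $\partial/\partial x_\sigma$ with $\sigma\leq p$, so their leading blocks form the identity matrix. Having exhibited $p$ independent vectors inside the $p$-dimensional space $D(a)$, I conclude they form a frame for $D$, so $D$ is spanned by the $X_\mu$ as claimed.

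I expect no genuine obstacle here: the statement is a verification whose entire subtlety is the disjointness of the index sets $\{1,\dots,p\}$ and $\{p+1,\dots,n\}$, which is exactly what makes the two cross terms cancel and the two spurious terms vanish. The only point demanding mild care is keeping the summation index of the one-form distinct from that of the vector field, so that each Kronecker delta is evaluated against the correct range.
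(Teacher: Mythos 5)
Your verification is correct, but there is nothing in the paper to compare it against: the paper does not prove this theorem at all. It is quoted as a known result from the cited monograph \cite{Narasimhan} (``Then again in \cite{Narasimhan} the following theorem has been proved''), and the text immediately proceeds to apply it to the Kepler-parameter vector fields. Your argument supplies the missing verification in the standard way: expanding $\omega_{\nu}(X_{\mu})$ by bilinearity with the duality $dx_{i}(\partial /\partial x_{j})=\delta _{ij}$, observing that the two ``pure'' terms vanish because the index ranges $\{1,\dots ,p\}$ and $\{p+1,\dots ,n\}$ are disjoint, and that the two cross terms give $-a_{\nu \mu }+a_{\nu \mu }=0$; then the dimension count $\dim D(a)=n-(n-p)=p$, together with the manifest independence of $X_{1},\dots ,X_{p}$ (identity block in the $\partial /\partial x_{\mu }$, $\mu \leq p$, coordinates), shows these fields actually span $D$ rather than merely lie in it. The one hypothesis you use beyond the displayed formulas --- pointwise linear independence of $\omega _{p+1},\dots ,\omega _{n}$ --- is stated in the paper just before the theorem, so your appeal to it is legitimate. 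In short: the proof is sound, complete, and is the natural (indeed the textbook) route; the paper itself offers only the citation.
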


If $X_{\mu }$ and $\frac{\partial }{\partial x_{\mu }}$ are the
corresponding vectors in the two tangent space, we can write also the
relations 
\begin{equation}
X_{\mu }=\sum\limits_{\nu =1}^{p}b_{\mu \nu }Y_{\nu }\text{ \ , \ \ \ \ }%
Y_{\nu }(x)=\sum\limits_{\alpha =1}^{n}a_{\nu \alpha }(x)\left( \frac{%
\partial }{\partial x_{\alpha }}\right) _{x}\text{\ }  \label{Obz129}
\end{equation}%
Then it can easily be calculated that 
\begin{equation}
\left[ X_{\mu },X_{\nu }\right] =\sum\limits_{m=1}^{p}\lambda
_{m}X_{m}=\sum\limits_{m=1}^{n}\xi _{m\beta }\frac{\partial }{\partial
x_{\beta }}\text{ \ \ .}  \label{Obz130}
\end{equation}%
Concretely for our case, instead of the local coordinates $x_{\mu }$ we have 
$u_{\mu }$ and $v_{\nu }$, the analog of (\ref{Obz130}) is (\ref{Obz122}),
instead of $X_{\mu }$ and $X_{\nu }$ we have $\frac{\partial }{\partial
x_{\mu }}$ and $\frac{\partial }{\partial x_{\nu }}$. Let us introduce also
the notations  
\begin{equation}
\xi _{ik}^{(1)}\equiv M_{ik}(u_{i},u_{k},y\Longleftrightarrow
z)+M_{ik}(v_{i},u_{k},y\Longleftrightarrow z)\text{ \ \ \ ,}
\label{Obz130A1}
\end{equation}%
\begin{equation}
\xi _{ik}^{(2)}\equiv M_{ik}(u_{i},v_{k},y\Longleftrightarrow
z)+M_{ik}(v_{i},v_{k},y\Longleftrightarrow z)\text{ \ \ \ .}
\label{Obz130A2}
\end{equation}
The commutator $\left[ \frac{\partial }{\partial x_{\mu }},\frac{\partial }{%
\partial x_{\nu }}\right] $ acts on the Lagrangian $L$ and since 
\begin{equation}
\left[ \frac{\partial }{\partial x_{\mu }},\frac{\partial }{\partial x_{\nu }%
}\right] L=\frac{1}{9}\sum\limits_{i,k=1}^{3}\left[ \xi _{ik}^{(1)}\frac{%
\partial }{\partial u_{k}}+\xi _{ik}^{(2)}\frac{\partial }{\partial v_{k}}%
\right] L\neq 0\text{ \ \ \ ,}  \label{Obz131}
\end{equation}%
the system is not involutive.

The involutive meaning of the Frobenius theorem in \cite{Boothby} is
formulated with the purpose of finding the condition for the integrability
of the system of nonlinear equations 
\begin{equation}
\frac{\partial z}{\partial x}=g(x,y,z)\text{ \ \ , \ }\frac{\partial z}{%
\partial y}=h(x,y,z)\text{ \ \ \ .}  \label{Obz132}
\end{equation}%
Then the solution will exist if the following pair of vector fields $X$ and $%
Y$ can be constructed so that 
\begin{equation}
X=\frac{\partial }{\partial x}+f_{x}(x,y)\frac{\partial }{\partial z}\text{
\ \ , \ \ }Y=\frac{\partial }{\partial y}+f_{y}(x,y)\frac{\partial }{%
\partial z}\text{ \ , }  \label{Obz133}
\end{equation}%
where $z=f(x,y)$. Then the necessary condition for the existence of an
independent pair of vector fields $X$ and $Y$, defining a tangent plane to
the surface \ $z=f(x,y)$, can be expressed as 
\begin{equation}
\left[ X,Y\right] =\left( \frac{\partial ^{2}f}{\partial x\partial y}-\frac{%
\partial ^{2}f}{\partial x\partial y}\right) =0\text{ \ \ . }  \label{Obz134}
\end{equation}%
However, it should be kept in mind that this formulation can be applied if
it is possible to express the $z-$coordinate as a function $%
z=f(x,y)$ of $x$ and $y$ . In the case of the transformations (\ref{K1}) - (%
\ref{K3}) this is impossible to the following reason: from the already used
relation $x^{2}+y^{2}+z^{2}=r^{2}=\left( \frac{a\left( 1-e^{2}\right) }{%
1+e\cos f}\right) ^{2}$(see also (\ref{Obz13}) and (\ref{Obz14})) $\cos f$
can be expressed and substituted in the equations (\ref{K1})- (\ref{K3}) for 
$x,y$ and $z$. Then a much more complicated algebraic equation will be
obtained, which even does not guarantee that it will be possible to find any
dependence \ $z=f(x,y)$.

\section{Second stage of the variational formalism, based on the
Gauss-Ostrogradsky theorem}

\subsection{Formulation of the Gauss-Ostrogradsky theorem}

Previously we started from the Stokes theorem (\ref{Obz102}) $%
\int\limits_{\partial M}\left[ Adx+Bdy+Cdz\right] $%
\begin{equation*}
=\int\limits_{M}\left[ Pdy\Lambda dz+Qdz\Lambda dx+Rdx\Lambda dy\right] 
\text{ \ \ ,}
\end{equation*}
where in the right-hand side the integration is  over the $2-$form $\omega
_{2}$ 
\begin{equation}
\omega _{2}=Pdy\Lambda dz+Qdz\Lambda dx+Rdx\Lambda dy\text{ \ \ \ \ ,}
\label{Obz135}
\end{equation}%
defined on the manifold $M$. The essence of the Gauss-Ostrogradsky theorem
is that the integration in the right-hand side of (\ref{Obz102}) may be
considered as an integration on some boundary $\partial M$ of the manifold,
which after applying a new differentiation $d\omega _{2}$ will become an 
integral over a $3-$form.

The Gauss-Ostrogradsky theorem is based on the general equation 
\begin{equation}
\int\limits_{\partial M}\omega _{2}=-\int\limits_{\partial M}i^{\ast }\omega
_{2}=\int\limits_{M}d\omega _{2}  \label{Obz136}
\end{equation}%
\begin{equation}
=\int\limits_{M}\left[ \frac{\partial P}{\partial x}+\frac{\partial Q}{%
\partial y}+\frac{\partial R}{\partial z}\right] dx\Lambda dy\Lambda dz\text{
\ \ \ \ \ .}  \label{Obz137}
\end{equation}%
We shall not give the detailed calculation how the three-form is derived -
this can be found in almost all textbooks on differential geometry such as 
\cite{SchwarzPhys}, \cite{MishtFomCourse}, \cite{Boothby}, \cite{Sternberg}, 
\cite{DubrFomVol1} and others. Let us remind the functions $P,Q$ and $R$,
calculated in (\ref{Obz110}) and (\ref{Obz110A1})%
\begin{equation}
P\equiv \widetilde{P}_{0}+P_{1}+T^{P}+\widetilde{G}_{1}\text{ \ ,}.Q\equiv 
\widetilde{Q}_{0}+Q_{1}+T^{Q}+\widetilde{G}_{2}\text{ \ , \ }R\equiv 
\widetilde{R}_{0}+R_{1}+T^{R}+\widetilde{G}_{3}\text{ \ \ \ .}
\label{Obz138}
\end{equation}
The function $T^{P}$ previously was defined as \bigskip $\overset{.}{T^{P}}%
\equiv \lambda _{3}T_{y}-\lambda _{2}T_{z}$, analogously $T^{Q},T^{R}$ are
defined.

\subsection{Calculation of the functions in the right-hand side of the
Gauss-Ostrogradsky equation}

\begin{proposition}
The following two equalities are fulfilled 
\begin{equation}
\frac{\partial P_{1}}{\partial x}+\frac{\partial Q_{1}}{\partial y}+\frac{%
\partial R_{1}}{\partial z}=0\text{ \ \ ,}  \label{Obz139}
\end{equation}%
\begin{equation}
\frac{\partial T^{P}}{\partial x}+\frac{\partial T^{Q}}{\partial y}+\frac{%
\partial T^{R}}{\partial z}=0\text{ \ }  \label{Obz140}
\end{equation}%
but only provided the partial derivatives $\frac{\partial }{\partial x},%
\frac{\partial }{\partial y},\frac{\partial }{\partial z}$ commute. These
two equalities will not be equal to zero, if expressed in terms of the
Kepler parameters. In case the partial derivatives do not commute,
expression (\ref{Obz122}) for $\widetilde{P}_{0}$ should be used
(analogously, $\widetilde{Q}_{0}$ and $\widetilde{R}_{0}$ are constructed)
for calculation of the expressions for $P,Q,R$ in (\ref{Obz110}) and (\ref%
{Obz110A1}).
\end{proposition}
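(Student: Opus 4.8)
The plan is to recognize that in both identities (\ref{Obz139}) and (\ref{Obz140}) the three summands are exactly the Cartesian components of a curl-type vector field, so that each claim collapses to one of the two classical identities $\mathrm{div}\,\mathrm{rot}\equiv 0$ and $\mathrm{rot}\,\nabla\equiv 0$, both of which are nothing but the equality of mixed second partials — precisely the assumed commutativity of $\frac{\partial}{\partial x}$, $\frac{\partial}{\partial y}$, $\frac{\partial}{\partial z}$.

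For (\ref{Obz139}) I would start from the identifications already recorded in (\ref{Obz108})--(\ref{Obz109}), namely $P_1=(\mathrm{rot}\,\overset{.}{\overrightarrow{p}})_x$, $Q_1=(\mathrm{rot}\,\overset{.}{\overrightarrow{p}})_y$, $R_1=(\mathrm{rot}\,\overset{.}{\overrightarrow{p}})_z$, so that the left-hand side of (\ref{Obz139}) is literally $\mathrm{div}\,\mathrm{rot}\,\overset{.}{\overrightarrow{p}}$. Writing it out, every second-order term such as $\frac{\partial^2\overset{.}{p_{y}}}{\partial x\,\partial z}$ occurs twice with opposite signs, and the six terms cancel in three pairs as soon as the partial derivatives commute on the smooth components of $\overset{.}{\overrightarrow{p}}$. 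This exhausts the first equality; no property of $L$ beyond smoothness enters.

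For (\ref{Obz140}) I would first read off, from the circular symmetry $(\lambda_3,\lambda_2)\to(\lambda_1,\lambda_3)\to(\lambda_2,\lambda_1)$ together with $x\to y\to z$, that $T^{Q}=\lambda_1\widetilde{T}_z-\lambda_3\widetilde{T}_x$ and $T^{R}=\lambda_2\widetilde{T}_x-\lambda_1\widetilde{T}_y$, whence $(T^P,T^Q,T^R)=\nabla\widetilde{T}\times\overrightarrow{\lambda}$ with $\overrightarrow{\lambda}=(\lambda_1,\lambda_2,\lambda_3)$. Using the identity $\mathrm{div}(\overrightarrow{A}\times\overrightarrow{B})=\overrightarrow{B}\cdot\mathrm{rot}\,\overrightarrow{A}-\overrightarrow{A}\cdot\mathrm{rot}\,\overrightarrow{B}$ gives $\mathrm{div}(\nabla\widetilde{T}\times\overrightarrow{\lambda})=\overrightarrow{\lambda}\cdot\mathrm{rot}\,\nabla\widetilde{T}-\nabla\widetilde{T}\cdot\mathrm{rot}\,\overrightarrow{\lambda}$. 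The first term is $\overrightarrow{\lambda}\cdot\mathrm{rot}\,\nabla\widetilde{T}=0$ because the curl of a gradient vanishes under commuting partials — this is exactly the cancellation of all terms carrying second derivatives of $\widetilde{T}$. With the multipliers taken constant (the standard Lagrange-multiplier choice) the remaining term vanishes as $\mathrm{rot}\,\overrightarrow{\lambda}=0$, yielding (\ref{Obz140}).

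The step I expect to be the real obstacle is exactly this last term $-\nabla\widetilde{T}\cdot\mathrm{rot}\,\overrightarrow{\lambda}$: if the $\lambda_i$ are genuinely functions of $(x,y,z)$ as written in (\ref{Obz100}), its vanishing demands an irrotational multiplier field $\mathrm{rot}\,\overrightarrow{\lambda}=0$, which is strictly stronger than mere commutativity, so the honest reading is that (\ref{Obz140}) holds once the $\lambda_i$ are constant (or at least irrotational), and I would flag this explicitly rather than absorb it silently. Finally, to establish the concluding sentence of the Proposition I would appeal to the earlier Proposition on $[\frac{\partial}{\partial y},\frac{\partial}{\partial z}]\neq 0$: substituting the Kepler-parameter expansions (\ref{Obz115})--(\ref{Obz116}) for each $\frac{\partial}{\partial x_a}$ replaces every cancelling pair above by the nonzero operators $M_{ik}(\cdots,y\Longleftrightarrow z)$ of (\ref{Obz121}), so that neither $\mathrm{div}\,\mathrm{rot}$ nor $\mathrm{rot}\,\nabla$ collapses and both sums acquire nonzero values in the Kepler variables, as asserted.
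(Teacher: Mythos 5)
Your proof is correct, and it follows what is evidently the route the paper intends: the paper states this Proposition without giving any proof at all, but its own identification of $P_{1},Q_{1},R_{1}$ as the components of $\mathrm{rot}\,\overset{.}{\overrightarrow{p}}$ in (\ref{Obz108})--(\ref{Obz109}) makes (\ref{Obz139}) exactly the classical identity $\mathrm{div}\,\mathrm{rot}\equiv 0$, i.e. pairwise cancellation of mixed second partials, which is what you write out; the analogous cancellation of the mixed derivatives $\widetilde{T}_{xy}=\widetilde{T}_{yx}$, etc., is the intended content of (\ref{Obz140}); and your reduction of the final Kepler-variable claim to the earlier non-commutativity Proposition (each cancelling pair being replaced by the nonzero operators $M_{ik}$ of (\ref{Obz121}) once the expansions (\ref{Obz115})--(\ref{Obz116}) are substituted) is also the mechanism the paper relies on. Your one substantive departure is the caveat you flag, and it is well taken: writing $(T^{P},T^{Q},T^{R})=\nabla \widetilde{T}\times \overrightarrow{\lambda }$, the divergence equals $\overrightarrow{\lambda }\cdot \mathrm{rot}\,\nabla \widetilde{T}-\nabla \widetilde{T}\cdot \mathrm{rot}\,\overrightarrow{\lambda }$, and with the multipliers $\lambda _{i}(x,y,z)$ taken as genuine functions, as they are introduced in (\ref{Obz100}), commutativity of the partial derivatives kills only the first term; the second survives unless $\overrightarrow{\lambda }$ is constant or at least irrotational. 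This is a gap in the paper's statement rather than in your argument: the paper's own decomposition (\ref{Obz110})--(\ref{Obz111}) already silently discards the terms $\lambda _{3,y}(\widetilde{T}-G)-\lambda _{2,z}(\widetilde{T}-G)$, which is legitimate only for constant $\lambda _{i}$ or on the constraint surface $\widetilde{T}=G(f)$, and even on that surface the residual term $-\nabla \widetilde{T}\cdot \mathrm{rot}\,\overrightarrow{\lambda }$ remains in (\ref{Obz140}). So the Proposition must be read with the tacit hypothesis of constant (or irrotational) multipliers, exactly as you say, and flagging this explicitly improves on the text.
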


\ The expression for the function $\ P$ (\ref{Obz138}) in the Stokes and
Gauss-Ostrogradsky theorem can be calculated by means of the partial
derivatives of the cartesian coordinates $(x,y,z)$ with respect to the two
sets $u_{i}$ and $v_{j}$ of Kepler parameters 
\begin{equation*}
u_{i}=(f_{.},a_{.},e_{.})\text{ \ \ }i=1,2,3\text{ and }v_{j}=(\Omega
_{.},i_{.},\omega _{.})\text{ \ }j=1,2,3
\end{equation*}%
to be 
\begin{equation*}
P=\frac{\partial C}{\partial y}-\frac{\partial B}{\partial z}=\frac{1}{3}%
\sum\limits_{i=1}^{3}(\frac{\lambda _{3}}{y_{u_{i}}}-\frac{\lambda _{2}}{%
z_{u_{i}}})\frac{\partial \widetilde{T}}{\partial u_{i}}+\frac{1}{3}%
\sum\limits_{j=1}^{3}(\frac{\lambda _{3}}{y_{v_{j}}}-\frac{\lambda _{2}}{%
z_{v_{i}}})\frac{\partial \widetilde{T}}{\partial v_{j}}
\end{equation*}%
\begin{equation}
+\frac{1}{3}(\frac{\lambda _{2}}{z_{f}}-\frac{\lambda _{3}}{y_{f}})\frac{%
\partial G}{\partial f}+\left[ K_{1}(y,z)-K_{1}(z,y)\right] \text{ \ \ \ \ .}
\label{Obz141}
\end{equation}%
In case the calculation for $P$ is performed for the two sets of Kepler
parameters $(f_{.1},a_{1.},e_{1},\Omega _{1.},i_{1.},\omega _{1})$ and $%
(f_{2.},a_{2.},e_{2},\Omega _{2.},i_{2.},\omega _{2})$, the summation in the
two sums will be from $1$ to $6$ (respectively there will be $6$ derivatives 
$\frac{\partial }{\partial u_{i}}$ and $6$ derivatives $\frac{\partial }{%
\partial v_{j}}$) and the coefficients in front of the sums will be $\frac{1%
}{6}$.

\begin{proposition}
The antisymmetric function $K_{1}(y,z)$ with respect to the variables $y$
and $z$ can be represented as a sum of \ ten variational derivatives of the
first, second and the third order 
\begin{equation*}
K_{1}(y,z)\equiv P_{ij}^{(1)}\frac{\partial ^{2}L}{\partial u_{i}\partial s}%
+P_{ij}^{(2)}\frac{\partial ^{2}L}{\partial v_{i}\partial s}+P_{ij}^{(3)}%
\frac{\partial ^{3}L}{\partial u_{i}\partial u_{j}\partial s}
\end{equation*}%
\begin{equation*}
+P_{ij}^{(4)}\frac{\partial ^{3}L}{\partial u_{i}\partial v_{j}\partial s}%
+P_{ij}^{(5)}\frac{\partial ^{3}L}{\partial v_{i}\partial v_{j}\partial s}%
+P_{ij}^{(6)}\frac{\partial ^{2}L}{\partial u_{i}\partial u_{j}}
\end{equation*}
\end{proposition}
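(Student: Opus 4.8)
The plan is to pin down $K_{1}(y,z)$ explicitly from the decomposition (\ref{Obz141}) and then expand it directly. Comparing (\ref{Obz141}) with the definition $P=\frac{\partial C}{\partial y}-\frac{\partial B}{\partial z}$ together with (\ref{Obz100}) and (\ref{Obz101}), the three sums in (\ref{Obz141}) carrying $\widetilde{T}$ and $G$ originate from the constraint terms $T^{P}+\widetilde{G}_{1}$, so that the remaining antisymmetric piece $K_{1}(y,z)-K_{1}(z,y)$ must coincide with $\widetilde{P}_{0}+P_{1}$. Grouping the second partial derivatives with the $rot$-terms gives $\widetilde{P}_{0}+P_{1}=\frac{\partial}{\partial y}\frac{\delta L}{\delta z}-\frac{\partial}{\partial z}\frac{\delta L}{\delta y}$, which isolates the single object to be expanded:
\[
K_{1}(y,z)=\frac{\partial}{\partial y}\frac{\delta L}{\delta z}=\frac{\partial}{\partial y}\left[\frac{\partial L}{\partial z}-\frac{d}{ds}\frac{\partial L}{\partial \overset{.}{z}}\right].
\]

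First I would substitute the Kepler-parameter representation (\ref{Obz116}) for the outer operator $\frac{\partial}{\partial y}$ and (\ref{Obz115}) for $\frac{\partial}{\partial z}$, together with the chain-rule expansion of the total derivative $\frac{d}{ds}=\sum_{k}\overset{.}{u}_{k}\frac{\partial}{\partial u_{k}}+\sum_{l}\overset{.}{v}_{l}\frac{\partial}{\partial v_{l}}$, into the two summands above. The outer operator is first order in the Kepler derivatives, $\frac{\partial L}{\partial z}$ already carries one Kepler derivative, and $\frac{d}{ds}\frac{\partial L}{\partial \overset{.}{z}}$ carries one Kepler derivative plus one $s$-derivative; applying the product rule therefore raises the order by exactly one and produces the admissible range of first-, second- and third-order derivatives of $L$, with the coefficient functions $P_{ij}^{(m)}$ read off as the corresponding products and first derivatives of $\frac{1}{y_{u_{i}}},\frac{1}{z_{u_{k}}},\overset{.}{u}_{k},\overset{.}{v}_{l}$.

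The collection step splits into two families. The contribution $\frac{\partial}{\partial y}\frac{\partial L}{\partial z}$ produces only pure Kepler-parameter derivatives: when the outer $\frac{\partial}{\partial u_{i}}$ or $\frac{\partial}{\partial v_{i}}$ hits the coefficients $\frac{1}{z_{u_{k}}},\frac{1}{z_{v_{k}}}$ it yields the two first-order terms $\frac{\partial L}{\partial u_{k}},\frac{\partial L}{\partial v_{k}}$, and when it hits $\frac{\partial L}{\partial u_{k}},\frac{\partial L}{\partial v_{k}}$ it yields the three pure second-order types $\frac{\partial^{2}L}{\partial u_{i}\partial u_{j}}$, $\frac{\partial^{2}L}{\partial u_{i}\partial v_{j}}$, $\frac{\partial^{2}L}{\partial v_{i}\partial v_{j}}$. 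The contribution $-\frac{\partial}{\partial y}\frac{d}{ds}\frac{\partial L}{\partial \overset{.}{z}}$ supplies precisely the terms carrying one $s$-derivative: the second-order $\frac{\partial^{2}L}{\partial u_{i}\partial s}$, $\frac{\partial^{2}L}{\partial v_{i}\partial s}$ with coefficients $P_{ij}^{(1)},P_{ij}^{(2)}$, and the third-order $\frac{\partial^{3}L}{\partial u_{i}\partial u_{j}\partial s}$, $\frac{\partial^{3}L}{\partial u_{i}\partial v_{j}\partial s}$, $\frac{\partial^{3}L}{\partial v_{i}\partial v_{j}\partial s}$ with coefficients $P_{ij}^{(3)},P_{ij}^{(4)},P_{ij}^{(5)}$.

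The main obstacle I expect is the bookkeeping forced by the non-commutativity of the first-order Kepler operator $\frac{\partial}{\partial y}$ with the total derivative $\frac{d}{ds}$: exactly the mechanism recorded in the operators $M_{ik}(u_{i},u_{k},y\Longleftrightarrow z)$ of (\ref{Obz118})--(\ref{Obz121}) reappears when $\frac{\partial}{\partial y}$ is commuted past $\frac{d}{ds}$, and each such commutation produces additional first- and second-order remainders that must be absorbed into the $P_{ij}^{(m)}$ rather than spawning new term-types. The delicate verification is that after all cancellations precisely ten differential monomials in $L$ survive — the two first-order terms $\frac{\partial L}{\partial u_{i}},\frac{\partial L}{\partial v_{j}}$, the three pure second-order types, the two second-order types carrying $\partial s$, and the three third-order types carrying $\partial s$ — and that no higher-order derivative appears, which is guaranteed because $\frac{\delta L}{\delta z}$ contains at most one $s$-derivative and $\frac{\partial}{\partial y}$ adds at most one further Kepler derivative.
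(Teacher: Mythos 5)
Your identification of the antisymmetric combination is correct: comparing (\ref{Obz100})--(\ref{Obz101}) and (\ref{Obz110}) with (\ref{Obz141}) does force $K_{1}(y,z)-K_{1}(z,y)$ to equal the non-constraint part $\widetilde{P}_{0}+P_{1}=\frac{\partial}{\partial y}\frac{\delta L}{\delta z}-\frac{\partial}{\partial z}\frac{\delta L}{\delta y}$, and expanding through the Kepler operators (\ref{Obz115})--(\ref{Obz116}) is indeed the paper's mechanism. However, your next step overstates what this buys: antisymmetrization does not ``isolate a single object,'' since any symmetric function can be shifted between $K_{1}(y,z)$ and $K_{1}(z,y)$ without changing the difference. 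Setting $K_{1}(y,z)=\frac{\partial}{\partial y}\frac{\delta L}{\delta z}$ is therefore a choice, and it is not the paper's choice: Appendix C shows that the paper assigns all $s$-carrying terms to the expansion of $\frac{\partial}{\partial z}\bigl(\frac{d}{ds}\frac{\partial L}{\partial \dot{y}}\bigr)$ (the coefficients $P^{(1)}_{ij},\dots,P^{(5)}_{ij}$ are outer $z$-operators acting on the $y$-built quantities $\frac{\dot{y}}{\ddot{y}}\frac{1}{y_{u_{i}}}$), while the terms without $\partial s$ come from $\frac{\partial}{\partial y}\bigl(\frac{d}{ds}\frac{\partial L}{\partial \dot{z}}\bigr)$ through the $z$-built functions $T^{(i)}(\dot{z},\ddot{z},\dddot{z},z_{u_{i}})$ of (\ref{KK13}). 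Your all-outer-$y$/inner-$z$ split would yield coefficient functions different from the $P^{(m)}_{ij}$ that the proposition explicitly points to in Appendix C, so even if completed it would prove a variant of the statement, not the stated one.

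More seriously, the machinery needed to produce any explicit coefficients is missing from your plan. First, every coefficient in Appendix C contains $\frac{\dot{y}}{\ddot{y}}$ or $\dot{z},\ddot{z},\dddot{z}$; these factors arise from converting the velocity derivatives $\frac{\partial L}{\partial \dot{y}}$, $\frac{\partial L}{\partial \dot{z}}$ into Kepler-parameter derivatives (effectively $\frac{\partial}{\partial \dot{y}}=\frac{\dot{y}}{\ddot{y}}\frac{\partial}{\partial y}$, after which $\frac{d}{ds}$ acting on the resulting coefficients generates the $T^{(i)}$). The expansions (\ref{Obz115})--(\ref{Obz116}) you invoke cover only $\frac{\partial}{\partial y}$ and $\frac{\partial}{\partial z}$, and you never supply a rule for $\frac{\partial}{\partial \dot{z}}$; without it the step ``read off the $P^{(m)}_{ij}$'' cannot be executed. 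Second, your stated expansion $\frac{d}{ds}=\sum_{k}\dot{u}_{k}\frac{\partial}{\partial u_{k}}+\sum_{l}\dot{v}_{l}\frac{\partial}{\partial v_{l}}$ is incompatible with the target formula: used literally, it converts every occurrence of $\frac{d}{ds}$ into first-order Kepler derivatives, so the mixed derivatives $\frac{\partial^{2}L}{\partial u_{i}\partial s}$ and $\frac{\partial^{3}L}{\partial u_{i}\partial u_{j}\partial s}$ -- five of the ten types you must exhibit -- could never appear. The paper instead keeps $\frac{d}{ds}L=\frac{\partial L}{\partial s}$ as an irreducible derivative (this is what ``variat.derivative $\frac{\partial^{2}L}{\partial u_{i}\partial s}$'' means in Appendix C); your collection step silently adopts that convention while your setup prescribes the opposite one, and this internal contradiction needs to be resolved before the bookkeeping you describe can be trusted.
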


\begin{equation}
+P_{ij}^{(7)}\frac{\partial ^{2}L}{\partial u_{i}\partial v_{j}}+P_{ij}^{(8)}%
\frac{\partial ^{2}L}{\partial v_{i}\partial v_{j}}+P_{ij}^{(9)}\frac{%
\partial L}{\partial u_{i}}+P_{ij}^{(10)}\frac{\partial L}{\partial v_{i}}%
\text{ \ \ \ ,}  \label{Obz142}
\end{equation}%
where the coefficient functions $%
P_{ij}^{(1)},P_{ij}^{(2)}.......P_{ij}^{(10)}$ are given in Appendix C.

\begin{proposition}
\ The exterior product \ $dy\Lambda dz$ in terms of the variables, related
to the Kepler parameters can be expressed as 
\begin{equation*}
dy\Lambda dz=\sum\limits_{\underset{i\pm j}{i,j=1}}^{3}\left[ \frac{D(y,z)}{%
D(u_{i},u_{j})}du_{i}du_{j}+\frac{D(y,z)}{D(v_{i},v_{j})}dv_{i}dv_{j}\right]
\end{equation*}%
\begin{equation}
+\sum\limits_{\underset{i\pm j,also\text{ }i=j}{i,j=1}}^{3}\frac{D(y,z)}{%
D(u_{i},v_{j})}du_{i}dv_{j}\text{ \ \ ,}  \label{Obz143}
\end{equation}%
where $\frac{D(y,z)}{D(u_{i},u_{j})}$ \ is the Jacobian determinant 
\begin{equation}
\frac{D(y,z)}{D(u_{i},u_{j})}\equiv 
\begin{vmatrix}
\frac{\partial y}{\partial u_{i}} & \frac{\partial y}{\partial u_{j}} \\ 
\frac{\partial z}{\partial u_{i}} & \frac{\partial z}{\partial u_{j}}%
\end{vmatrix}%
=y_{u_{i}}z_{u_{j}}-y_{u_{j}}z_{u_{i}}\text{ \ \ \ \ .}  \label{Obz144}
\end{equation}%
Taking into account the similar expressions for $\frac{D(y,z)}{D(v_{i},v_{j})%
}$ and $\frac{D(y,z)}{D(u_{i},v_{j})}$ in (\ref{Obz143}), it is easily
proved that 
\begin{equation}
dy\Lambda dz=N_{1}(y,z)-N_{1}(z,y)\text{ \ \ \ \ ,}  \label{Obz145}
\end{equation}%
\begin{equation}
N_{1}(y,z)\equiv \sum\limits_{i,j=1}^{3}\left[
y_{u_{i}}z_{u_{j}}du_{i}du_{j}+y_{v_{i}}z_{v_{j}}dv_{i}dv_{j}+y_{u_{i}}z_{v_{j}}du_{i}dv_{j}%
\right]  \label{Obz146}
\end{equation}%
and $N_{1}(y,z)$ is an antisymmetric function with respect to the variables $%
y$ and $z$.. Respectively, 
\begin{equation}
\frac{D(y,z)}{D(v_{i},v_{j})}\text{, \ }\frac{D(y,z)}{D(u_{i},v_{j})}\text{
\ \ \ , \ \ \ }N_{1}(z,x)\text{\ , \ }N_{1}(x,y)\text{ }  \label{Obz147}
\end{equation}%
are determined similarly to formulae (\ref{Obz144}).
\end{proposition}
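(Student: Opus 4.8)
The plan is to obtain (\ref{Obz143}) by a direct expansion of the wedge product from the chain rule and then to repackage the outcome through the antisymmetry of $\Lambda$. First I would write the differentials of the two Cartesian coordinates in the basis of the six Kepler-parameter differentials,
\begin{equation*}
dy=\sum_{i=1}^{3}y_{u_{i}}du_{i}+\sum_{j=1}^{3}y_{v_{j}}dv_{j}\ ,\qquad dz=\sum_{k=1}^{3}z_{u_{k}}du_{k}+\sum_{l=1}^{3}z_{v_{l}}dv_{l}\ ,
\end{equation*}
with $y_{u_{i}}\equiv \partial y/\partial u_{i}$ and so on. Forming $dy\Lambda dz$ and expanding bilinearly then splits into four double sums: a pure $du\Lambda du$ block, a pure $dv\Lambda dv$ block, and the two mixed blocks $du\Lambda dv$ and $dv\Lambda du$.

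Next I would collapse each block using $du_{i}\Lambda du_{j}=-du_{j}\Lambda du_{i}$, $du_{i}\Lambda du_{i}=0$, and the analogous identities for $dv$ and for the mixed products. In the pure $du$ block the diagonal $i=j$ terms drop, while the off-diagonal pair $(i,j),(j,i)$ combines into $(y_{u_{i}}z_{u_{j}}-y_{u_{j}}z_{u_{i}})du_{i}\Lambda du_{j}=\frac{D(y,z)}{D(u_{i},u_{j})}du_{i}\Lambda du_{j}$, which is exactly why the first sum in (\ref{Obz143}) runs over $i\neq j$; the pure $dv$ block is handled identically. For the mixed contribution I would rewrite $dv_{j}\Lambda du_{i}=-du_{i}\Lambda dv_{j}$, relabel, and combine with the $du\Lambda dv$ sum to get $\sum_{i,j}(y_{u_{i}}z_{v_{j}}-z_{u_{i}}y_{v_{j}})du_{i}\Lambda dv_{j}=\sum_{i,j}\frac{D(y,z)}{D(u_{i},v_{j})}du_{i}\Lambda dv_{j}$. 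Here $u$- and $v$-indices cannot be interchanged, so the case $i=j$ survives and must be kept, which accounts for the wider index range of the mixed sum in (\ref{Obz143}).

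Finally I would match the assembled $2$-form with $N_{1}(y,z)-N_{1}(z,y)$ of (\ref{Obz146}): forming this difference, the coefficient of each basis $2$-form reproduces the corresponding Jacobian, and since the difference manifestly changes sign under $y\leftrightarrow z$, the form $dy\Lambda dz$ is antisymmetric as asserted; the remaining objects in (\ref{Obz147}), namely $N_{1}(z,x)$ and $N_{1}(x,y)$ together with their Jacobians, then follow by the cyclic relabelling $x\to y\to z$ applied to $dz\Lambda dx$ and $dx\Lambda dy$. I expect the only genuine obstacle to be bookkeeping rather than substance: keeping the signs consistent through the wedge reorderings and, above all, matching the factor conventions of (\ref{Obz143}) and (\ref{Obz146}), since in the pure blocks an index swap is available (so each unordered pair is effectively counted twice) whereas in the mixed block no such swap exists and the full range including $i=j$ must be retained.
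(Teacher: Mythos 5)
Your derivation is sound as standard exterior calculus, and for the proposition proper it follows the same skeleton as the paper: chain-rule expansion of $dy$ and $dz$ over the six Kepler differentials, bilinear expansion into the four blocks, and collection of coefficients into the Jacobians of (\ref{Obz144}), then matching against $N_{1}(y,z)-N_{1}(z,y)$. However, the paper's own treatment diverges from yours at exactly the point you flagged as "factor conventions," and the divergence is substantive rather than bookkeeping. Immediately after the proposition, in (\ref{Obz148})--(\ref{Obz149}), the paper treats the products $du_{i}du_{j}$ and $dv_{i}dv_{j}$ as \emph{symmetric} in $(i,j)$: it argues that an antisymmetric coefficient contracted with a symmetric factor sums to zero, so the pure $du$--$du$ and $dv$--$dv$ blocks are discarded entirely and only the mixed sum $\sum_{i,j}\left( y_{u_{i}}z_{v_{j}}-z_{u_{i}}y_{v_{j}}\right) du_{i}dv_{j}$ survives. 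Under your (standard) wedge reading this step is impossible: $du_{i}\Lambda du_{j}$ is itself antisymmetric, so the pure blocks combine to $2\sum_{i<j}\frac{D(y,z)}{D(u_{i},u_{j})}du_{i}\Lambda du_{j}$ rather than to zero; indeed (\ref{Obz149}) already fails for $y=u_{1}$, $z=u_{2}$, where $dy\Lambda dz=du_{1}\Lambda du_{2}\neq 0$ while the mixed sum vanishes identically.

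So the situation is this: your proof establishes (\ref{Obz143})--(\ref{Obz146}) in the wedge interpretation, provided the sums over $i\neq j$ are either restricted to $i<j$ or given a factor $\tfrac{1}{2}$ (the double counting you noted is real, since the Jacobian and the wedge are each antisymmetric, making their product symmetric under the index swap). The paper's version avoids that factor only by adopting the symmetric interpretation of the double differentials, under which, however, the pure sums in (\ref{Obz143}) and (\ref{Obz146}) are not merely overcounted but identically zero, which is what licenses the paper's reduction to (\ref{Obz149}). The two readings are mutually incompatible, and your argument cannot be spliced into the paper's subsequent reduction; if you intend your proof to feed the later formalism, you must either keep the pure blocks (and carry them through the Stokes/Gauss--Ostrogradsky machinery) or state explicitly that the differentials are being treated as symmetric quantities, which is no longer the exterior product the proposition names.
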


It is important to note that with account of (\ref{Obz146}), expression (\ref%
{Obz145}) can be written as 
\begin{equation*}
dy\Lambda dz=\sum\limits_{i,j=1}^{3}\left[
(y_{u_{i}}z_{u_{j}}-z_{u_{i}}y_{u_{j}})du_{i}du_{j}+(y_{v_{i}}z_{v_{j}}-z_{v_{i}}y_{v_{j}})dv_{i}dv_{j}%
\right]
\end{equation*}%
\begin{equation}
+(y_{u_{i}}z_{v_{j}}-z_{u_{i}}y_{v_{j}})du_{i}dv_{j}]\text{ \ \ \ \ .}
\label{Obz148}
\end{equation}%
Each of the first and the second terms in the square brackets consists of an
antisymmetric term in the round brackets, multiplied by a symmetric term ($%
du_{i}du_{j}$ or $dv_{i}dv_{j}$). It is known that such a multiplication of
an antisymmetric function with a symmetric function will give zero.
Consequently, only the last (third) term will remain  
\begin{equation}
dy\Lambda
dz=\sum\limits_{i,j=1}^{3}(y_{u_{i}}z_{v_{j}}-z_{u_{i}}y_{v_{j}})du_{i}dv_{j}%
\text{ \ \ .}  \label{Obz149}
\end{equation}%
This is a second order differential form, based on the different
differentials $du_{i}$ and $dv_{j}$. For the case of two satellites, the
summation will be from $1$ to $6$. Similarly one can write the expressions
for $dz\Lambda dx$ and $dx\Lambda dy$. Thus, there will be $6\times 6=36$ $\ 
$ terms in the right-hand side of the two-form $\omega _{2}=Pdy\Lambda
dz+Qdz\Lambda dx+Rdx\Lambda dy$ (\ref{Obz102}), where formulae (\ref{Obz141}%
) for $P$ will be used ($Q$ and $R$ are obtained after cyclic permutation $%
x\rightarrow y\rightarrow z\rightarrow x$) and also (\ref{Obz142}) for the
antisymmetric function $K_{1}(y,z)$ (respectively after cyclic permutation,
also for $K_{1}(z,x)$ and $K_{1}(x,y)$). The $36$ terms in the sum $%
\sum\limits_{i,j=1}^{6}$ will be composed of all the combinations of
multiplication of differentials from the first set of Kepler parameters $%
(df_{.1},da_{1.},de_{1},d\Omega _{1.},di_{1.},d\omega _{1})$ with the
differentials $(df_{2.},da_{2.},de_{2},d\Omega _{2.},di_{2.},d\omega _{2})$
from the second set. The detailed calculations will be performed in another
paper.

\section{Conclusion}

In this paper a new model is presented for transmitting signals between
satellites on two different space-distributed orbits, characterized by the
two sets of Kepler parameters $(f_{.1},a_{1.},e_{1},\Omega
_{1.},i_{1.},\omega _{1})$ and $(f_{2.},a_{2.},e_{2},\Omega
_{2.},i_{2.},\omega _{2})$, with account also of the General Relativity
effect of curving the signal trajectory and the Shapiro delay formulae.
First, it has been proved on the base of the Whitney theorem from
differential geometry and the topological notions of immersions, embeddings
and submersions that it is possible to construct a manifold, which will
depend on all the $12$ Kepler parameters plus the parameter $s$, which
parametrizes cartesian coordinates $x,y,z$ in the transformations (\ref{K1})
- (\ref{K3}). In fact, two sets of transformations (\ref{K1}) - (\ref{K3})
should be written for the variables $x_{1},y_{1},z_{1}$ and $%
x_{2},y_{2},z_{2}$ , but since it turned out to be possible to introduce the
variables $x=x_{2}-x_{1}$, $y=y_{2}-y_{1}$, $z=z_{2}-z_{1}$, the variational
formalism was performed only by variation of the $x,y,z$ variables.
Secondly, it was proved that the first term in the Shapiro delay formulae $%
\frac{R_{AB}}{c}$ ($R_{AB}-$the Euclidean distance between the satellites)
should be replaced by the term $\frac{1}{c}\int\limits_{path}\sqrt{\left( 
\overset{.}{x}\right) ^{2}+\left( \overset{.}{y}\right) ^{2}+\left( \overset{%
.}{z}\right) ^{2}}ds$, which shows that the distance travelled by the signal
is greater than the distance between points on the straight line. Here an
important clarification should be made. In this paper a variational
formalism was performed, taking into account only the term $\frac{1}{c}%
\int\limits_{path}\sqrt{\left( \overset{.}{x}\right) ^{2}+\left( \overset{.}%
{y}\right) ^{2}+\left( \overset{.}{z}\right) ^{2}}ds$, and the result was
that the straight line is the optimal trajectory for propagation of the
signal. However, this does not mean that the signal propagates along a
straight line. In previous papers \cite{Bog2}, \cite{Bog3} and \cite{Bog4}
it has been proved that geodesic distance, travelled by the signal is
greater than the Euclidean distance. Here the straight line appears in the
consideration because of the condition for "optimality" of the distance and
the positions of the two satellites and is fully consistent with theorems
from differential geometry. So the result by itself does not contradict to
any physical laws about propagation of signals, it only suggests that in
order to find any optimal positions or trajectories, one should take into
account the second term $\frac{2G_{\oplus }M_{\oplus }}{c^{3}}\frac{\sqrt{%
\left( \overset{.}{x}\right) ^{2}+\left( \overset{.}{y}\right) ^{2}+\left( 
\overset{.}{z}\right) ^{2}}}{\left[ x^{2}+y^{2}+z^{2}\right] }\left[ x%
\overset{.}{x}+y\overset{.}{y}+z\overset{.}{z}\right] $ (see (\ref{Obz20}))
in the proposed modified Shapiro delay formulae.  Particular attention in the
paper is paid also to the approximations in deriving the standard Shapiro
formulae, since the differential in the null cone equation can be determined
in several ways. As a result of a not so restrictive condition, the modified
Shapiro delay formulae was obtained in the form (\ref{Obz19}).

Taking into account of this second term (\ref{Obz20}) has not been performed
in this paper (in the sense of solving the corresponding Euler-Lagrange
equations with account of the second term), but the variational formalism is
proposed in its most general form.

The main goal of \ this paper was to propose a variational formalism, based
on higher order differential forms and the Stokes and Gauss Ostrogradsky
formalism. Since the Stokes theorem relates an integral on a manifold $M$ to
an integral on its boundary $\partial M$, it was necessary to clarify the
notion of a boundary, which in its classical formulation is related to the
inverse function theorem \cite{MishtFomCourse} and sub-manifolds, convexity
and semi-spaces. The last two notions are frequently used also in
differential geometry of surfaces \cite{Thorpe} and optimization theory \cite%
{Suharev}. The definition for a boundary turned out to be useful for finding
the boundary of the manifold, given by the transformations (\ref{K1}) - (\ref%
{K3}). For this case, there are three boundaries - the $(x,y)-$boundary,
given by the plane (\ref{Obz66}) $x\sin (\omega +f+\Omega )-y\cos (\omega
+f+\Omega )=0$ or the plane (\ref{Obz67}) $x\sin (\omega +f-\Omega )+y\cos
(\omega +f-\Omega )=0$ (and the condition for a boundary $z=0$); the $(x,z)$%
-boundary (\ref{Obz74}) $x$ .$\sin i\frac{\left[ \sin ^{2}(\omega +f)\left(
1+\cos ^{2}i\right) -1\right] }{\sin ^{2}(\omega +f)\cos ^{2}i}+z$ $\frac{%
\cos \Omega \left[ 1-\cos ^{4}i\text{ }tg^{4}(\omega +f)\right] }{\cos ^{2}i%
\text{ }tg^{3}(\omega +f)}=0$ (and the condition $y=0$) and the $(y,z)$
boundary (\ref{Obz81A}) $Ny-z\cos (\omega +f)=0$ with $N(\omega ,f,\Omega
)\equiv \sqrt{-\cos (\omega +f+\Omega )\cos (\omega +f-\Omega )}$ (\ref%
{Obz81}) (and the condition $x=0$). These boundaries naturally are $2-$%
dimensional hyperplanes, because the initial manifold is a $3-$dimensional
one, and consequently the boundaries are $1$ dimension less. The boundaries
were found only for the case of one $3D$ manifold $M_{1}$. The boundary of
the manifold $M=M_{1}\times M_{2}$ can be found if each of these equations
is written for the corresponding manifolds $M_{1}$ and $M_{2}$ with indices "%
$1$" and "$2$" respectively, and the system, determining the $(x,y,z)$%
-coordinates of the $M-$manifold \ $x=x_{2}-x_{1}$, $y=y_{2}-y_{1}$ , $%
z=z_{2}-z_{1}$ is added to them. Thus there will be a system of $9-$%
equations for the $9-$variables $(x_{1},y_{1},z_{1},x_{2},y_{2},z_{2},x,y,z)$
and the $x,y,z$ coordinates will be expressed through the two sets of Kepler
parameters.

Another peculiarity in the variational formalism, based on second- and
third-order differential forms and the Stokes theorem is the
non-commutativity of the partial derivatives $\frac{\partial }{\partial x},%
\frac{\partial }{\partial y},\frac{\partial }{\partial x}$, which means that
\ the commutator $\ $(\ref{Obz131}) $\left[ \frac{\partial }{\partial x_{\mu
}},\frac{\partial }{\partial x_{\nu }}\right] L=\frac{1}{9}%
\sum\limits_{i,k=1}^{3}\left[ \xi _{ik}^{(1)}\frac{\partial }{\partial u_{k}}%
+\xi _{ik}^{(2)}\frac{\partial }{\partial v_{k}}\right] L\neq 0$ is
nonzero.This is natural from the viewpoint of differential geometry since a
zero commutator means that the condition for integrability and the Frobenius
theorem are satisfied \cite{Boothby}. In the case, however, the additional
terms in the variational action due to the non-commutativity have to be
accounted.

Further, the function $P(x,y,z)=\frac{\partial C}{\partial y}-\frac{\partial
B}{\partial z}$ has been calculated in (\ref{Obz141}), which is in the
right-hand side of the Stokes theorem (\ref{Obz102}) $\int\limits_{\partial
M}\left[ Adx+Bdy+Cdz\right] =\int\limits_{M}\left[ Pdy\Lambda dz+Qdz\Lambda
dx+Rdx\Lambda dy\right] $ and also in the two-form (\ref{Obz135}) $\omega
_{2}=Pdy\Lambda dz+Qdz\Lambda dx+Rdx\Lambda dy$. The function $P$ is
expressed in terms of the functions $(x,y,z)$ and their derivatives with
respect to the Kepler parameters. Again it is useful to remind that when the
summation is changed \ from $1$ to $6$ in both sums, which include the
derivatives with respect to $u_{i}$ and $v_{j}$, the expression for $P$ will
contain derivatives with respect to both sets of Kepler parameters.
Particularly interesting and rather complicated is the antisymmetric
function $K_{1}(y,z)$ in (\ref{Obz142}), which contains derivatives (of
first, second and third order) of the original Lagrangian with respect to $%
u_{i}$,$v_{j}$ and the parameter $s$.

There is one more interesting problem, concerning the right-hand side in the
Stokes theorem (\ref{Obz102})(the integral $\int\limits_{M}$ $\left[
Pdy\Lambda dz+Qdz\Lambda dx+Rdx\Lambda dy\right] $). The functions $P$ in (%
\ref{Obz141}), $Q$ and $R$ contain the function $K_{1}(y,z)$ (\ref{Obz142})
(respectively $K_{1}(z,x)$ and $K_{1}(x,y)$), and these functions contain
first-order derivatives with respect to $u_{i}$ and $v_{j}$ and second -
order derivatives with respect to $u_{i},u_{j\text{ }}$or $v_{i},v_{j}$ or $%
u_{i},v_{j}$. In the derived expression (\ref{Obz149}) for $dy\Lambda dz$, a
sum over the different combinations $du_{k}dv_{l}$ was obtained, so each
derivative with respect to $u_{i}$ or second-order derivative with respect
to $u_{i}$ and $u_{j}$ is multiplied by $du_{k}$. In the variational
formalism, we have not imposed any restrictions which are the concrete
Kepler parameters corresponding to each $u_{i}$ or to each $v_{j}$. So the
problem is: in how many ways one can choose two variables $u_{i},u_{k}$ from
among the $6$ parameters for the first set. Respectively, for the case of
the second derivatives in $K_{1}(y,z)$ (respectively $K_{1}(z,x)$ and $%
K_{1}(x,y)$), the similar problem is: in how many ways three variables $u_{i}
$,$u_{j}$,$u_{k}$ can be chosen from among the total $6$-ones? Note that the
variable $u_{k}$ can be chosen independently from $u_{i}$ and $u_{j}$ (but
can be also equal to one of them). Consequently, after obtaining the final
formulaes in the variational formalism, one should in addition sum up over
all the possible choices (permutations) of $u_{i}$,$u_{j}$,$u_{k}$ (or $u_{i}
$,$u_{k}$, depending on the terms in $K_{1}(y,z)$ ). The same should be
performed with respect to the possible permutations of $v_{i}$,$v_{j}$,$v_{k}
$ (or $v_{i}$,$v_{k}$). This would require the implementation of group
theory and shall be developed in further publications.

\section{Appendix A: Expression for the square of the Euclidean distance
between two satellites on two different space-distributed orbits}

The differential $dR_{AB}^{2}$ of the square of the Euclidean distance,
where 
\begin{equation}
R_{AB}^{2}=(x_{1}-x_{2})^{2}+(y_{1}-y_{2})^{2}+(z_{1}-z_{2})^{2}  \label{A00}
\end{equation}%
and $(x_{1},y_{1},z_{1})$ and $(x_{2},y_{2},z_{2})$ are given by the
transformations (\ref{K1}), (\ref{K2}) and (\ref{K3}) with indices "$1$" and
"$2$" correspondingly: 
\begin{equation*}
dR_{AB}^{2}=\{\frac{2e_{1}\sin f_{1}a_{1}^{2}(1-e_{1}^{2})}{(1+e_{1}\cos
f_{1})^{3}}+\frac{2a_{1}a_{2}(1-e_{1}^{2})(1-e_{2}^{2})}{(1+e_{1}\cos
f_{1})^{2}(1+e_{2}\cos f_{2})}[K_{1}(1,2)
\end{equation*}%
\begin{equation}
+K_{2}(1,2)+K_{3}(1,2)+K_{4}(1,2)]\}df_{1}+\{1\longleftrightarrow 2\}df_{2}%
\text{ \ ,}  \label{A1}
\end{equation}%
where the expression $\{1\longleftrightarrow 2\}$ is the same as in the
first bracket $\{\}$ on the first line of (\ref{A1}), but with interchanged
indices $1$ and $2$, i.e.$1\longleftrightarrow 2$. Formulae (\ref{A1}) in
fact is the expression for the differential $dR_{AB}^{2}$ in terms of the
differentials $df_{1}$ and $df_{2}$ (provided that only $f_{1}$ and $f_{2}$
change, the remaining five Kepler parameters in the two sets $(\Omega
_{1},\omega _{1},i_{1},a_{1},e_{1})$ and $(\Omega _{2},\omega
_{2},i_{2},a_{2},e_{2})$ are constant, but are present in the formulae),
i.e. 
\begin{equation}
dR_{AB}^{2}=\frac{\partial \left( R_{AB}^{2}\right) }{\partial f_{1}}df_{1}+%
\frac{\partial \left( R_{AB}^{2}\right) }{\partial f_{2}}df_{2}=\widetilde{K}%
_{1}(\Gamma ^{(1)},\Gamma ^{(2)})df_{1}+\widetilde{K}_{2}(\Gamma
^{(2)},\Gamma ^{(1)})df_{2}\text{ \ \ \ ,}  \label{A0}
\end{equation}%
where again fo convenience the notations $\Gamma ^{(1)}=(f_{1},\Omega
_{1},\omega _{1},i_{1},a_{1},e_{1})$ and $\Gamma ^{(2)}=(f_{2},\Omega
_{2},\omega _{2},i_{2},a_{2},e_{2})$ have been introduced. The expressions
for $K_{1}(1,2),K_{2}(1,2),K_{3}(1,2),K_{4}(1,2)$ in (\ref{A1}) are given
below. The corresponding expressions for $%
K_{1}(2,1),K_{2}(2,1),K_{3}(2,1),K_{4}(2,1)$ in $\{1\longleftrightarrow
2\}df_{2}$ are the same, but with interchanged indices $1\longleftrightarrow
2$ \ \ 

The terms $K_{1}(1,2)$ do not contain the inclination angles $i_{1}$ and $%
i_{2}$ 
\begin{equation*}
K_{1}(1,2)\equiv -e_{1}\sin f_{1}\cos (\omega _{1}+f_{1})\cos (\omega
_{2}+f_{2})\cos (\omega _{1}+f_{1}+\Omega _{1}-\Omega _{2})
\end{equation*}%
\begin{equation*}
+e_{1}\cos f_{1}\sin (\omega _{1}+f_{1})\cos (\omega _{2}+f_{2})\cos (\Omega
_{1}-\Omega _{2})
\end{equation*}

\begin{equation}
+\sin (\omega _{1}+f_{1})\cos (\omega _{2}+f_{2})\cos (\Omega _{1}-\Omega
_{2})\text{ \ \ \ .}  \label{A2}
\end{equation}%
The terms $K_{2}(1,2)$ contain the inclination angles $i_{1}$ and $i_{2}$ in
the terms with multipliers $\cos i_{1}\cos i_{2}$ and $\sin i_{1}\sin i_{2}$%
\begin{equation*}
K_{2}(1,2)\equiv -e_{1}\sin (\omega _{2}+f_{2})\cos i_{1}\cos i_{2}[\sin
f_{1}\cos (\Omega _{1}-\Omega _{2})\sin (\omega _{1}+f_{1})
\end{equation*}%
\begin{equation}
+\cos (\omega _{1}+f_{1})\cos f_{1}\cos (\Omega _{1}-\Omega _{2})+\cos
(\omega _{1}+f_{1})\sin f_{1}\sin (\Omega _{1}-\Omega _{2})]\text{ \ \ .}
\label{A3}
\end{equation}%
Terms $K_{3}(1,2)$ contain $\cos i_{1}$ and do not contain $\cos i_{2}$ 
\begin{equation}
K_{3}(1,2)\equiv \lbrack e_{1}\cos f_{1}+1]\cos (\omega _{1}+f_{1})\cos
(\omega _{2}+f_{2})\sin (\Omega _{1}-\Omega _{2})\cos i_{1}\text{ \ \ .}
\label{A4}
\end{equation}%
The terms $K_{4}(1,2)$ contain the multiplier $\cos i_{2}$ and do not
contain $\cos i_{1}$ 
\begin{equation}
K_{4}(1,2)\equiv \lbrack -e_{1}\cos f_{1}+1]\sin (\omega _{1}+f_{1})\sin
(\omega _{2}+f_{2})\sin (\Omega _{1}-\Omega _{2})\cos i_{2}\text{ \ \ .}
\label{A5}
\end{equation}

\section{Appendix B: The propagation time for a signal, emitted by a
satellite on a space-distributed orbit, expressed in terms of higher-rank
elliptic integrals}

Formulae (\ref{K4}) $\widetilde{T}=G(f)=-\frac{2G_{\oplus }M_{\oplus }}{c^{3}%
}.\frac{n}{(1-e^{2})^{\frac{3}{2}}}\widetilde{T}_{1}+T_{2}$ for the
propagation time has been derived in \cite{Bog1}, but in the review paper 
\cite{Bog2} the different parts of the expression have been collected in one
formulae. The expression for $\widetilde{T}_{1}$ in \ (\ref{K4}) is

\begin{equation}
\widetilde{T}_{1}=-2i\frac{na}{c}q^{\frac{3}{2}}\int \frac{y^{2}dy}{\sqrt{%
(1-y^{2})(1-q^{2}y^{2})}}\text{ \ \ \ \ ,}  \label{B1}
\end{equation}%
where $y=\sqrt{\frac{1}{q}.\frac{(1+e\cos E)}{(1-e\cos E)}}$, \ $q=\frac{1-e%
}{1+e}$ and $\tan \frac{f}{2}=\sqrt{\frac{1-e\cos f}{1+\cos f}}=\sqrt{\frac{%
1+e}{1-e}}\tan \frac{E}{2}$. The second integral $T_{2}$ is \ 
\begin{equation}
T_{2}=-\frac{2G_{\oplus }M_{\oplus }}{c^{3}}.\frac{n}{(1-e^{2})^{\frac{3}{2}}%
}\widetilde{T}_{1}+T_{2}^{(2)}\text{ \ , }  \label{B2}
\end{equation}%
where $T_{2}^{(2)}$ is a combination of the elliptic integral $\widetilde{J}%
_{2}^{(4)}(\widetilde{y},q)$ of the second order and the elliptic integral
of the fourth order $\widetilde{J}_{4}^{(4)}(\widetilde{y},q)$ 
\begin{equation}
T_{2}^{(2)}=-inq^{\frac{5}{2}}\frac{2G_{\oplus }M_{\oplus }}{c^{3}}(1+e^{2})%
\widetilde{J}_{2}^{(4)}(\widetilde{y},q)+inq^{\frac{3}{2}}\frac{2G_{\oplus
}M_{\oplus }}{c^{3}}\frac{(1+e^{2})}{(1-e^{2})}\text{ }\widetilde{J}%
_{4}^{(4)}(\widetilde{y},q)\text{\ .}  \label{B3}
\end{equation}%
The lower indice denotes the order of the integral (determined by the degree
of $\widetilde{y}$ or $\widehat{y}$ in the nominator) and the upper indice
in the bracket $()$ is related to the degree of the polynomial under the
square root in the denominator - in the case it is $\left( 1-\widetilde{y}%
^{2}\right) \left( 1-q^{2}\widetilde{y}^{2}\right) $ or $\left( \widehat{y}%
^{2}-1\right) \left( 1-q^{2}\widehat{y}^{2}\right) $ and consequently the
polynomial is of the fourth degree. The corresponding elliptic integrals of
the second and of the fourth order are 
\begin{equation}
\widetilde{J}_{2}^{(4)}(\widetilde{y},q)=\int \frac{\widetilde{y}^{2}d%
\widetilde{y}}{\sqrt{\left( 1-\widetilde{y}^{2}\right) \left( 1-q^{2}%
\widetilde{y}^{2}\right) }}\text{ \ \ \ \ ,\ \ }\widetilde{J}_{4}^{(4)}(%
\widetilde{y},q)=\frac{q^{5}}{i}\int \frac{\widehat{y}^{4}d\widehat{y}}{%
\sqrt{\left( \widehat{y}^{2}-1\right) \left( 1-q^{2}\widehat{y}^{2}\right) }}%
\text{ \ \ \ ,\ }  \label{B4}
\end{equation}%
In \cite{Bog2} it was proved that the integrals $\widetilde{T}_{1}$ and $%
T_{2}^{(2)}$ are imaginary quantities, so that the propagation time $T=G(f)$
in expression (\ref{K4}) will be a real quantity, as it should be.

\section{Appendix C: Coefficient functions $%
P_{ij}^{(1)},P_{ij}^{(2)}.......P_{ij}^{(10)}$ in the expression for the
function $K_{1}(y,z)$ in (\protect\ref{Obz141})}

\begin{equation}
P_{ij}^{(1)}(y,z)\equiv \frac{1}{9}\sum\limits_{i,j=1}^{3}\left[ \frac{1}{%
z_{u_{j}}}\frac{\partial }{\partial u_{j}}(\frac{\overset{.}{y}}{\overset{..}%
{y}}.\frac{1}{y_{u_{i}}})+\frac{1}{z_{v_{j}}}\frac{\partial }{\partial v_{j}}%
(\frac{\overset{.}{y}}{\overset{..}{y}}.\frac{1}{y_{u_{i}}})\right]
\Longrightarrow \text{variat.derivative \ }\frac{\partial ^{2}L}{\partial
u_{i}\partial s}\text{ ,}  \label{KK4}
\end{equation}%
\begin{equation}
P_{ij}^{(2)}(y,z)\equiv \frac{1}{9}\sum\limits_{i,j=1}^{3}\left[ \frac{1}{%
z_{u_{j}}}\frac{\partial }{\partial u_{j}}(\frac{\overset{.}{y}}{\overset{..}%
{y}}.\frac{1}{y_{v_{i}}})+\frac{1}{z_{v_{j}}}\frac{\partial }{\partial v_{j}}%
(\frac{\overset{.}{y}}{\overset{..}{y}}.\frac{1}{y_{v_{i}}})\right]
\Longrightarrow \text{variat.derivative \ }\frac{\partial ^{2}L}{\partial
v_{i}\partial s}\text{ \ ,}  \label{KK5}
\end{equation}%
\begin{equation}
P_{ij}^{(3)}(y,z)\equiv \frac{1}{9}\sum\limits_{i,j=1}^{3}\frac{1}{z_{u_{i}}}%
\frac{\overset{.}{y}}{\overset{..}{y}}.\frac{1}{y_{u_{j}}}\Longrightarrow 
\text{variat.derivative \ }\frac{\partial ^{3}L}{\partial u_{i}\partial
u_{j}\partial s}\text{ \ \ ,}  \label{KK6}
\end{equation}%
\begin{equation}
P_{ij}^{(4)}(y,z)\equiv \frac{1}{9}\sum\limits_{i,j=1}^{3}\left[ \frac{1}{%
z_{v_{j}}}\frac{\overset{.}{y}}{\overset{..}{y}}.\frac{1}{y_{u_{i}}}+\frac{1%
}{z_{u_{i}}}\frac{\overset{.}{y}}{\overset{..}{y}}.\frac{1}{y_{v_{j}}}\right]
\Longrightarrow \text{variat.derivative \ }\frac{\partial ^{3}L}{\partial
u_{i}\partial v_{j}\partial s}\text{ \ ,}  \label{KK7}
\end{equation}%
\begin{equation}
P_{ij}^{(5)}(y,z)\equiv \frac{1}{9}\sum\limits_{i,j=1}^{3}\frac{1}{z_{v_{i}}}%
\frac{\overset{.}{y}}{\overset{..}{y}}.\frac{1}{y_{v_{j}}}\Longrightarrow 
\text{variat.derivative \ }\frac{\partial ^{3}L}{\partial v_{i}\partial
v_{j}\partial s}\text{ \ ,}  \label{KK8}
\end{equation}%
\begin{equation}
P_{ij}^{(6)}(y,z)\equiv \frac{1}{9}\sum\limits_{i,j=1}^{3}\frac{1}{y_{u_{i}}}%
T^{(j)}(\overset{.}{z},\overset{..}{z},\overset{...}{z},z_{v_{j}})%
\Longrightarrow \text{variat.derivative \ }\frac{\partial ^{2}L}{\partial
u_{i}\partial u_{j}}\text{ \ \ ,}  \label{KK9}
\end{equation}%
\begin{equation}
P_{ij}^{(7)}(y,z)\equiv \frac{1}{9}\sum\limits_{i,j=1}^{3}\left[ \frac{1}{%
y_{v_{j}}}T^{(i)}(\overset{.}{z},\overset{..}{z},\overset{...}{z},z_{u_{i}})+%
\frac{1}{y_{u_{i}}}T^{(j)}(\overset{.}{z},\overset{..}{z},\overset{...}{z}%
,z_{v_{j}})\right] \Longrightarrow \text{variat.derivative \ }\frac{\partial
^{2}L}{\partial u_{i}\partial v_{j}}\text{ \ ,}  \label{KK99}
\end{equation}%
\begin{equation}
P_{ij}^{(8)}(y,z)\equiv \frac{1}{9}\sum\limits_{i,j=1}^{3}\frac{1}{y_{v_{j}}}%
T^{(i)}(\overset{.}{z},\overset{..}{z},z_{u_{i}})\Longrightarrow \text{%
variat.derivative \ }\frac{\partial ^{2}L}{\partial v_{i}\partial v_{j}}%
\text{ \ ,}  \label{KK10}
\end{equation}%
\begin{equation}
P_{ij}^{(9)}(y,z)\equiv \frac{1}{3}\sum\limits_{i,j=1}^{3}\left[ \frac{1}{%
y_{v_{j}}}\frac{\partial T^{(i)}(\overset{.}{z},\overset{..}{z},\overset{...}%
{z},z_{u_{i}})}{\partial v_{j}}+\frac{1}{y_{u_{i}}}\frac{\partial T^{(j)}(%
\overset{.}{z},\overset{..}{z},\overset{...}{z},z_{u_{i}})}{\partial u_{j}}%
\right] \Longrightarrow \text{variat.derivative \ .}\frac{\partial ^{.}L}{%
\partial u_{i}}\text{ \ ,}  \label{KK11}
\end{equation}%
\begin{equation}
P_{ij}^{(10)}(y,z)\equiv \frac{1}{3}\sum\limits_{i,j=1}^{3}\left[ \frac{1}{%
y_{u_{j}}}\frac{\partial T^{(i)}(\overset{.}{z},\overset{..}{z},\overset{...}%
{z},z_{u_{i}})}{\partial u_{j}}+\frac{1}{y_{v_{j}}}\frac{\partial T^{(j)}(%
\overset{.}{z},\overset{..}{z},\overset{...}{z},z_{v_{i}})}{\partial v_{j}}%
\right] \Longrightarrow \text{variat.derivative \ }\frac{\partial ^{.}L}{%
\partial v_{i}}\text{ \ ,}  \label{KK12}
\end{equation}%
everywhere by $\ T^{(i)}(\overset{.}{z},\overset{..}{z},\overset{...}{z}%
,z_{u_{i}})$ we have denoted the expression

\begin{equation}
T^{(i)}(\overset{.}{z},\overset{..}{z},\overset{...}{z},z_{u_{i}})\equiv 
\frac{1}{3}\frac{\overset{.}{z}}{\overset{..}{z}}\frac{1}{\left(
z_{u_{i}}\right) ^{2}}\overset{.}{z}_{u_{i}}+\frac{1}{3}\frac{\overset{.}{z}%
\overset{...}{z}}{(\overset{..}{z})}\frac{1}{z_{u_{i}}}\text{ \ .}
\label{KK13}
\end{equation}

\section{14.3. Acknowledgements}

The author is grateful to the organizers of the Sixteenth Conference of the
Euro-American Consortium for Promoting the Application of Mathematics in
Technical and Natural Sciences (AMITaNS24) June 21-26, 2024, Albena resourt,
Bulgaria and especially to Prof. Michail Todorov for the opportunity to
participate in this conference.

\end{document}